\numberwithin{equation}{section}
\newcommand{\bigpare}[1]{\bigl(#1\bigr)}
\newcommand{\biggpare}[1]{\biggl(#1\biggr)}
\newcommand{\Bigpare}[1]{\Bigl(#1\Bigr)}
\newcommand{\bigbra}[1]{\bigl\{#1\bigr\}}
\newcommand{\bigbrac}[1]{\bigl[#1\bigr]}
\newcommand{\biggbrac}[1]{\biggl[#1\biggr]}
\newcommand{\normalset}[2]{\{#1\mid#2\}}
\newcommand{\bigset}[2]{\bigl\{#1\bigm|#2\bigr\}}
\newcommand{\bignorm}[1]{\bigl\| #1 \bigr\|}
\newcommand{\biggnorm}[1]{\biggl\| #1 \biggr\|}
\newcommand{\abs}[1]{| #1 |}
\newcommand{\bigabs}[1]{\bigl| #1 \bigr|}
\newcommand{\Bigabs}[1]{\Bigl| #1 \Bigr|}
\newcommand{\biggabs}[1]{\biggl| #1 \biggr|}
\newcommand{\jap}[1]{\langle #1 \rangle}
\newcommand{\bigjap}[1]{\bigl\langle #1 \bigr\rangle}
\def\a{\alpha}
\def\b{\beta}
\def\c{\gamma}
\def\d{\delta}
\def\e{\varepsilon}
\def\f{\varphi}
\def\g{\psi}
\def\i{\mbox{\raisebox{.5ex}{$\chi$}}}
\def\l{\lambda}
\def\m{\mu}
\def\n{\nu}
\def\s{\sigma}
\def\x{\xi}
\def\y{\eta}
\def\z{\zeta}
\newcommand{\F}{\Phi}
\renewcommand{\L}{\Lambda}
\renewcommand{\O}{\Omega}
\newcommand{\Th}{\Theta}
\def\re{\mathbb{R}}
\def\co{\mathbb{C}}
\def\ze{\mathbb{Z}}
\def\pa{\partial}
\newcommand{\trace}{\mathrm{Tr} }
\DeclareMathOperator*{\slim}{s-lim}
\newcommand{\Ran}{\mathrm{Ran\;}}
\newtheorem{thm}{Theorem}[section]
\newtheorem{lem}[thm]{Lemma}
\newtheorem{cor}[thm]{Corollary}
\theoremstyle{definition}
\newtheorem{ass}{Assumption}
\theoremstyle{remark}
\newtheorem{rem}{Remark}[section]
\numberwithin{equation}{section}
\title{Long-range Scattering Matrix for \\
Schr\"odinger-type Operators}
\author{Shu Nakamura}
\address{
Department of Mathematics, Gakushuin University, 
1-5-1, Mejiro, Toshima, Tokyo, Japan 171-8588}
\email{shu.nakamura@gakushuin.ac.jp}
\thanks{The work is partially supported by JSPS Grant Kiban-B 15H03622. 
It is partially inspired by discussions with Dimitri Yafaev during the author's staying at Isaac 
Newton Institute for Mathematical Sciences for the program: Periodic and Ergodic Spectral Problems, supported by 
EPSRC Grant Number EP/K032208/1. He thanks Professor Yafaev for the valuable discussion, and the institute 
and the Simons Foundation for the financial support and its hospitality.} 
\date{\today}
\begin{document}

\maketitle

\begin{abstract}
We show that the scattering matrix for a class of Schr\"odinger-type operators 
with long-range perturbations is a Fourier integral operator 
with the phase function which is the generating function of the modified classical 
scattering map. 
\end{abstract}

\section{Introduction}\label{section-introduction}

In this paper, we consider Schr\"odinger type-operator:
\[
H=H_0+V
\]
on $L^2(\re^d)$, $d\geq 1$.
The unperturbed operator $H_0$ has the form: 
$H_0= p_0(D_x)$ on $\re^d$, where $D_x=-i\pa_x$  
and $p_0$ is a real-valued smooth function. We suppose: 

\begin{ass}\label{ass-free-symbol}
Let $m>0$. 
$p_0\in S^{m}$, i.e., for any multi-index $\a\in\ze_+^d$, there is $C_\a>0$ such that 
\[
\bigabs{\pa_\x^\a p_0(\x)}\leq C_\a\jap{\x}^{m-|\a|}, \quad \x\in\re^d.
\]
Moreover, we suppose $p_0$ is elliptic, i.e., there is $c_0, c_1>0$ such that 
\[
\bigabs{p_0(\x)}\geq c_0\jap{\x}^m-c_1, \quad \x\in\re^d.
\]
\end{ass}

The perturbation term $V$ is a symmetric pseudodifferential operator, and satisfies the 
following assumption. We let  
\[
g=\frac{dx^2}{\jap{x}^2}+d\x^2
\]
be our standard metric on $T^*\re^d$. Then we use the H\"ormander $S(m,g)$-class 
notation with respect this metric, i.e., for a weight function $m(x,\x)$, $a\in S(m,g)$ if 
for any $\a,\b\in\ze_+^d$, there is $C_\a\b>0$ such that 
\[
\bigabs{\pa_x^\a\pa_\x^\b a(x,\x)}\leq C_{\a\b}m(x,\x)\jap{x}^{-|\a|}, \quad x,\x\in\re^d.
\]

\begin{ass}\label{ass-potential}
Let $m$ be as in Assumption~\ref{ass-free-symbol}. 
There is $\m\in (0,1)$ such that $V\in S(\jap{x}^{-\m}\jap{\x}^m, g)$, 
and $V$ is real-valued. 
\end{ass}

We denote the Weyl quantization of $V$ by the same symbol: 
$V=V^W(x,D_x)$, and $V$ is a symmetric operator on $L^2(\re^d)$. 

\begin{rem}
For $\m>1$, the perturbation is short-range, and the scattering theory is much simpler
(see \cite{N2016}).
Here we also exclude the case $\m=1$. This case is especially important, because 
the (mollified) Coulomb potential satisfies this assumption. Though such potential satisfies 
the condition with any $\m<1$, more precise results should hold, and we address this in a 
separate paper \cite{N2018-2}. 
\end{rem}

We denote the symbol of $H$ by $p_1$:
\[
p_1(x,\x)= p_0(\x)+V(x,\x), \quad x,\x\in\re^d.
\]
Since $V$ decays as $|x|\to\infty$, we have ellipticity of $p_1$ if $|x|$ is sufficiently large. 
However, since our perturbations include the metric perturbation, we need to assume global 
ellipticity: 

\begin{ass}\label{ass-ellipticity}
There is $c_2, c_3>0$ such that 
\[
|p_1(x,\x)|\geq c_2\jap{\x}^m, \quad \text{if }|\x|\geq c_3, x,\x\in\re^d.
\]
\end{ass}

Under these assumptions, it is well-known that $H$ is self-adjoint on $H^m(\re^d)$. 
We write the unique self-adjoint extension by $H$ as well as the pseudodifferential operator. 

We now fix an energy interval $I=[E_0,E_1]\Subset \re$, and we consider the scattering 
on $I$. We note, by Assumption~\ref{ass-potential}, 
\[
\O_I^0=\bigset{\x\in\re^d}{p_0(\x)\in I}\subset\re^d
\]
is bounded. We assume the following non-degenerate condition on the interval $I$: 

\begin{ass} \label{ass-energy-interval}
For $x\in \O_I^0$, $\pa_\x p_0(\x)\neq 0$.
\end{ass}

Under these assumptions, we can apply the Mourre theory with the conjugate operator 
$A=\frac12(x\cdot \pa_\x p_0(D_x)+\pa_\x p_0(D_x)\cdot x)$, and we learn the spectrum 
of $H$ on $I$ is absolutely continuous possibly except for finite number of eigenvalues
(see, e.g., \cite{ABM}, \cite{N2016}). 

Following Isozaki-Kitada \cite{Isozaki-Kitada1,Isozaki-Kitada2}, Derezi\'nski-G\'erard \cite{Derezinski-Gerard}
and Robert \cite{Robert}, 
we construct time-independent modifiers $J_\pm$
in our setting in Section~\ref{section-isozaki-kitada} (which depends on $I$). 
Using these we can define modified wave operators: 
\[
W_\pm^I = \slim_{t\to\pm\infty} e^{itH} J_\pm e^{-itH_0} E_I(H_0)
\]
Then the existence of these limits are proved by the same method as in the papers 
by Isozaki-Kitada \cite{Isozaki-Kitada1},  and $W_\pm$ are partial isometries 
on $\Ran[E_{I}(H_0)]$. Moreover, the asymptotic completeness is also proved 
by the standard method:
\[
\Ran [W_\pm^I E_{I}(H_0)] = E_{I}(H)\mathcal{H}_{c}(H),
\]
where $\mathcal{H}_{c}(H)$ is the continuous spectral subspace with respect to $H$. 
The scattering operator $S^I$ is defined by 
\[
S^I=(W_+^I)^* W_-^I,
\]
and it is an isometry on $\Ran[E_{I}(H_0)]$. It is well-known that $S^I$ commutes with 
the free Hamiltonian: $S^IH_0=H_0 S^I$. 

We then introduce the scattering matrix. We employ the formulation in 
Nakamura \cite{N2016}. For $\l\in I$, we set the energy surface $\Sigma_\l$ by  
\[
\Sigma_\l = p_0^{-1}(\{\l\}) = \bigset{\x\in\re^d}{p_0(\x)=\l}.
\]
We note $\Sigma_\l$ is a smooth submanifold for $\l\in I$ since $\pa_\x p_0(\x)\neq 0$ on $p_0^{-1}(I)$. 
We let a measure $m_\l$ on $\Sigma_\l$ defined by $m_\l(\x)= |\pa_x p_0(\x)|^{-1}dS_\x$ so that 
$m_\l\wedge dp_0(\x)=d\x$, where $dS_\x$ is the surface density (measure) on $\Sigma_\l$. 
Let $T(\l)$ be the trace operator from $H_{\mathrm{loc}}^s(\re^d_\x)$ ($s>1/2$) to $L^2(\Sigma_\l)$ defined by
\[
T(\l) \ :\ f\mapsto f\big|_{\Sigma_\l}\in L^2(\Sigma_\l), \quad f\in H_{\mathrm{loc}}^s(\re^d). 
\]
Then 
\[
T(\cdot)\ :\ f \mapsto (T(\l)f)\in \int^\oplus_I L^2(\Sigma_\l,m_\l)d\l
\]
is extended to a surjective partial isometry with the initial space $L^2(p_0^{-1}(I))$. 
In particular, $T(\cdot)\mathcal{F}$ is a spectral representation of $H_0$ on $\Ran [E_I(H_0)]$. 
Then $\mathcal{F} S^I\mathcal{F}^*$ is decomposed on this spectral representation space, 
and we have 
\[
T(\cdot)\mathcal{F} S^I \mathcal{F}^* T(\cdot)^* = \int_I^\oplus S(\l)d\l, 
\]
with $S(\l)\in \mathcal{B}(L^2(\Sigma_\l,m_\l))$. $S(\l)$ is the scattering matrix, and it is 
easy to show $S(\l)$ is unitary for (at least) almost all $\l\in I$. 

\begin{thm}\label{thm-main}
Let $\l\in I\setminus\s_{\mathrm{p}}(H)$. 
Then there are $\g(y,\y)\in S^1_{1,0}$ on $T^*\Sigma_\l$ and $a(y,\y)\in S^0_{1,0}$ 
such that 
\[
S(\l)\f(\y) =(2\pi)^{-(d-1)}\iint e^{-i\g(y,\y)+iy\cdot\z}\tilde\Theta(y,\y) a(y,\y) \f(\z)d\z dy
\]
for $\f\in C_0^\infty(\Sigma_\l)$ in a local coordinate of $\Sigma_\l$, where 
\[
\tilde\Theta(y,\y) = \bigabs{\det (\pa_y\pa_\y\g(y,\y))}^{1/2}. 
\]
Moreover, $\g(y,\y)-y\cdot\y\in S_{1,0}^{1-\m}$ and 
the principal symbol of $a(y,\y)$ is $1$, i.e., $a(y,\y)-1\in S^{-1}_{1,0}$. 
Here we have used the standard Kohn-Nirenberg symbol notation $S^m_{\rho,\delta}$, 
$m\in\re$, $0\leq\d\leq\rho\leq 1$. 
\end{thm}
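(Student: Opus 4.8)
The plan is to reduce the analysis of $S(\l)$ to a composition of Fourier integral operators coming from the Isozaki--Kitada modifiers, and then to identify the phase as the generating function of the modified classical scattering map. First I would recall the stationary representation of the scattering matrix in terms of the modifiers $J_\pm$ constructed in Section~\ref{section-isozaki-kitada}: using the intertwining relations $HW_\pm^I=W_\pm^I H_0$ and the standard stationary phase computation (as in \cite{Isozaki-Kitada1,Isozaki-Kitada2,N2016}), one writes, for $\l\in I\setminus\s_{\mathrm p}(H)$,
\[
S(\l) = T(\l)\mathcal F\bigl(J_+^*J_- - (\text{correction terms})\bigr)\mathcal F^* T(\l)^* + (\text{smoothing}),
\]
where the correction terms involve $(H_0-\l\mp i0)^{-1}$ sandwiched between the symbols $(H-H_0)J_\pm$, which by construction of the Isozaki--Kitada modifiers are $S(\jap{x}^{-\infty},g)$ in the relevant outgoing/incoming regions and hence contribute only lower-order (indeed negligible, up to $S^{-\infty}$) terms after restriction to $\Sigma_\l$. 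The main term $T(\l)\mathcal F J_+^*J_-\mathcal F^* T(\l)^*$ is then analyzed as a Fourier integral operator on $\Sigma_\l$.

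Second, I would compute the canonical relation. Each modifier $J_\pm$ is an FIO with phase $\f_\pm(x,\x)$ solving the eikonal equation $p_1(x,\pa_x\f_\pm) = p_0(\x)$ with $\f_\pm(x,\x)-x\cdot\x\in S(\jap{x}^{1-\m},g)$ in the outgoing ($+$) resp.\ incoming ($-$) region; this is exactly where Assumption~\ref{ass-potential} with $\m\in(0,1)$ enters, giving the $1-\m$ gain. The composition $J_+^*J_-$ is an FIO whose canonical relation is $(\text{graph of }\kappa_+)^{-1}\circ(\text{graph of }\kappa_-)$, where $\kappa_\pm$ are the diffeomorphisms of $T^*\re^d$ intertwining the flow of $p_1$ with that of $p_0$ near time $\pm\infty$; their composition $\kappa = \kappa_+^{-1}\kappa_-$ is precisely the modified classical scattering map, which commutes with $p_0$ and hence restricts to a symplectomorphism of $T^*\Sigma_\l$ (after identifying $\Sigma_\l\times I\cong$ the energy shell via $p_0$ and reducing). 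After restriction via $T(\l)$ and $T(\l)^*$ — which are themselves FIOs associated with the inclusion $\Sigma_\l\hookrightarrow\re^d$ — I would verify that the resulting operator on $L^2(\Sigma_\l)$ has canonical relation the graph of the reduced scattering map $\kappa_\l$, and that this graph is parametrized by a generating function $\g(y,\y)$ in the local coordinates, yielding the stated phase $e^{-i\g(y,\y)+iy\cdot\z}$ and the half-density factor $\tilde\Theta = |\det\pa_y\pa_\y\g|^{1/2}$ from the standard normalization of FIOs. The estimate $\g(y,\y)-y\cdot\y\in S^{1-\m}_{1,0}$ follows by tracking the $S(\jap{x}^{1-\m},g)$ bound on $\f_\pm - x\cdot\x$ through the composition and reduction; note $1-\m\in(0,1)$ so $\g$ itself lies in $S^1_{1,0}$, consistent with a well-defined (non-classical in general) phase.

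Third, I would identify the principal symbol $a$. Tracking amplitudes through the construction: the modifier amplitudes $b_\pm$ are elliptic with principal symbol $1$ (they solve transport equations with the trivial initial amplitude in the region where $J_\pm\approx \mathrm{Id}$ before the interaction), the trace operators $T(\l),T(\l)^*$ contribute the density factors $m_\l$ that are already absorbed into the $L^2(\Sigma_\l,m_\l)$ normalization, and the stationary-phase Hessian contributions combine to give exactly $\tilde\Theta$; what remains is $a(y,\y)$ with $a-1\in S^{-1}_{1,0}$, the lower-order terms arising from the subprincipal parts of the transport equations and the curvature of $\Sigma_\l$. I expect the main obstacle to be the bookkeeping at \emph{two} places simultaneously: (i) justifying that the Isozaki--Kitada decomposition really isolates a genuine FIO (not merely a distributional kernel with a nice wavefront set) uniformly down to the energy shell, including controlling the non-compactly-supported $x$-behavior so that the $S(m,g)$ symbol estimates survive composition — this requires the Beals-type commutator characterization and care with the metric $g=\jap{x}^{-2}dx^2+d\x^2$; and (ii) checking that the phase of the composite FIO, after the symplectic reduction to $T^*\Sigma_\l$, is globally (in the fibers) a generating function of the advertised form $-\g(y,\y)+y\cdot\z$ — i.e.\ that the projection from the canonical relation to $(y,\z)$-space is a diffeomorphism, which uses Assumption~\ref{ass-energy-interval} together with the near-identity property $\kappa_\l\approx\mathrm{id}$ coming from $\m>0$. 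The rest is routine stationary phase and transport-equation analysis of the type carried out in \cite{Robert,Derezinski-Gerard,N2016}.
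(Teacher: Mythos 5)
Your geometric picture (phase $=$ generating function of the modified scattering map, half-density factor from the stationary-phase Hessian, the $1-\m$ gain from Assumption~\ref{ass-potential}) is consistent with the paper, but the analytic backbone of your argument --- the decomposition of $S(\l)$ --- has a genuine gap. The paper starts from the exact representation \eqref{eq-s-matrix-representation}, $S(\l)=-2\pi i\,T(\l)J_+^*G_-T(\l)^*+2\pi i\,T(\l)G_+^*(H-\l-i0)^{-1}G_-T(\l)^*$, in which no $J_+^*J_-$ term appears at all: since the symbols of $J_\pm$ carry the cutoffs $\i_\pm$, one has $J_\pm e^{-itH_0}f\to0$ as $t\to\mp\infty$, so $W_\pm$ can be written as full-line time integrals of $e^{itH}G_\pm e^{-itH_0}$, and the whole scattering matrix is carried by the interaction operators $G_\pm=HJ_\pm-J_\pm H_0$. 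Your proposed main term $T(\l)\mathcal{F}J_+^*J_-\mathcal{F}^*T(\l)^*$, with corrections dismissed because ``$(H-H_0)J_\pm$ is $S(\jap{x}^{-\infty},g)$ in the relevant regions,'' does not work: by Lemma~\ref{lem-ik-G-formula} the symbols $g_\pm$ of $G_\pm$ lie only in $S(\jap{x}^{-1},g)$ and are concentrated on the transition regions $\pm\cos(x,v)\in[\b_1,\b_2]$ where the cutoffs vary, and these terms contribute at \emph{principal} order --- in the paper they carry the entire leading part of $S(\l)$, not an $S^{-\infty}$ error. The smoothing mechanism, moreover, concerns the term with the \emph{full} resolvent $(H-\l-i0)^{-1}$ sandwiched between an incoming-supported $G_+^*$ and an outgoing-supported $G_-$, and it requires the microlocal resolvent estimates of \cite{N2017}; a free resolvent between such symbols is not negligible. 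Finally, reducing $J_+^*J_-$ to $\Sigma_\l$ is itself problematic: its amplitude $\overline{b_+}b_-$ is $O(1)$ on the overlap of the two cutoff supports, so the flow-direction integral that effects the reduction to $T^*\Sigma_\l$ (the $t$-integral in \eqref{eq-TJGT-formula}) grows like $\jap{y}$ instead of producing an $S^0_{1,0}$ amplitude; this never arises in the paper because the amplitude of $J_+^*G_-$ decays like $\jap{x}^{-1}$ along the flow.

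The second, related gap is the statement $a-1\in S^{-1}_{1,0}$. You derive it from ellipticity of $b_\pm$ with principal symbol $1$, but in the paper the constant $1$ arises by a completely different mechanism (Lemma~\ref{lem-Z-formula}): after the stationary-phase computation of $J_+^*G_-$ (Lemmas~\ref{lem-qm-Hess} and \ref{lem-JG-formula}), the principal symbol of its amplitude is $-i(\pa_\x p)(x,\pa_x\g_-)\cdot\pa_x\i_-(x,\y)$, which along the classical trajectory is the exact derivative $-i\tfrac{d}{dt}\i_-(z(t),\y)$; its integral over $t\in\re$ equals $i$ because $\i_-$ telescopes from $1$ to $0$ along the flow, and this combines with the prefactor $-2\pi i$ in \eqref{eq-s-matrix-representation} and the factor $(2\pi)^{-1}$ from integrating out the flow direction to give exactly $1$, the remainder being controlled by the $S(\jap{x}^{-1},g)$ bound and the orthogonality $y\perp v(\y)$ on $T^*\Sigma_\l$. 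Without this telescoping argument (or an equivalent substitute), your scheme does not actually produce the principal symbol $1$; this, together with the incorrect negligibility claim for the $G_\pm$ terms, is where the proof actually lives.
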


\begin{rem}
Even though $S(\l)$ is not a pseudodifferential operator in general, it has pseudo-local 
property since $\pa_y\g(y,\y)-\y =O(\jap{y}^{-\m})$ as $|y|\to\infty$. 
Hence it is sufficient to consider such operators in a local coordinate, 
as well as pseudodifferential operators. This class of Fourier integral operators is 
somewhat different form the standard H\"ormander-type Fourier integral operators, 
where the phase functions are supposed to be homogeneous of order one with respect to 
the conjugate variables ($y$ in our setting). We note the calculus of Asada-Fujiwara
\cite{Asada-Fujiwara} still applies to our class of operators. 
\end{rem}

\begin{rem}
$\g(y,\y)$ is the generating function of classical scattering 
map, and we discuss the construction in detail later in this paper. 
The factor $\tilde\Theta(y,\y)$ corresponds to the modification of the volume form, and 
it makes the operator approximately unitary. 
\end{rem}

\begin{rem}
In principle, we can compute $\g(y,\y)$ explicitly in terms of classical mechanics. 
For many examples, at least if $\m>1/2$, we can compute the asymptotic behavior as $|y|\to\infty$
(see \cite{Yafaev-1998}). 
We can consider $\exp(-i\g(-D_\y,\y))$ as a good approximation of the scattering matrix, 
and hence we expect the spectral properties of $S(\l)$ is decided by the behavior. 
See Nakamura \cite{N2018-2} for the case $\m=1$. 
\end{rem}

The long-range scattering theory for Schr\"odinger operators has a long history, and 
there is substantial literature on this subject, especially for two-body case. 
We refer textbooks Reed-Simon \cite{Reed-Simon} \S 11.9, 
Derezi\'nski-G\'erard \cite{Derezinski-Gerard}, Yafaev \cite{Yafaev-SLNM1735} , \cite{Yafaev-AMS2009} 
and references therein. Long-range scattering for discrete Schr\"odinger operators has also been 
studied by several authors recently (\cite{N2014}, \cite{Tadano}), and in this paper we consider 
relatively large class of operators, of which the method is easily applied to discrete Schr\"odinger operators
as well. The literature on the scattering matrix for long-range scattering has been relatively few. 
The off-diagonal smoothness of the scattering matrix was proved by Isozaki-Kitada \cite{Isozaki-Kitada2}, 
and also studied by Yafaev (see \cite{Yafaev-SLNM1735} and references therein). 
The Fourier integral operator representation of the scattering matrix was studied by Yafaev 
in the case of $\m>1/2$ using the Dollard-type approximate solutions to the eikonal equation 
(\cite{Yafaev-1998}). 
In this paper we employ explicit construction of the solutions to eikonal equation with precise control 
of the classical trajectories and ideas from interaction pictures. 

Our argument relies heavily on the formulation of long-range scattering in terms of time-independent 
modifiers by Isozaki and Kitada (\cite{IK0,Isozaki-Kitada1,Isozaki-Kitada2,IK3}. See also alternative 
constructions by Robert \cite{Robert}, Derezi\'nski-G\'erard \S 4.15, Yafaev \cite{Yafaev-SLNM1735}). 
In this paper, we give relatively detailed analysis of the classical mechanics, partly because the 
system we consider is more general than the Schr\"odinger (or Newton) equation, but also 
because the settings and constructions are somewhat different, and the construction itself is 
important to understand the meaning of the representation. The other source of the proof 
of Theorem~\ref{thm-main} is a recent result by the author on the short-range scattering matrix 
\cite{N2016}, and we modify its argument to apply to the long-range case. 
Another recent result on microlocal resolvent estimates \cite{N2017} is also crucial in 
the proof (for our generalized system). 

The paper is constructed as folows: In Section~\ref{section-cl}, we prepare global-in-time 
estimates for the solutions to Hamilton equations with nontrapping dynamical system, 
and construct solutions to Hamilton-Jacobi equations and eikonal equations, using the 
idea of interaction picture. 
In Section~\ref{section-isozaki-kitada}, we construct the time-independent modifiers following 
the idea of Isozaki and Kitada. In Section~\ref{section-qm}, we give the proof of Theorem~\ref{thm-main}, 
using the idea of \cite{N2016}. We use microlocal analysis extensively in Section~\ref{section-isozaki-kitada} 
and Section~\ref{section-qm}, and we refer H\"ormander \cite{Hormander}, Sogge \cite{Sogge} 
and Asada-Fujiwara \cite{Asada-Fujiwara}.

\section{Preparation on classical mechanics} \label{section-cl}

\subsection{Classical mechanics with space cut-off}\label{subsection-cl-cutoff}

We introduce a constant $R>0$, and we set
\[
V_R(x,\x)= \i_1(|x|/R)V(x,\x), 
\]
where $\i_1\in C^\infty(\re)$ is a smooth cut off function such that $\i_1(s)=0$ if 
$s\leq 1$; $\i_1(s)=1$ if $s\geq 2$. We then set
\[
p(x,\x)= p_0(x,\x)+V_R(x,\x). 
\]
We fix $R$ later in this subsection. 
We now consider the classical mechanics generated by $p(x,\x)$. Namely, we consider 
solutions to the Hamilton equation:
\[
\frac{d}{dt} x(t) =\frac{\pa p}{\pa\x}(x(t),\x(t)), \quad 
\frac{d}{dt}\x(t) =-\frac{\pa p}{\pa x}(x(t),\x(t))
\]
with the initial condition: $x(0)=x_0$, $\x(0)=\x_0$. We denote the solution to the equation by 
\[
\exp t H_p(x_0,\x_0)= (x(x_0,\x_0;t), \x(x_0,\x_0;t))\in\re^{2d}, \quad x_0,\x_0\in\re^d, t\in\re. 
\]

We now recall Assumption~\ref{ass-energy-interval}, and we consider a trajectories with the energy 
$\l$ in a neighborhood of  $I=[E_0,E_1]$. We choose $\e_0>0$ so that there is $c_4>0$ such that 
\[
|\pa_\x p_0(\x)|\geq c_4, \quad \text{for }\x\in \O_{I_6}^0,
\]
where 
\[
I_k=[E_0-k\e_0,E_1+k\e_0], \quad k=0,1,2,\dots 6. 
\]
$\pa_\x p_0(\x)$ is the free velocity, and we denote it as 
\[
v(\x) =\pa_\x p_0(\x). 
\]
We denote the Poisson bracket of $a, b\in C^\infty(\re^{2d})$ by
\[
\{a,b\} =\sum_{j=1}^d \frac{\pa a}{\pa x_j}\frac{\pa b}{\pa\x_j} 
- \frac{\pa a}{\pa\x_j}\frac{\pa b}{\pa x_j}, 
\]
and we write the unit matrix on $\co^d$ by $\mathrm{E}$. We also denote
\[
\O_J =\bigset{(x,\x)}{p(x,\x)\in J}, \quad J\subset\re.
\]

\begin{lem}\label{lem-cl-convexity}
There is $R_0>0$ such that if $R\geq R_0$, then there is $c_5>0$ such that 
\[
\frac{d^2}{dt^2}|x(t)|^2 =\{\{|x|^2,p\},p\}(x,\x)\geq c_5, \quad x,\x\in\re^d,
\]
if $(x_0,\x_0)\in \O_{I_5}$. Moreover, for each $x_0$, $t\in\re$, $\x_0\mapsto \x(x_0,\x_0;t)$ is a 
diffeomorphism, and 
\[
\det\biggbrac{\frac{\pa \x}{\pa \x_0}(x_0,\x_0;t)}\geq 1/2
\]
for any $x_0,\x_0\in\re^d$ such that $(x_0,\x_0)\in\O_{I_5}$, $t\in\re$. 
\end{lem}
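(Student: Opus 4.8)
The plan is to establish the two assertions in turn, deducing the Jacobian bound from the convexity estimate. For the convexity I would expand the double bracket directly,
\[
\{\{|x|^2,p\},p\}=2\bigabs{\pa_\x p}^2+2x\cdot\bigpare{(\pa_x\pa_\x p)\pa_\x p-(\pa_\x^2 p)\pa_x p},
\]
and substitute $p=p_0(\x)+V_R$. The leading term is $2\abs{v(\x)}^2$, while every remaining term carries a factor $\pa_x V_R$ or $\pa_\x V_R$; since $V_R$ is supported in $\{|x|\ge R\}$ and $V\in S(\jap{x}^{-\m}\jap{\x}^m,g)$, on that support $\jap{x}^{-\m}\le R^{-\m}$ and $|x|\jap{x}^{-\m-1}\le R^{-\m}$ (and the terms from differentiating the cut-off are similar), so each such term is $O(R^{-\m})$ on $\O_{I_5}$. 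Moreover, ellipticity of $p_0$ together with $|p-p_0|=|V_R|\le CR^{-\m}\jap{\x}^m$ confines $\x$ to a fixed compact set on $\O_{I_5}$ on which $p_0(\x)\in I_6$ and hence $\abs{v(\x)}\ge c_4$. Thus $\{\{|x|^2,p\},p\}\ge 2c_4^2-CR^{-\m}$ on $\O_{I_5}$, and choosing $R_0$ with $CR_0^{-\m}\le c_4^2$ gives the bound with $c_5=c_4^2$. Since $p$ is conserved along $\exp tH_p$, a trajectory issuing from $\O_{I_5}$ remains there, so $\frac{d^2}{dt^2}\abs{x(t)}^2\ge c_5$; as $\frac{d}{dt}\abs{x(t)}^2$ is then strictly increasing it vanishes at a unique $t_\ast$, and $\abs{x(t)}^2\ge\frac{c_5}{2}(t-t_\ast)^2$, so the flow is nontrapped and $\jap{x(t)}\gtrsim\jap{t-t_\ast}$. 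Combined with the symbol bounds, this makes $\pa_x^\a\pa_\x^\b V_R$ with $|\a|\ge 1$ integrable in $t$ along such trajectories, with integrals $O(R^{-\m})$.

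For the Jacobian I would differentiate the Hamilton equations in $\x_0$. The matrix $M(t)=\pa(\exp tH_p)/\pa(x_0,\x_0)$ solves the linear Jacobi equation along the trajectory with $M(0)=\mathrm{E}$; writing its blocks as $A,B,C,D$ with $D=\pa\x/\pa\x_0$ and $B=\pa x/\pa\x_0$, one has
\[
\dot D=-(\pa_x^2 V_R)B-(\pa_x\pa_\x V_R)D,\qquad \dot B=(\pa_\x\pa_x V_R)B+\bigpare{p_0''(\x)+\pa_\x^2 V_R}D.
\]
The splitting $p=p_0+V_R$ is the point: the coefficients of the $D$-equation carry $x$-derivatives of $V_R$, hence by the first part are integrable along the trajectory, and $\norm{D(t)-\mathrm{E}}\le\int_0^t\norm{\pa_x^2 V_R}\norm{B}\,ds+\int_0^t\norm{\pa_x\pa_\x V_R}\norm{D}\,ds$, the second integral being $O(R^{-\m})\sup\norm{D}$. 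For the first I would control $\norm{B(s)}$ from its own equation — to leading order $\dot B\approx p_0''(\x(s))D(s)$, yielding at worst linear growth $\norm{B(s)}\lesssim\jap{s}$ — and then invoke the convexity estimate $\jap{x(s)}\gtrsim\jap{s-t_\ast}$, which together with the extra decay $\jap{x}^{-\m-2}$ of $\pa_x^2 V_R$ (gained from a second $x$-derivative) makes $\int\norm{\pa_x^2 V_R(s)}\jap{s}\,ds$ converge. A bootstrap over finite time intervals (hypothesis $\norm{D-\mathrm{E}}$ bounded, improved by the above) then controls these quantities, and for $R$ large $\norm{D(t)-\mathrm{E}}$ is small enough that $\det D(t)\ge 1/2$.

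Finally, $\x_0\mapsto\x(x_0,\x_0;t)=\x_0-\int_0^t\pa_x V_R(x(s),\x(s))\,ds$ is, for $R$ large, a small perturbation of the identity (its $\x_0$-derivative is $D(t)-\mathrm{E}$), hence injective, and it is proper because the conserved quantity $p(x_0,\x_0)$ — and therefore $\abs{\x(t)}$, via global ellipticity — tends to infinity with $\abs{\x_0}$; so it is a diffeomorphism of $\re^d$. The step I expect to be the main obstacle is the Jacobian bound: $B(t)=\pa x(t)/\pa\x_0$ genuinely grows linearly in time (already $B(t)=t\,p_0''(\x_0)$ for the free flow), so $\det D(t)\ge 1/2$ hinges entirely on pairing this growth against the $\jap{x}^{-\m-2}$ decay of $\pa_x^2 V_R$ — possible only because the first part forces the quantitative escape $\jap{x(s)}\gtrsim\jap{s-t_\ast}$ — and the delicate bookkeeping is to keep all constants uniform over $\O_{I_5}$ and in $t$, which is where the support condition $\{|x|\ge R\}$ on $V_R$ is essential.
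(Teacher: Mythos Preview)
Your proof is correct and follows essentially the same route as the paper. The convexity argument is identical, and for the Jacobian bound the paper merely writes ``Similarly we can show $\bignorm{\pa\xi/\pa\xi_0-\mathrm{E}}\leq CR^{-\mu}$''; your explicit treatment via the variational equations---pairing the linear growth of $B=\pa x/\pa\xi_0$ against the $\jap{x}^{-\mu-2}$ decay of $\pa_x^2 V_R$, made integrable by the escape estimate $\jap{x(s)}\gtrsim\jap{s-t_\ast}$---is exactly the content that ``similarly'' is hiding.
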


\begin{proof}
This is a variation of the so-called {\em classical Mourre estimate}, and we only sketch the idea. 
By Assumption~D, we learn that if $R_0$ is sufficiently large, 
\[
|V(x,\x)|\leq \frac12|p_0(\x)|+c_1\quad \text{for }|x|\geq R_0, \x\in\re^d. 
\]
Then we have, provided $R\geq R_0$, 
\[
|p_0(\x)|\leq 2 |p(x,\x)|+2 c_1= 2|p(x_0,\x_0)|+2c_1, 
\quad \text{for }(x_0,\x_0)\in\O_{I_5},
\]
and hence $|\x|\leq M$ with some $M>0$, uniformly on $\O_{I_6}$. 
We choose $R$ so large that 
\[
|V_R(x,\x)|\leq \e_0 \quad\text{for }(x_0,\x_0)\in\O_{I_5}
\]
holds. Then, if $(x,\x)\in \O_{I_5}$ then $p_0(\x)\in I_6$ and hence $|v(\x)|\geq c_4$. 

Now, by the direct computations, we learn 
\begin{align*}
\{\{|x|^2,p\},p\} &= 2|v(\x)|^2 +O(\jap{\x}^{2m-2}\jap{x}^{-\m}\i_{\{|x|\geq R\}})\\
& =2|v(\x)|^2 +O(R^{-\m}), \quad\text{on }\O_{I_5}, 
\end{align*}
and hence 
\[
\{\{|x|^2,p\},p\}\geq c_4^2>0, \quad \text{for }(x_0,\x_0)\in \O_{I_5}, 
\]
provided $R$ is chosen sufficiently large. 
This also implies, in particular, for any solution to the Hamilton equation, there 
is $t_0\in\re$ such that 
\[
|x(t)| \geq (|x(t_0)|^2+c_4(t-t_0)^2/2)^{1/2}\geq (c_4/2)^{1/2} |t-t_0|, \quad t\in\re.
\]
Hence we have 
\[
\biggabs{\frac{d}{dt} \x(t)}\leq \biggabs{\frac{\pa V_R}{\pa x}(x(t),\x(t))} 
\leq C(\jap{t-t_0}+ R)^{-\m-1},
\]
and this implies 
\[
\abs{\x(t)-\x(s)}\leq C\int_s^t (\jap{u-t_0}+R)^{-\m-1}du \leq C'R^{-\m}, 
\quad -\infty<s<t<\infty. 
\]
Similarly we can show
\begin{equation}\label{eq-potential-small}
\biggnorm{\frac{\pa \x(t)}{\pa\x_0}-E}_{\co^d\to\co^d}\leq CR^{-\m},
\end{equation}
and we conclude the last assertion by choosing $R$ large enough. 
\end{proof}

In the following, we suppose $R$ is large enough that the argument of the above proof is valid.


\subsection{Solution to the Hamilton-Jacobi equation}
\label{subsection-HJ}

We consider the Hamilton-Jacobi equation in $\x$-space: 
\[
\frac{\pa}{\pa t} \phi(t,\x)=p\biggpare{\frac{\pa\phi}{\pa\x}(t,\x),\x}, 
\quad \x\in\O_{I_4}^0, t\in\re, 
\]
with the initial condition 
\[
\phi(0,\x)=0, \quad \x\in\re^d.
\]

We write 
\[
\Lambda_t : \y \mapsto \x(0,\y;t), 
\]
Then, by Lemma~\ref{lem-cl-convexity},  $\Lambda_t$ is locally diffeomorphic, and diffeomorphism 
from $\O_{I_5}^0$ into $\O_{I_6}^0$, and the range contains $\O_{I_4}^0$ 
(note \eqref{eq-potential-small}). 
By the standard theory of Hamilton-Jacobi equation (see, e.g., Evans \cite{Evans} Chapter~3, 
Arnold \cite{Arnold} \S47), the solution is constructed as follows: We set
\[
u(t,\y) =\int_0^t \bigbra{p(x(0,\y;s),\x(0,\y;s))
-x(0,\y;s)\cdot \pa_x V_R(x(0,\y;s),\x(0,\y;s))} ds.
\]
If we set 
\[
\phi(t,\x)= u(t,\Lambda_t^{-1}(\x)), \quad \x\in \O_{I_4}^0, 
\]
then $\phi$ is the solution to the Hamilton-Jacobi equation. 
We will show $\phi(t,\x)$ satisfies suitable symbol properties in the following.
For simplicity, in this subsection we write 
\[
x(t)=x(t,\y)=x(0,\y;t), \quad \x(t)=\x(t,\y)=\x(0,\y;t).
\]
At first we recall that there are $c>0$ such that 
\[
c^{-1} |t| \leq |x(t,\y)|\leq c |t|,  \quad t\in \re
\]
for any $\y\in \O_{I_5}^0$ by Lemma~\ref{lem-cl-convexity}. 
Here we may suppose $V_R=0$ in a neighborhood of 0.

\begin{lem}\label{lem-cl-tr1}
For any $\a\in\ze_+^d$, there is $C_\a>0$ such that 
\[
\bigabs{\pa_\y^\a x(t,\y)}\leq C_\a\jap{t}, \quad
\bigabs{\pa_\y^\a \x(t,\y)}\leq C_\a, 
\]
uniformly in $t\in\re$.
\end{lem}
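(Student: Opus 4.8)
The plan is to prove the two estimates by differentiating the Hamilton equations in the initial momentum $\y$ and running a Gronwall-type argument, exploiting the quadratic lower bound on $|x(t,\y)|$ from Lemma~\ref{lem-cl-convexity} to get integrable decay of the relevant forcing terms. Write $X_\a(t)=\pa_\y^\a x(t,\y)$ and $\Xi_\a(t)=\pa_\y^\a\x(t,\y)$. For $|\a|=1$, differentiating the Hamilton equations gives a linear system
\[
\frac{d}{dt}\begin{pmatrix}X_\a\\ \Xi_\a\end{pmatrix}
=\begin{pmatrix} p''_{\x x} & p''_{\x\x}\\ -p''_{xx} & -p''_{x\x}\end{pmatrix}(x(t),\x(t))\begin{pmatrix}X_\a\\ \Xi_\a\end{pmatrix},
\]
with initial data $X_\a(0)=0$, $\Xi_\a(0)=\mathrm{E}$ (in the appropriate component). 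The key point is that along the trajectory, because $(x_0,\y)\in\O_{I_5}^0$ stays at bounded $|\x|$ (uniformly in $t$, as established in the proof of Lemma~\ref{lem-cl-convexity}) and $V_R\in S(\jap{x}^{-\m}\jap{\x}^m,g)$, the mixed and spatial Hessian blocks $p''_{\x x}$, $p''_{xx}$, $p''_{x\x}$ are all $O(\jap{x(t)}^{-1-\m})=O(\jap{t}^{-1-\m})$, which is integrable in $t$, while $p''_{\x\x}=O(1)$ since $p_0\in S^m$ with $|\x|$ bounded. So the equation for $\Xi_\a$ has an integrable coefficient multiplying $X_\a$ plus an integrable coefficient times $\Xi_\a$; provided we already control $X_\a(t)=O(\jap{t})$, Gronwall gives $\Xi_\a$ bounded. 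And then $\dot X_\a = O(1)\cdot X_\a + O(\jap t^{-1-\mu})\cdot \Xi_\a$; wait — $p''_{\x x}$ being integrable is exactly what makes $\dot X_\a$ the sum of an integrable-coefficient term times $X_\a$ and a bounded term times $\Xi_\a$, so $X_\a$ grows at most linearly once $\Xi_\a$ is bounded. The two estimates are therefore coupled but closeable simultaneously by a single Gronwall argument on $|X_\a(t)|+\jap{t}|\Xi_\a(t)-\text{(free part)}|$ or, more simply, by first bounding $|X_\a(t)|/\jap t + |\Xi_\a(t)|$.

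For higher $|\a|$ I would proceed by induction on $|\a|$. Differentiating the Hamilton equations $|\a|$ times produces, by Faà di Bruno, the same linear system for $(X_\a,\Xi_\a)$ as above (with the same Hessian coefficient matrix evaluated along the trajectory), plus an inhomogeneous forcing term built from derivatives of $p$ of order $\geq 3$ composed with $(x(t),\x(t))$, paired against products $\prod \pa_\y^{\b_j}(x(t),\x(t))$ with $|\b_j|<|\a|$ and $\sum|\b_j|=|\a|$ (roughly). Each third-or-higher $x$-derivative of $V_R$ carries a factor $\jap{x(t)}^{-1-\m}=O(\jap t^{-1-\mu})$, each $\x$-derivative costs nothing (bounded $|\x|$), and the already-estimated lower-order factors contribute at most $\prod\jap t^{(\text{number of }x\text{-factors})}$. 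One has to check the bookkeeping: the total power of $\jap t$ from the lower-order $x$-factors is at most (number of $x(t)$-type factors), and this is strictly less than the $1+\mu$ decay supplied by the single high-order $x$-derivative of $V_R$ hitting those factors — in the worst case the forcing for $\Xi_\a$ is $O(\jap t^{-1-\mu+k})\cdot(\text{stuff})$ where... this is where the argument needs the genuine decay $\mu>0$ and the fact that $x(t)$-growth is exactly linear, not the naive term-counting. Once the forcing is shown to be $O(\jap t^{-1-\mu})$ (for $\Xi_\a$) and $O(1)$ (for $X_\a$), Gronwall as in the base case closes the induction, giving $|X_\a(t)|\leq C_\a\jap t$ and $|\Xi_\a(t)|\leq C_\a$.

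The uniformity in $t\in\re$ (both forward and backward) is automatic because every estimate used — the bound $c^{-1}|t|\leq|x(t,\y)|\leq c|t|$, the bound on $|\x|$, the $O(R^{-\m})$ control of $\pa\x/\pa\x_0$ — is already stated for all $t\in\re$ in Lemma~\ref{lem-cl-convexity} and its proof; the lower bound $|x(t)|\geq(c_4/2)^{1/2}|t-t_0|$ from that proof gives the integrable weight $\jap{t}^{-1-\mu}$ (one may need to absorb the shift $t_0$, which is bounded uniformly on $\O_{I_5}^0$ by compactness, into the constant). I expect the main obstacle to be the combinatorial estimate of the inhomogeneous term at higher order: keeping track of how many trajectory factors carry a $\jap t$ and verifying that the single high-order spatial derivative of $V_R$ always supplies enough decay to beat their combined growth. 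This is where one must use that $V_R$ is long-range with $\m>0$ (so each extra $x$-derivative genuinely gains decay) rather than merely bounded, and it is essentially a careful Faà di Bruno / Gronwall accounting argument rather than a deep new idea.
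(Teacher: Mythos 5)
Your overall strategy coincides with the paper's: differentiate the Hamilton equations in $\y$, use that along the trajectory $|x(t)|\geq c|t|$ and $|\x(t)|$ stays bounded, so that $\pa_\x\pa_x p=O(\jap{t}^{-1-\m})$, $\pa_x\pa_x p=O(\jap{t}^{-2-\m})$, $\pa_\x\pa_\x p=O(1)$, then Gronwall plus induction on $|\a|$ via the Leibniz/Fa\`a di Bruno expansion. The genuine gap is in how you propose to close the Gronwall step. With the weight $\jap{t}^{-1}$, the quantity $u=\jap{t}^{-1}|\pa_\y x|+|\pa_\y\x|$ only satisfies $\pa_t u\leq C\jap{t}^{-1}u+\dots$, because the bounded block $\pa_\x\pa_\x p$ feeds $|\pa_\y\x|$ into $\jap{t}^{-1}|\pa_t(\pa_\y x)|$ with the non-integrable coefficient $\jap{t}^{-1}$; Gronwall then gives only a polynomial bound $\jap{t}^{C}$, not the claimed estimates. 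The same defect appears for $|\pa_\y x|+\jap{t}|\pa_\y\x-\mathrm{E}|$, since the term $(\pa_\x\pa_\x p)\pa_\y\x$ in the $x$-equation produces an $O(1)$ coefficient in front of $|\pa_\y\x-\mathrm{E}|$, which is again only $\jap{t}^{-1}$ times your functional. And your alternative phrasing ("provided we already control $X_\a=O(\jap{t})$, Gronwall gives $\Xi_\a$ bounded; once $\Xi_\a$ is bounded, $X_\a$ grows linearly") is circular: each bound presupposes the other. The paper resolves precisely this point with an intermediate weight: bound $\jap{t}^{-1-\m/2}|\pa_\y x|+|\pa_\y\x|$, whose coupling coefficient $\jap{t}^{-1-\m/2}$ \emph{is} integrable, obtaining first the suboptimal $\pa_\y x=O(\jap{t}^{1+\m/2})$, $\pa_\y\x=O(1)$, and then substitute back into $\pa_t(\pa_\y x)=(\pa_\x\pa_x p)\pa_\y x+(\pa_\x\pa_\x p)\pa_\y\x$ to upgrade to $\pa_\y x=O(\jap{t})$. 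This two-step weight-then-bootstrap device is needed both at $|\a|=1$ and again at every step of the induction.

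The higher-order bookkeeping you flag as the "main obstacle" is in fact clean, and the decay only needs to \emph{match} the growth, not beat it: since $V_R\in S(\jap{x}^{-\m}\jap{\x}^m,g)$, each $x$-derivative of $p$ costs a full factor $\jap{x(t)}^{-1}\sim\jap{t}^{-1}$, so in a term $(\pa_\x\pa_x^\b\pa_\x^\c p)\prod(\pa_\y^{\cdot}x_i)\prod(\pa_\y^{\cdot}\x_i)$ the growth $\jap{t}^{|\b|}$ from the $|\b|$ lower-order $x$-factors is exactly cancelled by $\pa_\x\pa_x^\b\pa_\x^\c p=O(\jap{t}^{-|\b|})$; hence the forcing in the $x$-equation is $O(1)$, which suffices because the target bound for $\pa_\y^\a x$ is itself $O(\jap{t})$, while the forcing in the $\x$-equation carries the extra outer $\pa_x$ hitting $V_R$ and is therefore $O(\jap{t}^{-1-\m})$, which is integrable. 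With that accounting and the corrected Gronwall scheme above, your induction closes and reproduces the paper's argument.
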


\begin{proof}
For $\a=0$, the claim is obvious. By differentiating the Hamilton equation, we have 
\begin{align*}
& \pa_t (\pa_\y x)= (\pa_\x\pa_x p)\pa_\y x + (\pa_\x\pa_\x p)\pa_\y\x, \\
& \pa_t (\pa_\y \x)= -(\pa_x\pa_x p)\pa_\y x - (\pa_x\pa_\x p)\pa_\y\x.
\end{align*}
We note 
\begin{align*}
&\pa_\x\pa_\x p(x(t),\x(t)) = \pa_\x\pa_\x p_0(x(t),\x(t)) +\pa_\x\pa_\x V_R(x(t),\x(t))= O(1), \\
&\pa_\x\pa_x p(x(t),\x(t)) = \pa_\x\pa_x V(x(t),\x(t))=O(\jap{t}^{-1-\m}), \\
&\pa_x\pa_x p(x(t),\x(t))= \pa_x\pa_x V(x(t),\x(t))= O(\jap{t}^{-2-\m}). 
\end{align*}
Using these, we learn 
\begin{align*}
&|\pa_t (\pa_\y x)|\leq C\jap{t}^{-1-\m}|\pa_\y x|+ C |\pa_\y\x|, \\
&|\pa_t (\pa_\y \x)| \leq C \jap{t}^{-2-\m}|\pa_\y x| +C\jap{t}^{-1-\m}|\pa_\y\x|, 
\end{align*}
and these imply
\begin{align*}
&\pa_t (\jap{t}^{-1-\m/2}|\pa_\y x|)\leq C\jap{t}^{-1-\m}(\jap{t}^{-1-\m/2}|\pa_\y x|)
+ C \jap{t}^{-1-\m/2}|\pa_\y\x|, \\
&\pa_t |\pa_\y \x| \leq C \jap{t}^{-1-\m/2}(\jap{t}^{-1-\m/2}|\pa_\y x|) 
+C\jap{t}^{-1-\m}|\pa_\y\x|. 
\end{align*}
Here $\pa_t$ should be considered in distribution sense, and we have used the fact: 
$\pa_t \jap{t}^{-1-\m/2}\leq 0$. Combining them, we have 
\[
\pa_t(\jap{t}^{-1-\m/2}|\pa_\y x|+|\pa_\y\x|) 
\leq 2C\jap{t}^{-1-\m/2} (\jap{t}^{-1-\m/2}|\pa_\y x|+|\pa_\y\x|), 
\]
with $(\jap{t}^{-1-\m/2}|\pa_\y x|+|\pa_\y\x|)|_{t=0}= 1$. 
Then by the Gronwall's inequality, we learn 
\[
\jap{t}^{-1-\m/2}|\pa_\y x|+|\pa_\y\x| \leq C'<\infty 
\]
for all $t\in\re$, since $\jap{t}^{-1-\m/2}$ is integrable in $t$. This implies 
$\pa_\y x=O(\jap{t}^{1+\m/2})$ and $\pa_\y\x=O(1)$. Substituting these to 
the above equations again to learn
$\pa_t (\pa_\y x)=O(1)$ and hence $\pa_\y x=O(\jap{t})$. 

For higher derivatives, we use induction in $|\a|$. Suppose the claim holds for 
$|\a|\leq N-1$, and suppose $|\a|=N\geq 2$ We note 
\begin{align*}
\pa_t(\pa_\y^\a x) &= \pa_\y^\a (\pa_\x p(x(t,\y),\x(t,\y)) )\\
&= (\pa_\x\pa_x  p) \pa_\y^\a x + (\pa_\x\pa_\x p)\pa_\y^\a\x \\
&\quad +\sum_* c_*(\pa_\x \pa_x^\b\pa_\x^\c p) \prod_{i=1}^{d} 
\biggpare{\prod_{j=1}^{\b_i}(\pa_\y^{\tilde\b(i,j)} x_i)\prod_{k=1}^{\c_i} (\pa_\y^{\tilde\c(i,k)}\x_i)}, 
\end{align*}
where the last sum is taken over $\b,\c, \tilde\b(i,j), \tilde\c(i,k)\in\ze_+^d$ such that 
$|\b+\c|\geq 2$; $\tilde\b(i,j)\neq 0$ for $i=1,\dots,d$, $j=1,\dots,\b_i$;  $\tilde\c(i,k)\neq 0$ 
for $i=1,\dots,d$, $k=1,\dots,\c_i$; and  
\[
\sum_{i=1}^d\biggpare{\sum_{j=1}^{\b_i}\tilde \b(i,j) +\sum_{k=1}^{\c_i}\tilde \c(i,k)} =\a.
\]
Here $c_*$ denote some universal constants depending only on the indices. 
By the fact $\pa_\x \pa_x^\b\pa_\x^\c p=O(\jap{t}^{-|\b|})$ and the induction hypothesis, 
we learn that the last term is $O(1)$. Then by the same argument as above, we have
\[
|\pa_t (\pa_\y^\a x)|\leq C\jap{t}^{-1-\m}|\pa_\y^\a x|+ C |\pa_\y^\a\x| +C, 
\]
and similarly 
\[
|\pa_t (\pa_\y^\a \x)| \leq C \jap{t}^{-2-\m}|\pa_\y^\a x| +C\jap{t}^{-1-\m}|\pa_\y^\a\x|
+C\jap{t}^{-1-\m}. 
\]
Combining them, we learn 
\[
\pa_t(\jap{t}^{-1-\m/2}|\pa_\y^\a x|+|\pa_\y^\a\x|+1) 
\leq 2C\jap{t}^{-1-\m/2} (\jap{t}^{-1-\m/2}|\pa_\y^\a x|+|\pa_\y^\a\x|+1).
\]
Again by Gronwall's inequality, we have $\jap{t}^{-1-\m/2}|\pa_\y^\a x|+|\pa_\y^\a\x|+1\leq C$, 
and hence $\pa_\y^\a x=O(\jap{t}^{1+\m/2})$, $\pa_\y^\a\x=O(1)$. We substitute 
these to the above inequality again to conclude $\pa_\y^\a x=O(\jap{t})$. 
\end{proof}

\begin{lem}\label{lem-cl-HJ1}
For any $\a\in \ze_+^d$, there is $C_\a>0$ such that 
\[
\bigabs{\pa_\x^\a \phi(t,\x)}\leq C_\a \jap{t}, \quad t\in\re, 
\]
uniformly in $\x\in \O_{I_4}^0$. 
\end{lem}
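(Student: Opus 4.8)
The plan is to bound the $\x$-derivatives of $\phi(t,\x)$ by reducing everything to the estimates on classical trajectories already obtained in Lemma~\ref{lem-cl-tr1}, together with uniform bounds on the inverse map $\Lambda_t^{-1}$. Recall $\phi(t,\x)=u(t,\Lambda_t^{-1}(\x))$, where $u(t,\y)$ is the explicit time-integral of $p(x(s),\x(s))-x(s)\cdot\pa_x V_R(x(s),\x(s))$ along the trajectory starting from $(0,\y)$. So I would proceed in two stages: first estimate $\pa_\y^\a u(t,\y)$, then transfer this through the change of variables $\y=\Lambda_t^{-1}(\x)$.

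For the first stage: the integrand of $u$ is, along the trajectory, $p(x(s),\x(s))-x(s)\cdot\pa_x V_R(x(s),\x(s))$. Since $|\x(s)|$ stays bounded on $\O^0_{I_6}$, $p(x(s),\x(s))=p(x(0),\y)=\l$-type quantity is bounded (it is actually constant along the flow, equal to $p(0,\y)$ up to the cut-off), while $x(s)\cdot\pa_x V_R(x(s),\x(s))=O(\jap{s}\cdot\jap{s}^{-1-\m})=O(\jap{s}^{-\m})$, using $|x(s)|\le c\jap{s}$ and $\pa_x V_R=O(\jap{x}^{-1-\m})=O(\jap{s}^{-1-\m})$ on the relevant energy shell. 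Hence the integrand is $O(1)$ uniformly in $s$, giving $|u(t,\y)|\le C\jap{t}$. For $\y$-derivatives I differentiate under the integral sign: each $\pa_\y$ falling on $x(s)$ or $\x(s)$ produces factors bounded respectively by $C\jap{s}$ and $C$ by Lemma~\ref{lem-cl-tr1}, and each derivative falling on $p$, $V_R$, or $\pa_x V_R$ evaluated at $(x(s),\x(s))$ is controlled using the symbol estimates for $p_0$ and $V$ (decay $\jap{x}^{-|\c|}$ in $x$, i.e.\ $\jap{s}^{-|\c|}$ along the trajectory, after accounting for the $\jap{\x}$ weights which are bounded). A Faà di Bruno expansion as in the proof of Lemma~\ref{lem-cl-tr1} then shows every term in $\pa_\y^\a u$ is $O(\jap{t})$: for instance in $\pa_\y^\a\bigpare{x(s)\cdot\pa_x V_R(x(s),\x(s))}$, the worst term distributes the derivatives so as to leave one undifferentiated $x(s)$-factor (size $\jap{s}$) against a $\pa_x V_R$ (size $\jap{s}^{-1-\m}$), all other trajectory-derivative factors being $O(1)$ or absorbed, and the $s$-integral of $\jap{s}^{-\m}$-type quantities still grows at most like $\jap{t}$ (actually the genuinely dangerous piece is the $p(x(s),\x(s))$ term, whose $\y$-derivatives are $O(1)$, integrating to $O(\jap{t})$). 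So $\pa_\y^\a u(t,\y)=O(\jap{t})$ uniformly for $\y\in\O^0_{I_5}$.

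For the second stage I need that $\Lambda_t^{-1}:\O^0_{I_4}\to\O^0_{I_5}$ has all $\x$-derivatives bounded uniformly in $t$. This follows from Lemma~\ref{lem-cl-convexity} and \eqref{eq-potential-small}: $\Lambda_t(\y)=\x(0,\y;t)$ satisfies $\pa_\y\Lambda_t=\pa\x/\pa\x_0=E+O(R^{-\m})$, so $\pa_\y\Lambda_t$ is uniformly invertible with $(\pa_\y\Lambda_t)^{-1}=O(1)$, and by Lemma~\ref{lem-cl-tr1} the higher $\y$-derivatives of $\Lambda_t$ are $O(1)$ as well (since $\pa_\y^\a\x=O(1)$). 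The standard inversion formula for higher derivatives of an inverse map then gives $\pa_\x^\a(\Lambda_t^{-1})=O(1)$ uniformly in $t$, for all $\a$. Composing with the chain rule, $\pa_\x^\a\phi(t,\x)=\pa_\x^\a\bigpare{u(t,\Lambda_t^{-1}(\x))}$ is a sum of terms, each a product of one $\pa_\y^\b u$ (with $|\b|\le|\a|$, hence $O(\jap{t})$) times derivatives of the components of $\Lambda_t^{-1}$ (each $O(1)$), so the whole thing is $O(\jap{t})$.

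The main obstacle is bookkeeping rather than conceptual: one must check that the linear-in-$\jap{t}$ growth really does not accumulate under differentiation. The key structural point making this work is that $\phi$ is essentially the classical action, and the only term in the integrand of $u$ with non-decaying $s$-behavior is $p$ evaluated along the flow, whose $\y$-derivatives are $O(1)$ (no hidden $x(s)$-growth), so its $s$-integral contributes exactly the $\jap{t}$; the $x\cdot\pa_x V_R$ correction term, even with the $\jap{s}$ from $x(s)$, is integrable-to-$O(\jap{t})$ because of the $\m$-decay. Once that is seen, the change of variables is harmless because all its derivatives are $t$-uniformly bounded. I would also remark that the induction on $|\a|$ can be run exactly parallel to Lemma~\ref{lem-cl-tr1}, reusing the Faà di Bruno notation introduced there.
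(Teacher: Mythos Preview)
Your proposal is correct and follows essentially the same approach as the paper's proof: estimate $\pa_\y^\a u(t,\y)=O(\jap{t})$ using Lemma~\ref{lem-cl-tr1}, establish $\pa_\x^\a\Lambda_t^{-1}=O(1)$ uniformly in $t$ from Lemma~\ref{lem-cl-convexity} and Lemma~\ref{lem-cl-tr1}, and then compose via the chain rule. The paper's argument is stated tersely (``by direct computation using the definition of $u$ and the previous lemma''); you have simply expanded the bookkeeping, and your observation that $p(x(s),\x(s))$ is constant along the flow is a clean way to see that the only genuinely $\jap{t}$-growing contribution comes from the $p$-term.
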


\begin{proof}
By direct computation using the definition of $u(t,\y)$ and the previous lemma, we learn 
\[
\pa_\y^\a u(t,\y)=O(\jap{t}), \quad t\in\re,
\]
for any $\a\in\ze_+^d$, uniformly in $\y\in \O_{I_5}^0$. 
We note, by Lemma~\ref{lem-cl-convexity} and the previous lemma, we learn 
\[
\pa_\x^\a \Lambda_t^{-1}(\x)=O(1), \quad t\in\re,
\]
also uniformly on the range of $\Lambda_t(\O_{I_5}^0)$. Combining these, we learn 
\[
\pa_\x^\a \phi(t,\x) =\pa_\x^\a (u\circ \Lambda_t^{-1})(\x)= O(\jap{t}), 
\quad t\in\re, 
\]
uniformly in $\x\in\O_{I_4}^0\subset \Lambda_t(\O_{I_5}^0)$. 
\end{proof}

\begin{lem}\label{lem-cl-HJ2}
For any $\a\in\ze_+^d$, there is $C_\a>0$ such that 
\[
\bigabs{\pa_\x^\a (\phi(t,\x)-tp_0(\x))}\leq C_\a \jap{t}^{1-\m}, 
\quad t\in\re, \x\in \O_{I_4}^0. 
\]
\end{lem}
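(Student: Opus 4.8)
The plan is to use conservation of energy along the flow to express $\phi(t,\x)-tp_0(\x)$ entirely in terms of $V_R$ and its derivatives, and then to feed in the sharp linear-in-$t$ bounds of Lemma~\ref{lem-cl-tr1}. Since $p$ is constant along trajectories and $V_R$ vanishes near the origin, $p(x(s,\y),\x(s,\y))=p(0,\y)=p_0(\y)$ for all $s$, so that
\[
u(t,\y)=tp_0(\y)-\int_0^t x(s,\y)\cdot\pa_x V_R(x(s,\y),\x(s,\y))\,ds=:tp_0(\y)+\tilde u(t,\y).
\]
Writing $\y=\L_t^{-1}(\x)$ and applying energy conservation once more at $s=t$, where $\x(0,\y;t)=\L_t(\y)=\x$, gives $p_0(\x)+V_R(x(0,\y;t),\x)=p_0(\y)$, and hence
\[
\phi(t,\x)-tp_0(\x)=\tilde u(t,\L_t^{-1}(\x))+t\,V_R\bigpare{x(0,\L_t^{-1}(\x);t),\x}.
\]
I would estimate the two terms on the right separately.

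For the first term, observe that $H(x,\x):=x\cdot\pa_x V_R(x,\x)$ lies in $S(\jap{x}^{-\m}\jap{\x}^m,g)$, so $\pa_x^\b\pa_\x^\c H(x,\x)=O(\jap{x}^{-\m-|\b|})$ on the bounded $\x$-region swept out by the trajectories. Expanding $\pa_\y^\a\bigpare{H(x(s,\y),\x(s,\y))}$ by the chain rule, each term carries a factor $\pa_x^\b\pa_\x^\c H$ multiplied by $|\b|$ factors $\pa_\y^{(\cdot)}x(s,\y)$ and $|\c|$ factors $\pa_\y^{(\cdot)}\x(s,\y)$; by Lemma~\ref{lem-cl-tr1} these are $O(\jap{s})$ and $O(1)$ respectively, while $\pa_x^\b\pa_\x^\c H(x(s,\y),\x(s,\y))=O(\jap{s}^{-\m-|\b|})$ because $|x(s,\y)|\sim|s|$ for $\y\in\O_{I_5}^0$ (and the integrand vanishes for small $|s|$). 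The growth $\jap{s}^{|\b|}$ cancels the decay $\jap{s}^{-|\b|}$, so $\pa_\y^\a\bigpare{H(x(s,\y),\x(s,\y))}=O(\jap{s}^{-\m})$ uniformly, whence $\pa_\y^\a\tilde u(t,\y)=O(\jap{t}^{1-\m})$ since $\m<1$. Composing with $\L_t^{-1}$ and using $\pa_\x^\a\L_t^{-1}(\x)=O(1)$, as established in the proof of Lemma~\ref{lem-cl-HJ1}, preserves this bound.

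The second term is handled in the same spirit. With $X(t,\x):=x(0,\L_t^{-1}(\x);t)$, Lemma~\ref{lem-cl-tr1} together with the $O(1)$ bounds on the $\x$-derivatives of $\L_t^{-1}$ gives $\pa_\x^\c X(t,\x)=O(\jap{t})$ for every $\c$ and $|X(t,\x)|\sim|t|$ for $\x\in\O_{I_4}^0$. Expanding $\pa_\x^\a\bigpare{V_R(X(t,\x),\x)}$ by the chain rule (the variable $\x$ entering both slots) and using $\pa_x^\b\pa_\x^\c V_R=O(\jap{x}^{-\m-|\b|})$, the same cancellation of $\jap{t}^{|\b|}$ against $\jap{t}^{-|\b|}$ yields $\pa_\x^\a\bigpare{V_R(X(t,\x),\x)}=O(\jap{t}^{-\m})$, hence $\pa_\x^\a\bigpare{t\,V_R(X(t,\x),\x)}=O(\jap{t}^{1-\m})$. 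Summing the two estimates completes the proof. The one delicate point, and the reason the exponent comes out exactly $1-\m$ rather than something worse, is precisely this matching of the linear growth of the trajectory derivatives against the extra decay gained whenever a derivative falls on the fast-decaying perturbation $V_R$; everything rests on the sharp $O(\jap{t})$ bound of Lemma~\ref{lem-cl-tr1} and on energy conservation, so no new ingredient is needed.
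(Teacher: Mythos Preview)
Your proof is correct and takes a genuinely different route from the paper. The paper integrates the Hamilton--Jacobi equation itself: since $\pa_t\phi(t,\x)=p(\pa_\x\phi(t,\x),\x)=p_0(\x)+V_R(\pa_\x\phi(t,\x),\x)$ and $\phi(0,\x)=0$, one gets in one stroke
\[
\phi(t,\x)-tp_0(\x)=\int_0^t V_R(\pa_\x\phi(s,\x),\x)\,ds,
\]
and then $|\pa_\x\phi(s,\x)|\geq c|s|$ together with Lemma~\ref{lem-cl-HJ1} for the $\x$-derivatives of $\pa_\x\phi$ yields the bound immediately. Your approach instead stays with the characteristic representation $u(t,\y)$ and invokes energy conservation twice---once to reduce $u$ to $tp_0(\y)+\tilde u$, once to convert $t(p_0(\y)-p_0(\x))$ into $t\,V_R(x(t,\y),\x)$---producing a two-term splitting that you then estimate separately. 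The paper's route is shorter (a single integral, no splitting, no need to compose with $\L_t^{-1}$ by hand), while yours is closer to the Lagrangian construction and makes the role of energy conservation explicit; in the end both rest on the same mechanism, the cancellation of the growth $\jap{s}^{|\b|}$ from Lemma~\ref{lem-cl-tr1} against the extra decay $\jap{s}^{-|\b|}$ from $\pa_x^\b V_R$.
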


\begin{proof}
We recall, by the construction of the solution to the Hamilton-Jacobi equation, 
$\pa_\x\phi(t,\x)=x(t,\Lambda_t^{-1}(\x))$, and hence 
\[
\bigabs{\pa_\x\phi(t,\x)} = \bigabs{x(t,\Lambda_t^{-1}(\x))}\geq c|t|, \quad t\in\re, 
\]
with some $c>0$, uniformly in $\x$. On the other hand, by the Hamilton-Jacobi equation, we have 
\begin{align*}
\phi(t,\x) &= \int_0^t p(\pa_\x\phi(s,\x),\x) ds 
= \int_0^t (p_0(\x)+V_R(\pa_\x\phi(s,\x),\x)ds \\
&=t p_0(\x) +\int_0^t V_R(\pa_\x\phi(s,\x),\x)ds, 
\end{align*}
and hence 
\begin{align*}
\bigabs{\phi(t,\x)-tp_0(\x)} &\leq \pm \int_0^t |V_R(\pa_\x\phi(s,\x),\x)|ds\\
&\leq \pm \int_0^t C \jap{s}^{-\m}ds 
\leq C\jap{t}^{1-\m}, \quad \text{for }\pm t\geq 0.
\end{align*}
For derivatives: $\pa_\x^\a (\phi(t,\x)-tp_0(\x))$, we differentiate the above 
equality, and we obtain the result using Lemma~\ref{lem-cl-tr1}.
\end{proof}

\subsection{Out-going/in-coming conditions}
\label{subsection-cl-ogic}

Throughout this section, we always suppose Assumptions~\ref{ass-free-symbol}--\ref{ass-energy-interval}, 
and consider classical trajectories with $p(x_0,\x_0)\in I_5\Subset \re$ as in Lemma~\ref{lem-cl-convexity}. 

Let $\b\in (-1,1]$, and consider the condition 
\begin{equation}\label{eq-ogic} 
\pm \cos(x_0,v(\x_0))= \pm\frac{x_0}{|x_0|}\cdot\frac{v(\x_0)}{|v(\x_0)|} \geq \b.
\end{equation}

\begin{lem}\label{lem-cl-ogic1}
Let $\b>-1$ be fixed. Then there are $c>0$ and $L\geq 0$ such that 
\[
|x(x_0,\x_0;t)|\geq c(\jap{x_0}+\jap{t}), \quad \pm t\geq 0, 
\]
respectively, provided $|x_0|\geq L$, and the condition \eqref{eq-ogic} is satisfied. 
\end{lem}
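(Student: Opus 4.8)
The plan is to exploit the convexity estimate from Lemma~\ref{lem-cl-convexity}, namely that $\frac{d^2}{dt^2}|x(t)|^2 \geq c_5 > 0$ along trajectories with energy in $I_5$, together with the sign information on the initial ``radial velocity'' $\frac{d}{dt}|x(t)|^2\big|_{t=0} = 2 x_0\cdot v(\x_0) + (\text{correction from } V_R)$. First I would compute that at $t=0$, since $V_R$ is small on $\O_{I_5}$ (we chose $R$ large) and $\pa_\x V_R = O(R^{-\m})$ is small while $\pa_x p_0 = 0$, one has $\frac{d}{dt}\big|_{t=0}|x(t)|^2 = 2 x_0\cdot \pa_\x p(x_0,\x_0) = 2 x_0\cdot v(\x_0) + O(R^{-\m}|x_0|)$. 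Under the outgoing condition $\pm x_0\cdot v(\x_0)/(|x_0||v(\x_0)|)\geq \b > -1$ with $|v(\x_0)|\geq c_4$, this gives $\pm\frac{d}{dt}\big|_{t=0}|x(t)|^2 \geq 2\b c_4 |x_0| - CR^{-\m}|x_0|$. The difficulty is that when $\b<0$ this lower bound is \emph{negative}, so the radial velocity may initially point inward; the trajectory can first move toward the origin before the convexity forces it back out. This is exactly why one needs $|x_0|\geq L$ large: the inward excursion is of bounded relative size.

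The key quantitative step: let $f(t)=|x(t)|^2$, so $f(0)=|x_0|^2$, $f'(0)=2x_0\cdot v(\x_0)+O(R^{-\m}|x_0|)$, and $f''(t)\geq c_5$. Consider $\pm t\geq 0$. By Taylor/integration, $f(t)\geq f(0) \pm t f'(0) + \frac{c_5}{2}t^2$ (the $\pm$ matching the sign of $t$). Using $|f'(0)|\leq C|x_0|$ and completing the square, one sees the minimum of the right-hand side over $\pm t\geq 0$ is at worst $f(0) - \frac{f'(0)^2}{2c_5}\geq |x_0|^2 - \frac{C^2|x_0|^2}{2c_5}$ — which is useless if $C^2 > 2c_5$. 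So a cruder bound via convexity alone is \emph{not} enough; I need to use the \emph{sign} of $f'(0)$ more carefully when $\b$ could be negative. The resolution is to bound $f'(0)$ from below by $2\b c_4|x_0| - CR^{-\m}|x_0| =: -\kappa|x_0|$ where $\kappa$ can be taken as small as we like by first fixing $\b$ close to $-1$... no: $\b$ is given. Instead: when $f'(0)\geq 0$ (outgoing case with the right sign) we immediately get $f(t)\geq |x_0|^2 + \frac{c_5}{2}t^2$ for $t\geq 0$, done. When $f'(0)< 0$, the minimum of $f$ on $t\geq 0$ occurs where $f'=0$, i.e. at some $t_* \leq |f'(0)|/c_5 \leq C|x_0|/c_5$, and there $f(t_*)= f(0) - \int_0^{t_*}\!\!\int_0^s f''\,du\,ds$... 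I would instead argue: $f'(s) = f'(0) + \int_0^s f''(u)du \geq f'(0) + c_5 s$, so $f' \geq 0$ once $s\geq |f'(0)|/c_5$, hence for $t\geq T_0 := |f'(0)|/c_5$ we have $f(t)\geq f(T_0)$, and for $t\in[0,T_0]$, $f(t)\geq f(0) - \frac{|f'(0)|^2}{2c_5}$. The honest fix is that the potential energy argument from Lemma~\ref{lem-cl-convexity} actually shows trajectories are ``essentially straight lines'' for large $|x_0|$: $\x(t)$ varies by only $O(R^{-\m})$, so $x(t) = x_0 + t v(\x_0) + O(R^{-\m}\jap{t})$ (integrating the first Hamilton equation, using $\pa_\x p = v(\x) + O(R^{-\m})$). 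Then $|x(t)|\geq |x_0 + tv(\x_0)| - CR^{-\m}\jap{t}$, and $|x_0+tv(\x_0)|^2 = |x_0|^2 + 2t\,x_0\cdot v(\x_0) + t^2|v(\x_0)|^2 \geq |x_0|^2 + 2\b t|x_0||v(\x_0)| + t^2|v(\x_0)|^2$ for $t\geq 0$. Completing the square with $\b>-1$: this is $\geq (1-\b^2)|x_0|^2/?$... more precisely $\geq \frac{1-\b}{2}(|x_0|^2 + t^2 |v(\x_0)|^2) \geq \frac{(1-\b)c_4^2}{2}(\jap{x_0}^2 + \jap{t}^2)/C'$ after absorbing constants, since $a^2 + 2\b a b + b^2 = (1-\b)(a^2+b^2)/1 \cdot (\text{no})$ — one uses $a^2+2\b ab+b^2 \geq (1-|\b|)(a^2+b^2)$ when $\b\geq 0$ and a similar coercivity bound $\geq \frac{1-\b}{2}(a^2+b^2)$ fails for $\b$ near $-1$; the correct elementary inequality is $a^2 + 2\b ab + b^2 \geq \frac{1+\b}{2}\cdot\frac{?}{}$ — in any case for fixed $\b > -1$ the quadratic form $\begin{psmallmatrix}1&\b\\\b&1\end{psmallmatrix}$ has smallest eigenvalue $1-|\b| > 0$ if $|\b|<1$, and eigenvalue $1+\b>0$ regardless, so $|x_0+tv(\x_0)|^2 \geq (1+\b)$ times... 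I will just say: there is $\d = \d(\b)>0$ with $a^2+2\b ab+b^2\geq \d(a^2+b^2)$ for all $a,b\geq 0$ when $-1<\b$, hence $|x_0+tv(\x_0)|\geq \sqrt\d(|x_0|^2+c_4^2 t^2)^{1/2}$.

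Putting it together: choose $R$ large (as already fixed) so $CR^{-\m} \leq \frac{1}{2}\sqrt\d\,c_4$; then $|x(t)|\geq \sqrt\d\,(|x_0|^2 + c_4^2 t^2)^{1/2} - CR^{-\m}\jap{t} \geq \frac{1}{2}\sqrt\d\,(|x_0|^2+c_4^2 t^2)^{1/2} - C'$ for $t\geq 0$; choosing $L$ large enough that $\frac14\sqrt\d|x_0| \geq C'$ when $|x_0|\geq L$ absorbs the constant, and we get $|x(t)|\geq c(\jap{x_0}+\jap{t})$ for $t\geq 0$ with $c = c(\b)>0$. The case $t\leq 0$ is identical with $x_0\cdot v(\x_0)\leq -\b|x_0||v(\x_0)|$ replaced appropriately (the hypothesis has the matching $\pm$). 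The main obstacle is exactly the sign issue for $\b\in(-1,0)$ just discussed — making precise that ``nearly straight line'' motion, which requires the $O(R^{-\m})$ control on $\x(t)$ and the integral estimate $x(t)-x_0-tv(\x_0) = \int_0^t(\pa_\x p(x(s),\x(s)) - v(\x_0))ds = O(R^{-\m}\jap{t})$ from Lemma~\ref{lem-cl-convexity}, and then the elementary coercivity of the quadratic form; once those are in hand the conclusion follows by choosing $R$, then $L$, large.
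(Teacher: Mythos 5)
Your strategy is genuinely different from the paper's (which combines the convexity estimate of Lemma~\ref{lem-cl-convexity} for $t\gtrsim |x_0|$ with an impact-parameter differential inequality for $t\lesssim |x_0|$), but as written it has a real gap at the absorption step. Your deviation bound for the free line is $|x(t)-x_0-tv(\xi_0)|\leq CR^{-\mu}\jap{t}$, i.e.\ it grows \emph{linearly} in $t$ with coefficient $CR^{-\mu}$, whereas your main term only gives $|x_0+tv(\xi_0)|\geq \sqrt{\delta}\,(|x_0|^2+c_4^2t^2)^{1/2}$ with $\delta=\delta(\beta)\to 0$ as $\beta\downarrow -1$. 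To conclude you impose $CR^{-\mu}\leq \tfrac12\sqrt{\delta}\,c_4$, i.e.\ you re-choose $R$ depending on $\beta$. But $R$ was fixed once and for all at the end of Subsection~\ref{subsection-cl-cutoff} (it enters $V_R$, hence the flow, $\gamma_\pm$, $J_\pm$, \dots), and the lemma is asserted for that fixed $R$ and \emph{every} $\beta>-1$; indeed Section~\ref{section-isozaki-kitada} invokes the resulting symbol estimates on $\tilde\O_{I_4,\pm}(\beta)$ ``with any $\beta>-1$'', so $R$ cannot be tied to $\beta$. With $R$ fixed and $\beta$ close to $-1$, your lower bound $\sqrt{\delta}(|x_0|^2+c_4^2t^2)^{1/2}-CR^{-\mu}\jap{t}$ becomes negative for large $t$, and the only freedom the lemma grants, taking $|x_0|\geq L$ large, cannot beat a linearly growing error. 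The phrase ``choose $R$ large (as already fixed)'' is exactly where this circularity is hidden.

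The paper's proof avoids this because its smallness is absorbed by $L$, not $R$: for $t\geq C_2|x_0|$ the convexity bound $|x(t)|^2\geq |x_0|^2+\tfrac{c_5}{4}t^2$ already suffices, and for $t\in[0,C_2|x_0|]$ the transverse component $y(t)=x(t)-(x(t)\cdot\hat v)\hat v$ obeys $\tfrac{d}{dt}\jap{y}^{1+\mu}\geq -(1+\mu)C_3$ with $C_3$ independent of $\beta$, so the total loss on that time range is $O(|x_0|)$ against the main term $(1-\beta^2)^{(1+\mu)/2}|x_0|^{1+\mu}$ --- a superlinear power of $|x_0|$, absorbable once $|x_0|\geq L(\beta)$ (the ``take $R$'' in the paper's final display is plainly a slip for the lower bound on $|x_0|$). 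Your route could be repaired without touching $R$: keep the convexity regime $t\geq C_2|x_0|$ that you mention at the outset and then drop, and on $[0,C_2|x_0|]$ run a bootstrap, assuming $\jap{x(s)}\geq \tfrac12\sqrt{\delta}|x_0|$ to upgrade $|\xi(s)-\xi_0|$ to $O(|x_0|^{-\mu})$, whence the drift error improves to $O(|x_0|^{1-\mu})$ and is absorbed by $|x_0|\geq L(\beta)$. As submitted, however, the argument does not establish the lemma for all $\beta>-1$ with the fixed $R$, so there is a genuine gap.
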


\begin{proof}
We may assume $\b<0$ without loss of generality, and we consider the case $t\geq 0$ only. 
The other case can be considered in the same way. 
We note, by the Hamilton equation, 
\[
\frac{d}{dt} |x(t)|^2 = 2x\cdot \pa_\x p(x,\x) = 2x\cdot v(\x) + 2x\cdot\pa_\x V_R(x,\x).
\]
By this equality, we easily observe that  if $\cos(x_0,v(\x_0))\geq |\b|$, then $|x(t)|^2\geq c\jap{t}^2$ 
for $t\geq 0$, provided $|x_0|$ is sufficiently large. Thus in the following we may assume 
\begin{equation}\label{eq-ogic-mod}
|\cos(x_0,v(\x_0))|\leq |\b|<1.
\end{equation}
Since $v(\x)$ and $\pa_\x V(x,\x)$ are bounded on $\O_{I_5}$, we have 
\[
\frac{d}{dt}|x(t)|^2\Big|_{t=0} \geq - C_1 |x_0|
\]
with a constant $C_1$. Then, by Lemma~\ref{lem-cl-convexity}, we learn 
\begin{align*}
|x(t)|^2 &\geq |x_0|^2  -C_1 |x_0|t + \frac{c_5}{2}t^2\\
&= |x_0|^2 + (c_5/4)t^2+ ((c_5/4)t -C_1|x_0|)t
\end{align*}
for $t\geq 0$. Hence, we have 
\[
|x(t)|^2\geq |x_0|^2 + (c_4/4) t^2, \quad \text{if } t\geq (4C_1/c_5)|x_0|. 
\]
Thus it suffices to consider the estimate for $t\in [0, C_2 |x_0|]$, where $C_2= 4C_1/c_5$. 

We now consider the {\em impact paramter}\/:
\[
y(t)= x(t) - (x(t)\cdot \hat v(\x(t)))\hat v(\x(t)), \quad t\in\re,
\]
where $\hat v(\x)=|v(\x)|^{-1}v(\x)$. 
We have 
\[
\frac{d}{dt}y(t) = \pa_\x V_R -(\pa_\x V_R\cdot \hat v)\hat v  -\biggpare{x\cdot \frac{d\hat v}{dt}}\hat v
-(x\cdot\hat v)\frac{d\hat v}{dt}. 
\]
We note 
\[
\frac{d\hat v}{dt}= \frac{d}{dt}\biggpare{\frac{v}{|v|}}
= \frac{1}{|v|}\frac{dv}{dt} -\biggpare{v\cdot\frac{dv}{dt}}\frac{v}{|v|^3}
\]
and 
\[
\frac{dv}{dt} = (\pa_\x\pa_\x p_0) \frac{d\x}{dt} 
=-(\pa_\x\pa_\x p_0)(\pa_x V_R) = O(\jap{x(t)}^{-1-\m}).
\]
We also note $|v|\geq c_4$ by Assumption~\ref{ass-energy-interval}. Thus we learn
\[
\biggabs{\frac{d}{dt} y(t)}\leq C_3 \jap{x(t)}^{-\m}\leq C_3 \jap{y(t)}^{-\m}
\]
with some constant $C_3>0$. We solve this differential inequality as follows: at first, we note
\[
\frac{d}{dt}\jap{y(t)}\geq -\biggabs{\frac{d}{dt}y(t)}\geq -C_3\jap{y(t)}^{-\m},
\]
and hence 
\[
\jap{y}^\m\frac{d}{dt}\jap{y} = \frac{1}{\m+1}\frac{d}{dt}\bigpare{\jap{y}^{\m+1}}\geq -C_3.
\]
Thus we have 
\[
\jap{y(t)}^{\m+1} \geq \jap{y(0)}^{\m+1}  -(\m+1)C_3 t, \quad t\geq 0.
\]
On the other hand, by the assumption \eqref{eq-ogic} and \eqref{eq-ogic-mod}, we have 
\[
|y(0)|^2 = |x_0|^2 - |x_0\cdot \hat v(\x_0)|^2 \geq (1-\b^2)|x_0|^2, 
\]
since $y(t)$ and $\hat v(\x)$ are perpendicular. Combining them, we have 
\begin{align*}
\jap{y(t)}^{\m+1} &\geq (1-\b^2)^{(\m+1)/2} |x_0|^{\m+1}-(\m+1)C_3t\\
&\geq (1-\b^2)^{(\m+1)/2}|x_0|^{\m+1}\bigpare{1 - C_4|x_0|^{-\m}}
\end{align*}
if $t\in [0,C_2|x_0|]$, where $C_4=(1-\b^2)^{-(\m+1)/2}(\m+1)C_3C_2$.
Now if we take $R>0$ so that $1-C_4R^{-\m}>1/2$, then 
\[
\jap{x(t)}\geq \jap{y(t)} \geq (1-\b^2)^{1/2}2^{-1/(\m+1)} |x_0|, \quad 
t\in[0, C_2|x_0|].
\]
This completes the proof. 
\end{proof}

We note that if $|x_0|\leq L$, then by Lemma~\ref{lem-cl-convexity} and its proof, we have 
\[
|x(x_0,\x_0;t)|\geq c\jap{t}-C, \quad t\in\re, 
\]
with some $c>0$ and $C>0$, uniformly. Thus the above estimates are always valid 
for sufficiently large $|t|$, and hence, under the above out-going/in-coming condition
\eqref{eq-ogic}, we have 
\[
|x(t)|\geq c\jap{x_0;t}, \quad \pm t\geq T,
\]
with some constant $c>0$ and sufficiently large $T>0$, uniformly in $({x_0},{\x_0})$, 
where we denote
\[
\jap{x;t}=(1+|x|^2+t^2)^{1/2}.
\]
We use the following notation: 
\begin{align*}
&\O_{J,\pm}(\b)=\bigset{(x,\x)}{p(x,\x)\in J, \pm\cos(x,v(\x))>\b}, \\
&\O_{J,\pm}^0(\b)=\bigset{(x,\x)}{x\in\re^d, p_0(\x)\in J, \pm\cos(x,v(\x))>\b}
\end{align*}
for $J\subset \re$ and $\b\in (-1,1]$. 

\begin{lem}\label{lem-cl-ogic2}
\begin{enumerate}
\renewcommand{\labelenumi}{{\rm (\roman{enumi})}}
\item There is a constant $C>0$ such that 
\begin{align*}
& |\pa_{x_0} x(t)|\leq C(1+\jap{{x_0}}^{-1-\m}|t|), \quad 
|\pa_{x_0} \x(t)|\leq C \jap{{x_0}}^{-1-\m}, \\
&|\pa_{\x_0} x(t)|\leq C|t|, \quad |\pa_{\x_0} \x(t)|\leq C,
\end{align*}
uniformly for $(x_0,\x_0)\in \O_{I_5,\pm}(\b)$. 
Moreover, 
\[
|\pa_{\x_0}\x(t)-\mathrm{E}|\leq C\jap{x_0}^{-\m}. 
\]
\item For $\a,\b\in\ze_+^d$, $|\a+\b|\geq 2$, there is $C_{\a\b}>0$ such that 
\[
|\pa_{x_0}^\a\pa_{\x_0}^\b x(t)|\leq C_{\a\b} \jap{{x_0}}^{-|\a|}|t|, 
\quad |\pa_{x_0}^\a \pa_{\x_0}^\b \x(t)|\leq C_{\a\b}\jap{{x_0}}^{-|\a|-\m},
\]
uniformly for $\pm t\geq 0$, $(x_0,\x_0)\in \O_{I_5,\pm}(\b)$. 
If, moreover, $\a\neq 0$, then 
\[
|\pa_{x_0}^\a\pa_{\x_0}^\b x(t)|\leq C_{\a\b} \jap{{x_0}}^{-|\a|-\m}|t|.
\]
\end{enumerate}
\end{lem}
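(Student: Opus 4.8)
The plan is to prove both parts by differentiating the Hamilton equations with respect to the initial data $(x_0,\x_0)$ and running a Gronwall-type argument, exactly as in Lemma~\ref{lem-cl-tr1}, but now keeping track of the extra decay in $\jap{x_0}$ that comes from the out-going/in-coming condition. The key input is the estimate $|x(x_0,\x_0;t)|\geq c\jap{x_0;t}$ valid for $\pm t\geq T$ under \eqref{eq-ogic}, established in the discussion following Lemma~\ref{lem-cl-ogic1}, together with the convexity bound $\frac{d^2}{dt^2}|x(t)|^2\geq c_5$ from Lemma~\ref{lem-cl-convexity}. Because $V_R\in S(\jap{x}^{-\m}\jap{\x}^m,g)$, along such a trajectory we have $\pa_x^\a\pa_\x^\b V_R(x(t),\x(t))=O(\jap{x(t)}^{-|\a|-\m}\jap{x(t)}^{...})$, and substituting $\jap{x(t)}\sim\jap{x_0;t}\gtrsim\jap{x_0}$ and $\gtrsim\jap{t}$ gives, for $\pm t\ge 0$,
\[
\pa_\x\pa_x^\a\pa_\x^\b p(x(t),\x(t))=O\bigpare{\jap{x_0}^{-|\a|}\jap{t}^{-1-\m}}, \quad
\pa_x\pa_x^\a\pa_\x^\b p(x(t),\x(t))=O\bigpare{\jap{x_0}^{-|\a|-1}\jap{t}^{-1-\m}},
\]
and similarly $\pa_\x\pa_\x p=O(1)$ with no $x_0$-decay. (For $\pm t\le T$ one uses that the trajectory stays in a bounded region and absorbs constants.)

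For part (i), I would first treat the $\x_0$-derivatives: the matrix $(\pa_{\x_0}x,\pa_{\x_0}\x)$ solves the linearized system with initial condition $(0,\mathrm E)$, and the bounds $|\pa_{\x_0}\x(t)|\le C$, $|\pa_{\x_0}x(t)|\le C|t|$ follow word-for-word from the proof of Lemma~\ref{lem-cl-tr1}; the refinement $|\pa_{\x_0}\x(t)-\mathrm E|\le C\jap{x_0}^{-\m}$ comes from writing $\pa_{\x_0}\x(t)-\mathrm E=-\int_0^t[(\pa_x\pa_x p)\pa_{\x_0}x+(\pa_x\pa_\x p)\pa_{\x_0}\x]\,ds$ and inserting the $\jap{x_0}^{-1}\jap{s}^{-1-\m}$ and $\jap{x_0}^{-1-\m}$-type bounds just derived, so that the integral is $O(\jap{x_0}^{-\m})$. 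For the $x_0$-derivatives, $(\pa_{x_0}x,\pa_{x_0}\x)$ solves the same linearized system but with initial condition $(\mathrm E,0)$; feeding in the $x_0$-weighted coefficient bounds and running Gronwall on $\jap{t}^{-1-\m/2}|\pa_{x_0}x|+\jap{x_0}^{1+\m}|\pa_{x_0}\x|$ (or a similar combination designed so the inhomogeneous/coupling terms carry the right $\jap{x_0}$-power) yields $|\pa_{x_0}\x(t)|\le C\jap{x_0}^{-1-\m}$ and then, substituting back, $|\pa_{x_0}x(t)|\le C(1+\jap{x_0}^{-1-\m}|t|)$.

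For part (ii), I would induct on $|\a+\b|$ exactly as in Lemma~\ref{lem-cl-tr1}: $\pa_{x_0}^\a\pa_{\x_0}^\b x$ and $\pa_{x_0}^\a\pa_{\x_0}^\b\x$ satisfy an inhomogeneous linear system whose homogeneous part is the same linearization, and whose forcing term is a sum of products $(\pa_\x\pa_x^{\b'}\pa_\x^{\c'}p)\prod(\pa^{\tilde\b}x)(\pa^{\tilde\c}\x)$ with $|\b'+\c'|\ge 2$ and lower-order derivatives. Each factor $\pa_x^{\b'}$ contributes $\jap{x(t)}^{-1}\sim\jap{x_0}^{-1}$ to the symbol bound, and the induction hypothesis supplies the $\jap{x_0}^{-(\text{order})}$ decay of each $\pa^{\tilde\b}x$, $\pa^{\tilde\c}\x$; counting the total number of $x_0$-derivatives distributed gives precisely the weight $\jap{x_0}^{-|\a|}$ for the $x$-component (and $\jap{x_0}^{-|\a|-\m}$ for $\x$ after one more integration picking up the $V_R$-decay), with the extra $\jap{x_0}^{-\m}$ in the last displayed bound coming from the fact that when $\a\neq 0$ every term in the forcing genuinely involves $V_R$ and hence an additional $\jap{x(t)}^{-\m}$ factor. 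The main obstacle is purely bookkeeping: choosing, at each induction step, the correct $\jap{x_0}$- and $\jap{t}$-weighted combination of $|\pa^\a x|$ and $|\pa^\a\x|$ so that Gronwall closes and the inhomogeneous term is integrable in $t$ uniformly in $x_0$; once the weights in part (i) are pinned down the higher-order case is a mechanical (if tedious) Leibniz-rule count, so I would state the combination explicitly and then only sketch the induction.
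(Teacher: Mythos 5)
Your overall strategy (linearize the Hamilton equations in the initial data, Gronwall with weighted combinations, then induction via the Leibniz expansion, with the lower bound $|x(t)|\geq c\jap{x_0;t}$ as the key input) is the same as the paper's, but two of your quantitative claims fail as stated. First, the factorized coefficient bounds are too strong: along the trajectory one only has $\bigabs{\pa_\x\pa_x^\a\pa_\x^\b V_R(x(t),\x(t))}\leq C\jap{x(t)}^{-|\a|-\m}\leq C\jap{x_0;t}^{-|\a|-\m}$, and $\jap{x_0;t}^{-|\a|-\m}$ is \emph{not} $O\bigpare{\jap{x_0}^{-|\a|}\jap{t}^{-1-\m}}$ for $|\a|\geq 1$ (the total powers do not match: at $t\sim|x_0|=r$ the left side is $\sim r^{-|\a|-\m}$ while your right side is $\sim r^{-|\a|-1-\m}$); e.g.\ for the linearization one only has $|\pa_x\pa_\x p|=O(\jap{x_0;t}^{-1-\m})$ and $|\pa_x\pa_x p|=O(\jap{x_0;t}^{-2-\m})$, i.e.\ joint decay in $\jap{x_0;t}$, not a product of separate decays in $\jap{x_0}$ and $\jap{t}$. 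Second, with the correct bounds your Gronwall functional $\jap{t}^{-1-\m/2}|\pa_{x_0}x|+\jap{x_0}^{1+\m}|\pa_{x_0}\x|$ does not close: the coupling term $\jap{x_0}^{1+\m}\,|\pa_x\pa_x p|\,|\pa_{x_0}x|$ produces a Gronwall coefficient of size $\jap{x_0}^{1+\m}\jap{x_0;t}^{-2-\m}\jap{t}^{1+\m/2}$, whose $t$-integral grows like $\jap{x_0}^{1+\m/2}$, so the exponent is not uniformly bounded. More fundamentally, no single pass of this kind yields $|\pa_{x_0}\x|\leq C\jap{x_0}^{-1-\m}$, since that bound needs $|\pa_{x_0}x|\lesssim 1+\jap{x_0}^{-1-\m}|t|$ as input. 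The paper uses the $x_0$-dependent weight $\jap{x_0;t}^{-1-\m/2}|\pa_{x_0}x|+|\pa_{x_0}\x|$ (whose initial value is already $\jap{x_0}^{-1-\m/2}$), accepts a loss of $\m/2$ on the first pass, and then re-substitutes into the linearized system twice; the same bootstrap is repeated at each step of the induction in (ii), with the intermediate quantity $\jap{x_0;t}$ doing the bookkeeping rather than separate powers of $\jap{x_0}$ and $\jap{t}$.

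In (ii) your explanation of the refined bound for $\a\neq 0$ is also inaccurate: it is not true that every term of the forcing involves $V_R$. In the Leibniz expansion the terms with $\c=0$ carry the coefficient $\pa_\x\pa_\x^\d p$, which contains $p_0$ and has no spatial decay at all; for those terms the extra $\jap{x_0}^{-\m}$ must come from the factors $\pa_{x_0}^{\tilde a}\pa_{\x_0}^{\tilde b}\x_i$ with $\tilde a\neq 0$, which decay like $\jap{x_0}^{-|\tilde a|-\m}$ by the already-established estimate \eqref{eq-cl-b7} and part (i) — this is exactly the case distinction ($\c=0$ versus $\c\neq 0$) made in the paper. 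So the skeleton of your argument is the right one, and the $\x_0$-derivative estimates and the integral argument for $\pa_{\x_0}\x-\mathrm{E}$ are fine, but the factorized symbol bounds, the one-pass Gronwall claim, and the attribution of the extra $\m$-decay all need to be corrected along the lines above.
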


\begin{proof}
We use an argument similar to the proof of Lemma~2.2, but with carefully controlling the 
dependence of the constants on $x_0$. 

(i) We first note, by the Hamilton equation, we have 
\begin{equation}\label{eq-cl-b1}
\left\{
\begin{aligned}
&\pa_t(\pa_{x_0} x)=(\pa_\x\pa_x p) \pa_{x_0}x +(\pa_\x\pa_\x p) \pa_{x_0}\x, \\
&\pa_t(\pa_{x_0} \x)=-(\pa_x\pa_x p) \pa_{x_0}x -(\pa_x\pa_\x p) \pa_{x_0}\x, \\
& (\pa_{x_0}x)(0)=\mathrm{E}, \quad (\pa_{x_0}\x)(0)=0.
\end{aligned}
\right. 
\end{equation}
We recall 
\[
|\pa_\x\pa_\x p|= O(1), \quad |\pa_x\pa_\x p|=O(\jap{x_0;t})^{-1-\m}), 
\quad |\pa_x\pa_x p|=O(\jap{x_0;t}^{-2-\m}),
\]
under our conditions. Thus we have 
\begin{align}
\pa_t|\pa_{x_0}x| &\leq C\jap{x_0;t}^{-1-\m}|\pa_{x_0} x| + C|\pa_{x_0}\x|,\label{eq-cl-b2}\\
\pa_t|\pa_{x_0}\x| &\leq C\jap{x_0;t}^{-2-\m}|\pa_{x_0} x| + C\jap{x_0;t}^{-1-\m}|\pa_{x_0}\x|. 
\label{eq-cl-b3}
\end{align}
Hence we have 
\[
\pa_t \bigpare{\jap{x_0;t}^{-1-\m/2}|\pa_{x_0}x|+|\pa_{x_0}\x|}
\leq C \jap{x_0;t}^{-1-\m/2} \bigpare{\jap{x_0;t}^{-1-\m/2}|\pa_{x_0}x|+|\pa_{x_0}\x|}. 
\]
Then by Gronwall's inequality, we learn 
\begin{align*}
\jap{x_0;t}^{-1-\m/2}|\pa_{x_0}x|+|\pa_{x_0}\x|
\leq C \bigpare{\jap{x_0;t}^{-1-\m/2}|\pa_{x_0}x|+|\pa_{x_0}\x|}_{t=0}
\leq C' \jap{x_0}^{-1-\m/2}
\end{align*}
uniformly, and hence 
\[
|\pa_{x_0}x(t)|\leq C\jap{x_0}^{-1-\m/2} \jap{x_0;t}^{1+\m/2}, \quad 
|\pa_{x_0}\x(t)| \leq C\jap{x_0}^{-1-\m/2}. 
\]
Substituting these to \eqref{eq-cl-b2} and \eqref{eq-cl-b3}, and we have 
\begin{align*}
\pa_t|\pa_{x_0}x| &\leq C\jap{x_0}^{-1-\m/2} \jap{x_0;t}^{-\m/2}+C\jap{x_0}^{-1-\m/2}
\leq C\jap{x_0}^{-1-\m/2}, \\
\pa_t|\pa_{x_0}\x| &\leq C \jap{x_0}^{-1-\m/2} \jap{x_0;t}^{-1-\m/2}, 
\end{align*}
and hence (using $\int_0^\infty \jap{x_0;t}^{-1-\m/2}dt \leq C\jap{x_0}^{-\m/2}$), 
\[
|\pa_{x_0}x|\leq C(1+\jap{x_0}^{-1-\m/2}|t| ), \quad 
|\pa_{x_0}\x|\leq  C\jap{x_0}^{-1-\m} .
\]
Iterating this procedure once more, we obtain 
\[
|\pa_{x_0}x|\leq C(1+\jap{x_0}^{-1-\m}|t| )
\]
Similarly, we have 
\begin{equation*}
\left\{
\begin{aligned}
&\pa_t(\pa_{\x_0} x)=(\pa_\x\pa_x p) \pa_{\x_0}x +(\pa_\x\pa_\x p) \pa_{\x_0}\x, \\
&\pa_t(\pa_{\x_0} \x)=-(\pa_x\pa_x p) \pa_{\x_0}x -(\pa_x\pa_\x p) \pa_{\x_0}\x, \\
& (\pa_{\x_0}x)(0)=0, \quad (\pa_{\x_0}\x)(0)=\mathrm{E},
\end{aligned}
\right. 
\end{equation*}
and hence  
\begin{align*}
\pa_t|\pa_{\x_0}x| &\leq C\jap{x_0;t}^{-1-\m}|\pa_{\x_0} x| + C|\pa_{\x_0}\x|,\\
\pa_t|\pa_{\x_0}\x| &\leq C\jap{x_0;t}^{-2-\m}|\pa_{\x_0} x| + C\jap{x_0;t}^{-1-\m}|\pa_{\x_0}\x|.
\end{align*}
We again obtain
\[
\pa_t \bigpare{\jap{x_0;t}^{-1-\m/2}|\pa_{\x_0}x|+|\pa_{\x_0}\x|}
\leq C \jap{x_0;t}^{-1-\m/2} \bigpare{\jap{x_0;t}^{-1-\m/2}|\pa_{\x_0}x|+|\pa_{\x_0}\x|},
\]
and by Gronwall's inequality, we obtain 
\[
\jap{x_0;t}^{-1-\m/2}|\pa_{x_0}x|+|\pa_{x_0}\x| \leq 
C \bigpare{\jap{x_0;t}^{-1-\m/2}|\pa_{\x_0}x|+|\pa_{\x_0}\x|}_{t=0}. 
\]
This implies 
\[
|\pa_{\x_0}x|\leq C \jap{x_0;t}^{1+\m/2}, \quad |\pa_{\x_0}\x| 
\leq C. 
\]
Substituting these to the above inequality for $\pa_{\x_0}x$ again to learn 
\[
\pa_t|\pa_{\x_0}x|\leq C \jap{x_0;t}^{-\m/2}+C\leq C',
\]
and hence $|\pa_{\x_0}x|=O(|t|)$. 
Also, by substituting these to the inequality for $\pa_{x_0}\x$, we learn 
\[
\pa_t|\pa_{\x_0}\x|\leq C \jap{x_0;t}^{-2-\m}\jap{t}+C\jap{x_0;t}^{-1-\m}
\leq C' \jap{x_0;t}^{-1-\m}, 
\]
and hence $|\pa_{\x_0}\x-\mathrm{E}| = O(\jap{x_0}^{-\m})$. 

(ii) We prove the claim by induction in $|\a+\b|$. We use a weaker induction hypothesis:
\begin{quote}
${(C_{k})}$: For $\a,\b\in \ze_+^d$, $|\a+\b|\leq k$, there is $C_{\a\b}>0$ such that 
\[
\bigabs{\pa_{x_0}^\a\pa_{\x_0}^\b x(t)}\leq C_{\a\b}\jap{x_0}^{-|\a|}\jap{x_0;t}, \quad 
\bigabs{\pa_{x_0}^\a\pa_{\x_0}^\b \x(t)}\leq C_{\a\b} \jap{x_0}^{-|\a|},
\]
uniformly for $t\geq 0$, $(x_0,\x_0)\in\O_{I_5}$ and $\cos(x_0,v(\x_0))\geq \b$. 
\end{quote}
We note $(C_0)$ is easy to show, and $(C_1)$ is already proved 
in (i) of the lemma. Suppose $(C_k)$ holds for $k< N$, and suppose $|\a+\b|=N\geq 2$. 
By the Hamilton equation and the Leibniz rule, we have 
\begin{align*}
\pa_t(\pa_{x_0}^\a\pa_{\x_0}^\b x) 
&= (\pa_\x\pa_x  p) \pa_{x_0}^\a\pa_{\x_0}^\b x + (\pa_\x\pa_\x p)\pa_{x_0}^\a\pa_{\x_0}^\b\x \\
&\quad  +\sum_* c_*(\pa_\x \pa_x^\c\pa_\x^\d p) \prod_{i=1}^{d} 
\biggpare{\prod_{j=1}^{\c_i}(\pa_{x_0}^{a(i,j)}\pa_{\x_0}^{b(i,j)} x_i)
\prod_{k=1}^{\d_i} (\pa_{x_0}^{\tilde a(i,k)}\pa_{\x_0}^{\tilde b(i,k)}\x_i)}, 
\end{align*}
where the last sum is taken over $\c,\d, a(i,j), \tilde a(i,j), b(i,k), \tilde b(i,k)\in\ze_+^d$ such that 
$|\c+\d|\geq 2$; $a(i,j)+b(i,j)\neq 0$ for $i=1,\dots,d$, $j=1,\dots,\c_i$;  
$\tilde a(i,k)+\tilde b(i,k)\neq 0$ for $i=1,\dots,d$, $k=1,\dots,\d_i$; and  
\[
\sum_{i=1}^d\biggpare{\sum_{j=1}^{\c_i}a(i,j) +\sum_{k=1}^{\d_i}\tilde a(i,k)} =\a, \quad 
\sum_{i=1}^d\biggpare{\sum_{j=1}^{\c_i}b(i,j) +\sum_{k=1}^{\d_i}\tilde b(i,k)} =\b.
\]
$c_*$ denotes suitable universal constant for each index. 
We denote the last term by $R_1$. 
Similarly, we have 
\begin{align*}
\pa_t(\pa_{x_0}^\a\pa_{\x_0}^\b \x) 
&= (\pa_x\pa_x  p) \pa_{x_0}^\a\pa_{\x_0}^\b x + (\pa_x\pa_\x p)\pa_{x_0}^\a\pa_{\x_0}^\b\x \\
& \quad +\sum_* c_*(\pa_x \pa_x^\c\pa_\x^\d p) \prod_{i=1}^{d} 
\biggpare{\prod_{j=1}^{\c_i}(\pa_{x_0}^{a(i,j)}\pa_{\x_0}^{b(i,j)} x_i)
\prod_{k=1}^{\d_i} (\pa_{x_0}^{\tilde a(i,k)}\pa_{\x_0}^{\tilde b(i,k)}\x_i)}. 
\end{align*}
We denote the last term by $R_2$. 

We now recall
\[
\bigabs{\pa_x^\c \pa_\x^\d p(x(t),\x(t))} \leq
C \jap{x_0;t}^{-|\c|}, 
\]
with some constant $C>0$. Combining this with the induction hypothesis, we learn 
\begin{align*}
&\biggabs{(\pa_\x \pa_x^\c\pa_\x^\d p) \prod_{i=1}^{d} 
\biggpare{\prod_{j=1}^{\c_i}(\pa_{x_0}^{a(i,j)}\pa_{\x_0}^{b(i,j)} x_i)
\prod_{k=1}^{\d_i} (\pa_{x_0}^{\tilde a(i,k)}\pa_{\x_0}^{\tilde b(i,k)}\x_i)}}\\
&\quad \leq 
C \jap{x_0;t}^{-|\c|} \jap{x_0;t}^{|\c|} \jap{x_0}^{-|\a|}=C\jap{x_0}^{-|\a|}, 
\end{align*}
and hence
\[
|R_1|\leq C\jap{x_0}^{-|\a|}.
\]
Similarly, we have 
\[
|R_2|\leq C\jap{x_0;t}^{-1-\m}\jap{x_0}^{-|\a|}.
\]
Thus we have 
\begin{align}
 \pa_t |\pa_{x_0}^\a\pa_{\x_0}^\b x| &\leq C\jap{x_0;t}^{-1-\m}|\pa_{x_0}^\a\pa_{\x_0}^\b x|
+C|\pa_{x_0}^\a\pa_{\x_0}^\b\x|+ C\jap{x_0}^{-|\a|},  \label{eq-cl-b4}\\
 \pa_t |\pa_{x_0}^\a\pa_{\x_0}^\b \x| &\leq C\jap{x_0;t}^{-2-\m}|\pa_{x_0}^\a\pa_{\x_0}^\b x|
+C\jap{x_0;t}^{-1-\m}|\pa_{x_0}^\a\pa_{\x_0}^\b\x| \nonumber \label{eq-cl-b5}\\
&\quad + C\jap{x_0;t}^{-1-\m}\jap{x_0}^{-|\a|}. 
\end{align}
Combining these, we obtain 
\begin{align*}
& \pa_t \Bigpare{\jap{x_0;t}^{-1-\m/2}|\pa_{x_0}^\a\pa_{\x_0}^\b x|
+|\pa_{x_0}^\a\pa_{\x_0}^\b\x|} \\
&\quad\leq C \jap{x_0;t}^{-1-\m/2} \Bigpare{\jap{x_0;t}^{-1-\m/2}|\pa_{x_0}^\a\pa_{\x_0}^\b x|
+|\pa_{x_0}^\a\pa_{\x_0}^\b\x|}
+C \jap{x_0;t}^{-1-\m/2}\jap{x_0}^{-|\a|}
\end{align*}
and 
\[
\Bigpare{\jap{x_0;t}^{-1-\m/2}|\pa_{x_0}^\a\pa_{\x_0}^\b x|
+|\pa_{x_0}^\a\pa_{\x_0}^\b\x| }\Big|_{t=0} = 0
\]
since $|\a+\b|\geq 2$. Then by Gronwall's inequality we have 
\[
\jap{x_0;t}^{-1-\m/2}|\pa_{x_0}^\a\pa_{\x_0}^\b x|
+|\pa_{x_0}^\a\pa_{\x_0}^\b\x| 
\leq C\jap{x_0}^{-|\a|-\m/2},
\]
and hence 
\[
|\pa_{x_0}^\a\pa_{\x_0}^\b x|\leq C\jap{x_0;t}^{1+\m/2}\jap{x_0}^{-|\a|}, \quad 
|\pa_{x_0}^\a\pa_{\x_0}^\b\x|\leq C\jap{x_0}^{-|\a|-\m/2}\leq C\jap{x_0}^{-|\a|}.
\]
Now we substitute these to \eqref{eq-cl-b4} to learn 
\[
\pa_t |\pa_{x_0}^\a\pa_{\x_0}^\b x| \leq C\jap{x_0;t}^{-\m/2}\jap{x_0}^{-|\a|}
+2C\jap{x_0}^{-|\a|} \leq 3C\jap{x_0}^{-|\a|}.
\]
Integrating this, we conclude 
\begin{equation}\label{eq-cl-b6}
|\pa_{x_0}^\a\pa_{\x_0}^\b x| \leq C|t|\jap{x_0}^{-|\a|}\leq C\jap{x_0;t}\jap{x_0}^{-|\a|}.
\end{equation}
In particular, we have proved the induction step $(C_k)$ with $k=N$, and thus $(C_k)$ holds 
for all $k\geq 0$. We substitute $(C_N)$ to \eqref{eq-cl-b5}, and we learn 
\[
\pa_t |\pa_{x_0}^\a\pa_{\x_0}^\b \x| \leq C\jap{x_0;t}^{-1-\m}\jap{x_0}^{-|\a|},
\]
and then by integrating in $t$, we have 
\begin{equation}\label{eq-cl-b7}
|\pa_{x_0}^\a\pa_{\x_0}^\b\x|\leq C\jap{x_0}^{-|\a|-\m}.
\end{equation}

Now we suppose $\a\neq 0$, and consider each term in $R_1$ more carefully: 
\[
r_*=(\pa_\x \pa_x^\c\pa_\x^\d p) \prod_{i=1}^{d} 
\biggpare{\prod_{j=1}^{\c_i}(\pa_{x_0}^{a(i,j)}\pa_{\x_0}^{b(i,j)} x_i)
\prod_{k=1}^{\d_i} (\pa_{x_0}^{\tilde a(i,k)}\pa_{\x_0}^{\tilde b(i,k)}\x_i)}.
\]
If $\c= 0$, then $r_*$ contains derivatives of $\x$ in $x_0$, and thus we learn 
$r_*= O(\jap{x_0}^{-|\a|-\m})$, by virtue of \eqref{eq-cl-b7} and (i) of the lemma. If $\c\neq 0$, then 
$\pa_\x \pa_x^\c\pa_\x^\d p=O(\jap{x_0;t}^{-|\c|-\m})$, and we also improve the 
estimate to obtain $r_*=O(\jap{x_0}^{-|\a|-\m})$. Thus, if $\a\neq 0$, we have 
$R_1=O(\jap{x_0}^{-|\a|-\m})$, and we obtain, instead of \eqref{eq-cl-b4}, 
\[
 \pa_t |\pa_{x_0}^\a\pa_{\x_0}^\b x| \leq C\jap{x_0;t}^{-1-\m}|\pa_{x_0}^\a\pa_{\x_0}^\b x|
+C|\pa_{x_0}^\a\pa_{\x_0}^\b\x|+ C\jap{x_0}^{-|\a|-\m}.
\]
Now we substitute \eqref{eq-cl-b6} and \eqref{eq-cl-b7} to this inequality to learn 
\[
 \pa_t |\pa_{x_0}^\a\pa_{\x_0}^\b x| \leq C\jap{x_0;t}^{-\m}\jap{x_0}^{-|\a|}
 +C\jap{x_0}^{-|\a|-\m}\leq 2C\jap{x_0}^{-|\a|-\m}.
 \]
 Integrating this in $t$, we conclude $|\pa_{x_0}^\a\pa_{\x_0}^\b x|=O(\jap{x_0}^{-|\a|-\m}|t|)$. 
\end{proof}


\subsection{Classical mechanics in the interaction picture}
\label{subsection-cl-ip}

We now consider the evolution of 
\[
y(t)=x(x_0,\x_0;t)-\pa_\x \phi(t,\x(x_0,\x_0;t)), \quad \x(t)= \x(x_0,\x_0;t),
\]
where $\phi(t,x)$ is the solution to the Hamilton-Jacobi equation constructed in 
Subsection~\ref{subsection-HJ}. The classical Hamiltonian for the evolution is given by 
\begin{align*}
q(t,y,\x)&= p(y+\pa_\x\phi(t,\x),\x) - p(\pa_\x\phi(t,\x),\x)\\
&=V_R(y+\pa_\x\phi(t,\x),\x) - V_R(\pa_\x\phi(t,\x),\x).
\end{align*}
For the completeness, we verify that $q(t,y,\x)$ generate the evolution:

\begin{lem}\label{lem-cl-ipevo}
Let $y(t)$, $\x(t)$, $q(t,y,\x)$ as above. Then 
\begin{align*}
&\frac{d}{dt} y(t) = \pa_\x q(t,y(t),\x(t)), \quad \frac{d}{dt}\x(t) = -\pa_y q(t,y(t),\x(t)), \\
&y(0)=x_0,\quad \x(0)=\x_0.
\end{align*}
\end{lem}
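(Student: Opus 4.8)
The plan is to verify the two Hamilton equations by a direct chain-rule computation; the only nontrivial inputs are the Hamilton--Jacobi equation of Subsection~\ref{subsection-HJ} and the relation $y(t)+\pa_\x\phi(t,\x(t))=x(t)$ built into the definition of $y(t)$.

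First I would observe that the composition $\phi(t,\x(t))$ is meaningful: since $p(x(t),\x(t))=p(x_0,\x_0)$ is conserved and $|V_R|\leq\e_0$ on $\O_{I_5}$, the momentum $\x(t)$ stays in $\O_{I_4}^0$ for all $t$ (cf.\ Lemma~\ref{lem-cl-convexity} and \eqref{eq-potential-small}), so $\phi$ and its $\x$-derivatives may be evaluated along $\x(t)$. I would then differentiate the Hamilton--Jacobi equation $\pa_t\phi(t,\x)=p(\pa_\x\phi(t,\x),\x)$ in $\x$, obtaining
\[
\pa_t\pa_\x\phi(t,\x)=(\pa_\x\pa_\x\phi(t,\x))\,\pa_x p\bigpare{\pa_\x\phi(t,\x),\x}+\pa_\x p\bigpare{\pa_\x\phi(t,\x),\x},
\]
where $\pa_\x\pa_\x\phi$ is the Hessian of $\phi$ in $\x$. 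This is the one step that needs care, since one must track the matrix $\pa_\x\pa_\x\phi$ and use symmetry of mixed partials when commuting the $\x$-derivatives.

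Writing $P(t)=\pa_\x\phi(t,\x(t))$ for brevity and using $\dot x=\pa_\x p(x,\x)$, $\dot\x=-\pa_x p(x,\x)$ together with the displayed identity, the chain rule gives
\begin{align*}
\dot y &=\dot x-\pa_t\pa_\x\phi(t,\x)-(\pa_\x\pa_\x\phi(t,\x))\dot\x\\
&=\bigpare{\pa_\x p(x,\x)-\pa_\x p(P,\x)}+(\pa_\x\pa_\x\phi)\bigpare{\pa_x p(x,\x)-\pa_x p(P,\x)}.
\end{align*}
On the other hand, differentiating $q(t,y,\x)=p\bigpare{y+\pa_\x\phi(t,\x),\x}-p\bigpare{\pa_\x\phi(t,\x),\x}$ one finds $\pa_y q=\pa_x p\bigpare{y+\pa_\x\phi(t,\x),\x}$, and that $\pa_\x q(t,y,\x)$ has exactly the form of the right-hand side above with $x$ replaced everywhere by $y+\pa_\x\phi(t,\x)$. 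The crux is that $y(t)+\pa_\x\phi(t,\x(t))=x(t)$ along the trajectory, by the definition of $y(t)$; hence $\pa_\x q(t,y(t),\x(t))$ equals the expression just found for $\dot y$, and $\pa_y q(t,y(t),\x(t))=\pa_x p(x(t),\x(t))=-\dot\x(t)$, which are the two asserted Hamilton equations. Finally, $\phi(0,\x)\equiv0$ forces $\pa_\x\phi(0,\x)\equiv0$, so $y(0)=x(0)=x_0$, while $\x(0)=\x_0$ by construction. The main (and essentially only) obstacle is the bookkeeping in differentiating the Hamilton--Jacobi equation and recognizing the cancellation produced by the identity $y+\pa_\x\phi=x$; no estimates are needed.
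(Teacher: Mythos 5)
Your proof is correct and follows essentially the same route as the paper's: differentiate the Hamilton--Jacobi equation in $\x$, apply the chain rule to $y(t)=x(t)-\pa_\x\phi(t,\x(t))$, and use the identity $y(t)+\pa_\x\phi(t,\x(t))=x(t)$ to recognize the result as $\pa_\x q(t,y(t),\x(t))$ and $\pa_y q(t,y(t),\x(t))=\pa_x p(x(t),\x(t))=-\dot\x(t)$, with the initial condition from $\phi(0,\x)=0$. The only differences are cosmetic (vector/matrix notation and the explicit remark on the domain of $\x(t)$ versus the paper's componentwise computation).
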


\begin{proof}
For $j=1,\dots,d$, we compute 
\begin{align*}
\frac{d}{dt}y_j(t)  &= \frac{d}{dt}x_j(t) -(\pa_{\x_j}\pa_t\phi)(t,\x(t)) 
- \sum_k (\pa_{\x_k} \pa_{\x_j} \phi)(t,\x(t))\frac{d}{dt}\x_k(t) \\
&= (\pa_{\x_j}p)(y+\pa_\x\phi,\x) + \sum_k (\pa_{\x_j} \pa_{\x_k} \phi)(t,\x(t))
(\pa_{x_k}p)(y+\pa_\x\phi,\x) -\pa_{\x_j} (p(\pa_\x\phi(t,\x),\x))\\
&=\pa_{\x_j}\bigpare{p(y+\pa_\x\phi(t,\x),\x)} - \pa_{\x_j} (p(\pa_\x\phi(t,\x),\x))\\
&=\pa_{\x_j} q(t,y,\x).
\end{align*}
Similarly, we have 
\begin{align*}
\frac{d}{dt}\x_j(t) &= -(\pa_{x_j} p)(y+\pa_\x\phi(t,\x),\x) \\
&= -\pa_y \bigpare{p(y+\pa_\x\phi(t,\x),\x)}= -\pa_y q(t,y,\x).
\end{align*}
The initial condition is obvious from the definition. 
\end{proof}

The existence of the classical long-range scattering is well-known, but we write 
it down for the completeness, and also for the later reference. 

\begin{lem}\label{lem-cl-lrscat}
For each $(x_0,\x_0)\in \O_{I_5}$, the limits
\[
x_\pm =\lim_{t\to\pm\infty} y(t), \quad \x_\pm=\lim_{t\to\pm\infty}\x(t)
\]
exist, and $\x_\pm\in \O_{I_4}^0$. 
\end{lem}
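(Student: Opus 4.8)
The plan is to run the evolution equations of Lemma~\ref{lem-cl-ipevo} and to show that $\dot\x(t)$ and $\dot y(t)$ are absolutely integrable on $[0,\infty)$ and on $(-\infty,0]$; the one-sided limits then exist, and the asymptotic momenta inherit their location from conservation of energy.

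\emph{The momentum.} I would first dispose of $\x(t)$. Since $p_0$ is $x$-independent, the second equation is just the original one, $\dot\x(t)=-\pa_x V_R(x(t),\x(t))$. By Lemma~\ref{lem-cl-convexity} the trajectory is nontrapping, so $\abs{x(t)}\ge c\jap t$ for $\abs t$ large; as $V\in S(\jap x^{-\m}\jap\x^m,g)$ and $\x(t)$ stays in a fixed bounded set, this gives $\abs{\dot\x(t)}\le C\jap{x(t)}^{-1-\m}\le C\jap t^{-1-\m}$, which is integrable. Hence $\x_\pm=\lim_{t\to\pm\infty}\x(t)$ exist, with $\abs{\x(t)-\x_\pm}\le C\jap t^{-\m}$. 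Energy is conserved, $p(x(t),\x(t))\equiv p(x_0,\x_0)$, and $V_R(x(t),\x(t))\to0$ as $\abs{x(t)}\to\infty$, so $p_0(\x_\pm)=p(x_0,\x_0)$; this places $\x_\pm$ in the energy shell $\O_{I_4}^0$.

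\emph{The position.} The substantive part is the convergence of $y(t)$, where $\dot y(t)=\pa_\x q(t,y(t),\x(t))$ with $q(t,y,\x)=V_R(y+\pa_\x\phi(t,\x),\x)-V_R(\pa_\x\phi(t,\x),\x)$ and $y(t)+\pa_\x\phi(t,\x(t))=x(t)$. Expanding $\pa_\x q$ by the Leibniz rule and writing each difference of $\pa_x^\gamma V_R$-values as an integral of its gradient along the segment $\theta_s(t)=x(t)-(1-s)y(t)$, $s\in[0,1]$, yields
\[
\bigabs{\dot y(t)}\le C\bigabs{y(t)}\Bigpare{\sup_{0\le s\le1}\bigabs{\pa_x^2 V_R(\theta_s(t),\x(t))}\,\bigabs{\pa_\x^2\phi(t,\x(t))}+\sup_{0\le s\le1}\bigabs{\pa_x\pa_\x V_R(\theta_s(t),\x(t))}}.
\]
By Lemma~\ref{lem-cl-HJ1}, $\bigabs{\pa_\x^2\phi(t,\x(t))}\le C\jap t$, so the decay of $V_R$ is useful only if the segment stays at distance $\gtrsim\jap t$ from the origin. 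I would secure this by first establishing the a priori bound $\bigabs{y(t)}\le C\jap t^{1-\m}$: on one hand $x(t)=x_0+\int_0^t\bigpare{v(\x(s))+\pa_\x V_R(x(s),\x(s))}\,ds=tv(\x_\pm)+O(\jap t^{1-\m})$, using $\bigabs{v(\x(s))-v(\x_\pm)}\le C\jap s^{-\m}$; on the other Lemma~\ref{lem-cl-HJ2} gives $\pa_\x\phi(t,\x(t))=tv(\x(t))+O(\jap t^{1-\m})=tv(\x_\pm)+O(\jap t^{1-\m})$, whence $\bigabs{y(t)}=O(\jap t^{1-\m})$. Since $1-\m<1$ and $\bigabs{x(t)}\ge c\jap t$, for $\abs t\ge T$ we get $\bigabs{\theta_s(t)}\ge\bigabs{x(t)}-\bigabs{y(t)}\ge(c/2)\jap t$, and the displayed inequality becomes $\bigabs{\dot y(t)}\le C\jap t^{-1-\m}\bigabs{y(t)}$. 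Gronwall's inequality from $t=T$ then yields $\sup_t\bigabs{y(t)}<\infty$; feeding this back, $\bigabs{\dot y(t)}\le C\jap t^{-1-\m}$ is integrable, so $y(t)=y(0)+\int_0^t\dot y(s)\,ds$ converges as $t\to\pm\infty$. This furnishes $x_\pm$ and finishes the proof.

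\emph{Main obstacle.} The delicate step is the a priori sublinear control of $y(t)$. The mean-value remainder in $\pa_\x q$ involves $\pa_x^2 V_R$ at an intermediate point of the segment joining $\pa_\x\phi(t,\x(t))$ and $x(t)$, and only once one knows that $y(t)=o(\jap t)$ — equivalently, that these two points are essentially aligned and both of size $\sim\jap t$ — does the decay of $V_R$ beat the linear growth of $\pa_\x^2\phi$, so that the Gronwall argument closes. The rest is a routine assembly of Lemmas~\ref{lem-cl-convexity}, \ref{lem-cl-HJ1}, and \ref{lem-cl-HJ2}.
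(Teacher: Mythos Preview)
Your proof is correct and follows essentially the same route as the paper: first integrate $\dot\x(t)=-\pa_xV_R(x(t),\x(t))=O(\jap t^{-1-\m})$ to get $\x_\pm$, then obtain the sublinear a priori bound $|y(t)|=O(\jap t^{1-\m})$, use it to control the segment $\theta_s(t)=sy(t)+\pa_\x\phi(t,\x(t))$ so that the mean-value expansion of $\pa_\x q$ yields $|\dot y|\le C\jap t^{-1-\m}|y|$, apply Gronwall for boundedness, and feed back for integrability of $\dot y$. The only cosmetic difference is that the paper gets $|y(t)|=O(\jap t^{1-\m})$ by bounding $\dot y=\pa_\x q$ directly (each term being $O(\jap t^{-\m})$ since both $x(t)$ and $\pa_\x\phi(t,\x(t))$ have size $\gtrsim\jap t$), whereas you compute $x(t)=tv(\x_\pm)+O(\jap t^{1-\m})$ and $\pa_\x\phi(t,\x(t))=tv(\x_\pm)+O(\jap t^{1-\m})$ separately via Lemma~\ref{lem-cl-HJ2} and subtract.
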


\begin{proof}
We recall, by Lemma~\ref{lem-cl-convexity}, we have 
\[
|x(t)|\geq c|t|-c', \quad t\in\re,
\]
with some constants $c,c'>0$, and $\x(t)$ is uniformly bounded. 
Now we observe 
\[
|\pa_t \x(t)|\leq |\pa_x V_R(x(t),\x(t))|\leq C\jap{t}^{-1-\m}, \quad t\in\re, 
\]
and hence the limit 
\[
\lim_{t\to\pm\infty} \x(t) = \x_0-\int_0^{\pm\infty} \pa_x V_R(x(t),\x(t))dt
\]
exist. This also implies 
\[
|\x(t)-\x_\pm|\leq C\jap{t}^{-\m}, \quad \pm t>0.
\]

We note 
\begin{align*}
\frac{d}{dt}y(t) &= \pa_\x q(t;y,\x) \\
&= (\pa_\x V_R)(y+\pa_\x\phi(t,\x),\x) - (\pa_\x V_R)(\pa_\x\phi(t,\x),\x)\\
&\qquad +(\pa_\x\pa_\x\phi(t,\x)) \bigpare{\pa_x V_R(y+\pa_\x\phi(t,\x))-\pa_x V_R(\pa_\x\phi(t,\x))}.
\end{align*}
We also note $y(t)+\pa_\x\phi(t,\x)= x(t)$, and hence each term can be bounded using 
Lemma~\ref{lem-cl-convexity},  and we learn $\frac{d}{dt}y(t)=O(\jap{t}^{-\m})$. 
Integrating this in $t$, we have 
$y(t)=O(\jap{t}^{1-\m})$. 
In particular, by Lemma~\ref{lem-cl-HJ2} we have 
\[
\bigabs{s y(t)+\pa_\x\phi(t,\x(t))}\geq c\jap{t}, \quad |t|\gg 0, \ c>0, 
\]
uniformly for $s\in [0,1]$. 

Now by using 
\begin{align}
\frac{d}{dt}y_j(t) &= \frac{\pa}{\pa \x_j}\Bigpare{V_R(y+\pa_\x\phi(t,\x),\x) - V_R(\pa_\x\phi(t,\x),\x)}
\nonumber\\
&= \sum_k\int_0^1 y_k(\pa_{x_k} \pa_{\x_j} V_R)(sy+\pa_\x\phi(t,\x),\x)ds\nonumber\\
&\qquad +\sum_{k,\ell} \int_0^1 y_k (\pa_{\x_j}\pa_{\x_\ell}\phi)(t,\x) 
(\pa_{x_k}\pa_{x_\ell}V_R)(sy+\pa_\x\phi(t,\x),\x)ds, \label{eq-cl-c1}
\end{align}
we obtain 
\[
\frac{d}{dt}|y(t)|\leq C|y(t)|\jap{t}^{-1-\m}, \quad t\in\re.
\]
By Gronwall's inequality, we learn $|y(t)|$ is uniformly bounded. 
By substituting this boundedness to the above equation \eqref{eq-cl-c1} again, we have 
$\frac{d}{dt}y(t)=O(\jap{t}^{-\m-1})$, and hence $\frac{d}{dt}y(t)$ is integrable. 
This implies the convergence of $y(t)$ as $t\to\pm\infty$. 
\end{proof}

Now we consider uniform estimates for out-going/in-coming initial conditions. 
We first prepare a preliminary lemma. 

\begin{lem}\label{lem-cl-iptr1}
Let $\b\in (-1,1]$, and suppose $(x_0,\x_0)\in \O_{I_5,\pm}(\b)$. Then 
\begin{enumerate}
\renewcommand{\labelenumi}{{\rm (\roman{enumi})}}
\item There is $C_0>0$ such that 
\[
|\x(x_0,\x_0;t)-\x_0|\leq C_0\jap{x_0}^{-\m}, \quad \pm t\geq 0,
\]
uniformly in $(x_0,\x_0)$. In particular, $|\x_\pm-\x_0|=O(\jap{x_0}^{-\m})$. 
\item There is $C_1>0$ such that 
\[
|y(t)|\leq C_1\jap{x_0}, \quad \pm t>0.
\]
\end{enumerate}
\end{lem}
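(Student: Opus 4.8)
The plan is to obtain (i) by integrating Hamilton's equation for the momentum along the trajectory and inserting the out-going/in-coming lower bound on $|x(t)|$, and to obtain (ii) from a Gronwall estimate on the interaction-picture equation $\dot y=\pa_\x q$, using that this equation is homogeneous of degree one in $y$ and that $y(0)=x_0$.

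For (i): since $p_0$ does not depend on $x$, Hamilton's equation gives $\frac{d}{dt}\x(x_0,\x_0;t)=-\pa_xV_R(x(t),\x(t))$, hence
\[
\x(x_0,\x_0;t)-\x_0=-\int_0^t\pa_xV_R(x(s),\x(s))\,ds .
\]
By Assumption~\ref{ass-potential} and the boundedness of the momentum on $\O_{I_5}$ (established in the proof of Lemma~\ref{lem-cl-convexity}) one has $|\pa_xV_R(x(s),\x(s))|\le C\jap{x(s)}^{-1-\m}$. Under \eqref{eq-ogic}, combining Lemma~\ref{lem-cl-ogic1} (valid for $|x_0|\ge L$) with the convexity bound $|x(s)|\ge c|s|-c'$ of Lemma~\ref{lem-cl-convexity} (to cover $|x_0|<L$ on bounded time intervals) yields $\jap{x(s)}^{-1-\m}\le C\jap{x_0;s}^{-1-\m}$ uniformly for $\pm s\ge0$ and $(x_0,\x_0)\in\O_{I_5,\pm}(\b)$. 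Since $\int_0^{\pm\infty}\jap{x_0;s}^{-1-\m}\,ds\le C\jap{x_0}^{-\m}$, this gives the stated bound for $\x(t)$; letting $t\to\pm\infty$ (the limit exists by Lemma~\ref{lem-cl-lrscat}) gives $|\x_\pm-\x_0|=O(\jap{x_0}^{-\m})$, while $\x_\pm\in\O_{I_4}^0$ is part of Lemma~\ref{lem-cl-lrscat}.

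For (ii): I would start from the identity \eqref{eq-cl-c1} in the proof of Lemma~\ref{lem-cl-lrscat}, which expresses $\frac{d}{dt}y_j(t)=\sum_k a_{jk}(t)\,y_k(t)$ with
\[
|a_{jk}(t)|\le C\Bigpare{\sup_{\s\in[0,1]}\jap{z_\s(t)}^{-1-\m}+\bigabs{\pa_\x\pa_\x\phi(t,\x(t))}\sup_{\s\in[0,1]}\jap{z_\s(t)}^{-2-\m}},\qquad z_\s(t)=\s\,y(t)+\pa_\x\phi(t,\x(t)).
\]
By Lemma~\ref{lem-cl-HJ1} one has $\bigabs{\pa_\x\pa_\x\phi(t,\x)}\le C\jap{t}$; by Lemma~\ref{lem-cl-HJ2} (so that $\bigabs{\pa_\x\phi(t,\x)-tv(\x)}\le C\jap{t}^{1-\m}$) together with $|v|\ge c_4$ one has $\bigabs{\pa_\x\phi(t,\x(t))}\ge c\jap{t}$ for $|t|$ large; and $y(t)=O(\jap{t}^{1-\m})$ by Lemma~\ref{lem-cl-lrscat}. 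Hence $\jap{z_\s(t)}\ge c\jap{t}$ for $|t|\ge T_0$, uniformly in $\s$ and $(x_0,\x_0)$, so that $|a_{jk}(t)|\le C\jap{t}^{-1-\m}$ for $|t|\ge T_0$ and $|a_{jk}(t)|\le C_{T_0}$ for $|t|\le T_0$; consequently $\int_0^{\pm\infty}\sum_{j,k}|a_{jk}(t)|\,dt\le C_*<\infty$ uniformly on $\O_{I_5,\pm}(\b)$. Then $\bigabs{\frac{d}{dt}|y(t)|}\le\bigpare{\sum_{j,k}|a_{jk}(t)|}|y(t)|$, and Gronwall's inequality together with $|y(0)|=|x_0|$ gives $|y(t)|\le|x_0|\,e^{C_*}\le C_1\jap{x_0}$ for $\pm t>0$.

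The step requiring the most care is the uniformity of all constants over the non-compact region $\O_{I_5,\pm}(\b)$, above all for $|x_0|$ large: the decay $\jap{x_0}^{-\m}$ in (i) depends on the lower bound $|x(s)|\gtrsim\jap{x_0;s}$ holding with a constant independent of $(x_0,\x_0)$, and the growth $\jap{x_0}$ in (ii) depends on the uniform bound $\int_0^{\pm\infty}\sum_{j,k}|a_{jk}|\le C_*$, which in turn requires that $\phi(t,\cdot)$ and its $\x$-derivatives be defined and controlled along the trajectory momenta $\x(t)$. All of the needed inputs are already in place (Lemmas~\ref{lem-cl-ogic1}, \ref{lem-cl-convexity}, \ref{lem-cl-HJ1}, \ref{lem-cl-HJ2}, and the proof of Lemma~\ref{lem-cl-lrscat}), so once the bookkeeping of the uniform estimates is carried out the integration and Gronwall steps are routine.
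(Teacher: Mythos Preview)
Your argument for (i) is correct and matches the paper's proof.

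For (ii) there is a genuine gap. You invoke $y(t)=O(\jap{t}^{1-\m})$ from the proof of Lemma~\ref{lem-cl-lrscat} and treat it as a uniform bound in $(x_0,\x_0)$, but it is not: the uniform statement one actually gets by integrating $|\dot y(t)|\le C\jap{t}^{-\m}$ is
\[
|y(t)-x_0|\le C_2\jap{t}^{1-\m},\qquad\text{hence}\qquad |y(t)|\le |x_0|+C_2\jap{t}^{1-\m}.
\]
Because of the $|x_0|$ term, the inequality $|z_\s(t)|\ge|\pa_\x\phi(t,\x)|-|y(t)|\ge c\jap{t}$ only holds once $t\gtrsim |x_0|$; the threshold $T_0$ cannot be chosen independently of $x_0$. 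With $T_0\sim|x_0|$, your short-time bound $|a_{jk}(t)|\le C_{T_0}$ gives $\int_0^{T_0}|a_{jk}|\,dt\sim C_{T_0}|x_0|$, which is not uniformly bounded, and the Gronwall step no longer produces $|y(t)|\le C_1\jap{x_0}$.

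The paper's proof handles exactly this issue by splitting the time axis at $t=M|x_0|$: on $[0,M|x_0|]$ one uses only that $\pa_\x q$ is bounded to get the crude growth $|y(t)|\le |x_0|+C|t|\le C'|x_0|$; on $[M|x_0|,\infty)$ the bound $|sy(t)+\pa_\x\phi(t,\x)|\ge(c_0/2)|t|$ does hold (using $|y(t)|\le|x_0|+C_2\jap{t}^{1-\m}$), so \eqref{eq-cl-c1} yields $\frac{d}{dt}|y(t)|\le C\jap{t}^{-1-\m}|y(t)|$ and Gronwall from $t=M|x_0|$ closes the argument. Your linear-in-$y$ reformulation via $a_{jk}$ is fine, but it must be combined with this time splitting; once you insert the correct threshold $T_0=M|x_0|$ and replace the short-time Gronwall by the crude bound $|y(t)|\le|x_0|+C|t|$, the rest goes through.
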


\begin{proof}
We modify the proof of Lemma~\ref{lem-cl-lrscat}. By Lemma~\ref{lem-cl-ogic1}, we have 
\[
|\pa_t \x(t)| = |(\pa_x V_R)(x(t),\x(t))|\leq C\jap{x_0;t}^{-1-\m}, \quad t\in\re, 
(x_0,\x_0)\in \O_{I_5}. 
\]
By integrating this inequality in $t$, we conclude (i). 

In order to prove the claim (ii), we note that the statement is obvious for $x_0$ in a 
bounded set by virtue of Lemma~\ref{lem-cl-lrscat}. Hence, it suffices to consider it for $|x_0|\gg 0$. 
We also consider the case ``$+$'' only, since the other case is handled similarly. 
By the same argument as in the proof of Lemma~\ref{lem-cl-lrscat}, we have 
\begin{equation*}
|y(t)-x_0|=
|x(t)-x_0-\pa_\x\phi(t,\x(t))|\leq C_2\jap{t}^{1-\m}, \quad t\geq 0,
\end{equation*}
uniformly in $(x_0,\x_0)\in \O_{I_5,\pm}(\b)$. 
We recall that there is $c>0$ such that 
\[
|\pa_\x\phi(t,\x)|\geq c_0|t|, \quad t\in\re. 
\]
We choose $T_0>0$ so large that 
\[
C_2 \jap{T_0}^{1-\m}\leq (c_0/4)T_0.
\]
For the moment, we suppose $|x_0|\geq (c_0/4)T_0$ so that 
$t\geq (4/c_0)|x_0|$ implies $t\geq T_0$. 
Thus, if $t\geq (4/c_0)|x_0|$, then 
\begin{align*}
|sy(t)+\pa_\x\phi(t,\x)|&\geq |\pa_\x\phi(t,\x)|-|y(t)|\\
&\geq c_0|t|-(|x_0|+C_2\jap{t}^{1-\m}) \geq (c_0/2)|t|, 
\end{align*}
uniformly in $s\in[0,1]$. Hence, by \eqref{eq-cl-c1}, we learn
\begin{equation}\label{eq-cl-c2}
\frac{d}{dt}|y(t)|\leq C|y(t)|\jap{t}^{-1-\m}, \quad t\geq (4/c_0)|x_0|.
\end{equation}
On the other hand, since $\pa_\x q(t,y,\x)$ is bounded on 
$\tilde\O_{I_5}(t)=\normalset{(x+\pa_\x\phi(t,\x),\x)}{(x,\x)\in\O_{I_5}}$, we learn 
\[
|y(t)|\leq |x_0|+ C|t|, \quad t\in \re. 
\]
In particular, 
\[
|y(t)|\leq C'|x_0|, \quad t\in[0,(4/c_0)|x_0|].
\]
We now use Gronwall's inequality for  \eqref{eq-cl-c2} with the initial condition at $t=(4/c_0)|x_0|$ 
to conclude $|y(t)|\leq C|x_0|$ for all $t\geq 0$. 
\end{proof}

We now consider the derivatives of the evolution $(y(t),\x(t))$, i.e., 
for $\a,\b\in\ze_+^d$, we study the properties of $\pa_{x_0}^\a\pa_{\x_0}^\b y(t)$ 
and $\pa_{x_0}^\a\pa_{\x_0}^\b \x(t)$. 
We first prepare properties of $q(t,y,\x)$ along the classical trajectories. 

\begin{lem}\label{lem-cl-ipsym1}
Let $x_0$, $\x_0$, $y(t)$, $\x(t)$ as in Lemma~\ref{lem-cl-iptr1}. Then 
\begin{enumerate}
\renewcommand{\labelenumi}{{\rm (\roman{enumi})}}
\item For any $\b\in\ze_+^d$, there is $C_\b>0$ such that 
\[
\bigabs{(\pa_\x^\b q)(t;y(t),\x(t))}\leq C_\b \jap{x_0}\jap{x_0;t}^{-1}\jap{t}^{-\m},
\quad \pm t\geq 0.
\]
\item For any $\a,\b\in\ze_+^d$ with $\a\neq 0$, there is $C_{\a\b}>0$ such that 
\[
\bigabs{(\pa_y^\a \pa_\x^\b q)(t;y(t),\x(t))}\leq C_{\a\b} 
\jap{x_0;t}^{-\m-|\a|},
\quad \pm t\geq 0.
\]
\end{enumerate}
\end{lem}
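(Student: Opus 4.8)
The plan is to read off both bounds from the explicit formula
\[
q(t,y,\x)=V_R(y+\pa_\x\phi(t,\x),\x)-V_R(\pa_\x\phi(t,\x),\x),
\]
together with the symbol estimates $\bigabs{\pa_x^\a\pa_\x^\b V_R(x,\x)}\le C_{\a\b}\jap{x}^{-\m-|\a|}\jap{\x}^m$, the phase estimate $\pa_\x^\a\phi(t,\x)=O(\jap{t})$ of Lemma~\ref{lem-cl-HJ1} and the lower bound $\abs{\pa_\x\phi(t,\x)}\ge c_0|t|$ recalled in the proof of Lemma~\ref{lem-cl-iptr1}, the classical lower bound $\abs{x(t)}\ge c\jap{x_0;t}$ (valid for $(x_0,\x_0)\in\O_{I_5,\pm}(\b)$, $\pm t\ge0$, by Lemma~\ref{lem-cl-ogic1} and the discussion following it, in every case where $\jap{x_0;t}$ is not already bounded), the bound $\abs{y(t)}\le C_1\jap{x_0}$ of Lemma~\ref{lem-cl-iptr1}, and the boundedness of $\x(t)$. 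The one elementary observation driving everything is that, along the trajectory,
\[
y(t)+\pa_\x\phi(t,\x(t))=x(t),
\]
so that one of the two arguments of $V_R$ occurring in $q(t;y(t),\x(t))$ is exactly the ``large, out-going'' point $x(t)$, the other, $\pa_\x\phi(t,\x(t))$, has size $\ge c_0|t|$, and the two differ by $y(t)=O(\jap{x_0})$.

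For (ii), since $\a\neq0$ the derivative $\pa_y^\a$ kills the second term of $q$ and, by the chain rule, $\pa_y^\a\bigpare{V_R(y+\pa_\x\phi(t,\x),\x)}=(\pa_x^\a V_R)(y+\pa_\x\phi(t,\x),\x)$. Applying $\pa_\x^\b$ and the Fa\`a di Bruno formula in $\x$ produces a finite sum of terms
\[
(\pa_x^{\a+\c}\pa_\x^\d V_R)(y+\pa_\x\phi(t,\x),\x)\prod_{i=1}^{|\c|}\pa_\x^{\b^{(i)}}\pa_{\x_{m_i}}\phi(t,\x),\qquad \d+\textstyle\sum_i\b^{(i)}=\b,\ \b^{(i)}\ne0 .
\]
Evaluating at $y=y(t)$, $\x=\x(t)$ turns the argument of $V_R$ into $x(t)$; each $\phi$-factor is $O(\jap{t})$ by Lemma~\ref{lem-cl-HJ1} and $\jap{\x(t)}^m$ is bounded, so every term is $\le C\jap{x(t)}^{-\m-|\a|-|\c|}\jap{t}^{|\c|}\le C\jap{x_0;t}^{-\m-|\a|}$, using $\jap{x(t)}\ge c\jap{x_0;t}\ge c\jap{t}$ (the inequality being trivial when $\jap{x_0;t}$ is bounded). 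This gives (ii).

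For (i) the two terms of $q$ are not separated by $\pa_\x^\b$, so the difference structure must be used, and I would split according to the size of $\jap{x_0}$ relative to $\jap{t}$, with a small $\e>0$ chosen so that $\jap{x_0}\le\e\jap{t}$ forces $|t|$ large enough for $\abs{x(t)}\ge c\jap{t}$ (possible since $\jap{x_0}\ge1$). In the regime $\jap{x_0}\le\e\jap{t}$ (hence $\jap{x_0;t}\sim\jap{t}$) I write, by the fundamental theorem of calculus, $q=\sum_k y_k\int_0^1(\pa_{x_k}V_R)(sy+\pa_\x\phi(t,\x),\x)\,ds$; then $\pa_\x^\b q$ equals $\sum_k y_k$ times an $s$-integral of Fa\`a di Bruno terms $(\pa_{x_k}\pa_x^\c\pa_\x^\d V_R)(sy+\pa_\x\phi(t,\x),\x)\prod_i\pa_\x^{\b^{(i)}}\pa_{\x_{m_i}}\phi(t,\x)$, and along the trajectory the argument of $V_R$ equals $x(t)-(1-s)y(t)$. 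Since $\abs{y(t)}\le C_1\jap{x_0}\le C_1\e\jap{t}\le\tfrac12\abs{x(t)}$ for $\e$ small, this argument has size $\ge\tfrac12\abs{x(t)}\ge\tfrac c2\jap{x_0;t}$ for all $s\in[0,1]$, so every term is bounded by
\[
C\jap{x_0}\,\jap{t}^{|\c|}\,\jap{x_0;t}^{-\m-1-|\c|}\le C\jap{x_0}\jap{x_0;t}^{-\m-1}\le C\jap{x_0}\jap{x_0;t}^{-1}\jap{t}^{-\m}.
\]
In the complementary regime $\jap{x_0}>\e\jap{t}$ (hence $\jap{x_0;t}\sim\jap{x_0}$, so the target is $\sim\jap{t}^{-\m}$) I keep the two terms of $q$ separate: the Fa\`a di Bruno expansion of $\pa_\x^\b\bigpare{V_R(y+\pa_\x\phi(t,\x),\x)}$ at $(y(t),\x(t))$ is a sum of $(\pa_x^\c\pa_\x^\d V_R)(x(t),\x(t))\prod_i\pa_\x^{\b^{(i)}}\pa_{\x_{m_i}}\phi(t,\x(t))$, each $\le C\jap{x(t)}^{-\m-|\c|}\jap{t}^{|\c|}\le C\jap{t}^{-\m}$, and that of $\pa_\x^\b\bigpare{V_R(\pa_\x\phi(t,\x),\x)}$ at $\x(t)$ is a sum of $(\pa_x^\c\pa_\x^\d V_R)(\pa_\x\phi(t,\x(t)),\x(t))\prod_i\pa_\x^{\b^{(i)}}\pa_{\x_{m_i}}\phi(t,\x(t))$, each $\le C\jap{\pa_\x\phi(t,\x(t))}^{-\m-|\c|}\jap{t}^{|\c|}\le C\jap{t}^{-\m}$ by $\abs{\pa_\x\phi(t,\x(t))}\ge c_0|t|$ (the range $|t|\le1$ handled trivially by boundedness); hence $\bigabs{\pa_\x^\b q(t;y(t),\x(t))}\le C\jap{t}^{-\m}\le C\jap{x_0}\jap{x_0;t}^{-1}\jap{t}^{-\m}$ since $\jap{x_0}/\jap{x_0;t}\ge c$ there. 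Combining the two regimes proves (i).

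The only genuinely delicate point is (i): the asserted estimate is false for the naive bounds when $\jap{x_0}$ and $\jap{t}$ are comparable, and it is precisely the difference structure of $q$ — the two $V_R$-arguments lying within $O(\jap{x_0})$ of one another while both are of size $\gtrsim\jap{t}$ — that must be exploited, which is what the regime split achieves. Beyond that, the argument is bookkeeping with the multivariate chain rule and the symbol orders of $V_R$ and $\phi$; one also has to verify, as usual, that $\x(t)$ remains in the domain of $\phi(t,\cdot)$ (by energy conservation and smallness of $V_R$) and to recall from Lemma~\ref{lem-cl-ogic1} and its sequel that $\abs{x(t)}\ge c\jap{x_0;t}$ holds both when $|x_0|$ is large and, for bounded $|x_0|$, once $|t|$ is large — which between them cover every case in which $\jap{x_0;t}$ is not bounded.
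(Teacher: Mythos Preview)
Your proposal is correct and follows essentially the same approach as the paper. For (ii) you use the identical argument (the $\pa_y$-derivative kills the second term, Fa\`a di Bruno in $\x$, then $\jap{x(t)}\ge c\jap{x_0;t}$), and for (i) you split into the same two regimes---your threshold $\jap{x_0}\le\e\jap{t}$ is the paper's $t\ge M|x_0|$ with $M\sim 1/\e$---using the integral representation $q=\sum_k y_k\int_0^1(\pa_{x_k}V_R)(sy+\pa_\x\phi,\x)\,ds$ in the large-$t$ regime and the crude separate bounds in the small-$t$ regime; the only cosmetic difference is that in the large-$t$ regime you compare $|y(t)|$ against $|x(t)|$ while the paper compares it against $|\pa_\x\phi(t,\x)|$, which is equivalent there since these differ by $|y(t)|=O(\jap{x_0})$.
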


\begin{proof}
By direct computations, we have 
\begin{align}
\pa_\x^\b q(t,y,\x) 
&= \sum_* c_* \Bigpare{(\pa_y^\c \pa_\x^\d V_R)(y+\pa_\x\phi(t,\x),\x) 
-(\pa_y^\c \pa_\x^\d V_R)(\pa_\x\phi(t,\x),\x)} \times \nonumber \\
&\qquad \times \prod_{i=1}^d \prod_{j=1}^{\c_i} (\pa_\x^{\tilde\b(i,j)}\phi)(t,\x), 
\label{eq-cl-c3}
\end{align}
where the indices runs over $\c,\d,\tilde\b(i,j)\in \ze_+^d$, $i=1,\dots,d$, 
$j=1,\dots,\c_i$, such that 
$\tilde\b(i,j)\neq 0$ for $i=1,\dots,d$, $j=1,\dots,\d_i$, and 
\[
\sum_{i=1}^d\sum_{j=1}^{\c_i} \tilde\b(i,j)+\d =\b, 
\]
and $c_*$ denote suitable universal constants. 

If $\a\neq 0$, then we also have 
\[
\pa_y^\a\pa_\x^\b q(t,y,\x) 
= \sum_* c_* (\pa_y^{\a+\c} \pa_\x^\d V_R)(y+\pa_\x\phi(t,\x),\x) 
\prod_{i=1}^d \prod_{j=1}^{\c_i} (\pa_\x^{\tilde\b(i,j)}\phi)(t,\x), 
\]
with the same set of indices. If $\a\neq 0$, then the claim (ii)  follows easily from the above 
expression combined with Lemmas~\ref{lem-cl-HJ1}, \ref{lem-cl-ogic1}. It remains to show (i). 
We first consider the case $\a=\b=0$, i.e., bounds on $q(t,y,\x)$ itself. 
We consider the ``$+$'' case only. 

By Lemma~\ref{lem-cl-HJ2}, there is $c_0>0$ such that 
$|\pa_\x\phi(t,\x)|\geq c_0|t|$, 
uniformly in $t$ and $\x_0\in \O_{I_3}^0$. Also, let $C_1$ as in the previous lemma, i.e., 
$|y(t)|\leq C_1 |x_0|$ for any $t\in\re$. If $t\geq M|x_0|$ with $M=2C_1/c_0$, 
then 
\[
\bigabs{sy(t)+\pa_\x\phi(t,\x)} 
\geq |\pa_\x\phi(t,x)| - |y(t)| 
\geq c_0|t| - C_1|x_0| \geq (c_0/2)|t|,
\]
uniformly for $s\in[0,1]$. Combining this with
\[
q(t,y,\x) =\int_0^1 y(t)\cdot (\pa_x V_R)(sy(t)+\pa_\x\phi(t,\x),\x) ds,
\]
we learn 
\[
|q(t,y(t),\x(t))|\leq C|x_0|\jap{t}^{-1-\m} \leq C'|x_0|\jap{x_0;t}^{-1-\m},
\]
if $t>M|x_0|$. 
In the last inequality, we have used $|x_0|^2+|t|^2\leq (1+M^2)|t|^2$ and hence 
$\jap{t}^{-1}\leq (1+M^2)^{1/2}\jap{x_0;t}^{-1}$. 
On the other hand, if $0\leq t \leq M|x_0|$, then we have 
\begin{align*}
|q(t,y(t),\x(t))|&\leq C\jap{x(t)}^{-\m} +C\jap{t}^{-\m} \\
&\leq C\jap{x_0;t}\jap{x_0;t}^{-1}\jap{t}^{-\m}
\leq C' \jap{x_0}\jap{x_0;t}^{-1}\jap{t}^{-\m}.
\end{align*}
Combining these, we conclude 
\[
|q(t,y(t),\x(t))|\leq C\jap{x_0}\jap{x_0;t}^{-1}\jap{t}^{-\m}, 
\quad t\geq 0.
\]
Estimates for $\pa_\x^\a q(t,y,\x)$ is similar, though more terms are involved. 
Actually, for $t\geq M|x_0|$, we have, using \eqref{eq-cl-c3}, 
\[
|\pa_\x^\b q| \leq C|x_0| \sum_{j=1}^{|\b|} \jap{t}^{-1-\m-j} \jap{t}^j
\leq C|x_0|\jap{t}^{-1-\m}
\]
as before, where $j$ corresponds to $|\c|$ in \eqref{eq-cl-c3}. For $t\in [0, M|x_0|]$, we similarly have 
\begin{align*}
|\pa_\x^\b q| 
&\leq C \sum_{j=1}^{|\b|} \bigbra{\jap{x(t)}^{-\m-j}+\jap{t}^{-\m-j}} \jap{t}^j\\
&\leq C(\jap{x_0;t}^{-\m}+\jap{t}^{-\m}) \leq C'\jap{x_0}\jap{x_0;t}^{-1}\jap{t}^{-\m}
\end{align*}
as above. 
\end{proof}

In the following, the next combined estimates are sometimes useful. 
\begin{cor}\label{cor-cl-ipsym2}
Let $x_0$, $\x_0$, $y(t)$, $\x(t)$ as in Lemma~\ref{lem-cl-iptr1}. 
Then for any $\a,\b\in\ze_+^d$, there is $C_{\a>\b}0$ such that 
\[
\bigabs{(\pa_y^\a \pa_\x^\b q)(t;y(t),\x(t))}\leq C_{\a\b} \jap{x_0}^{1-|\a|}\jap{x_0;t}^{-1}\jap{t}^{-\m},
\quad \pm t\geq 0.
\]
\end{cor}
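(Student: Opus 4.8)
The plan is to derive Corollary~\ref{cor-cl-ipsym2} directly from Lemma~\ref{lem-cl-ipsym1} by interpolating between the two cases $\a=0$ and $\a\neq 0$, so essentially no new classical-mechanics input is needed. The two bounds to be combined are: part (i) of Lemma~\ref{lem-cl-ipsym1},
\[
\bigabs{(\pa_\x^\b q)(t;y(t),\x(t))}\leq C_\b \jap{x_0}\jap{x_0;t}^{-1}\jap{t}^{-\m},
\]
which is exactly the claimed estimate for $\a=0$; and part (ii),
\[
\bigabs{(\pa_y^\a\pa_\x^\b q)(t;y(t),\x(t))}\leq C_{\a\b}\jap{x_0;t}^{-\m-|\a|}\quad(\a\neq 0).
\]
So for $\a=0$ there is nothing to prove. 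The only work is to show that for $\a\neq 0$ the bound $\jap{x_0;t}^{-\m-|\a|}$ implies $\jap{x_0}^{1-|\a|}\jap{x_0;t}^{-1}\jap{t}^{-\m}$.

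First I would record the elementary inequalities relating the three ``weights'' $\jap{x_0}$, $\jap{t}$ and $\jap{x_0;t}=(1+|x_0|^2+t^2)^{1/2}$: namely $\jap{x_0}\leq \jap{x_0;t}$, $\jap{t}\leq\jap{x_0;t}$, and $\jap{x_0;t}\leq C\jap{x_0}\jap{t}$ (or even $\jap{x_0;t}^2\leq \jap{x_0}^2\jap{t}^2$ up to constants, which is immediate from $|x_0|^2+t^2\leq (1+|x_0|^2)(1+t^2)$). From these, since $|\a|\geq 1$ so that $|\a|-1\geq 0$,
\[
\jap{x_0;t}^{-\m-|\a|}
=\jap{x_0;t}^{-(|\a|-1)}\cdot\jap{x_0;t}^{-1}\cdot\jap{x_0;t}^{-\m}
\leq \jap{x_0}^{-(|\a|-1)}\jap{x_0;t}^{-1}\jap{t}^{-\m},
\]
where in the last step I used $\jap{x_0;t}^{-(|\a|-1)}\leq\jap{x_0}^{-(|\a|-1)}$ (valid because $\jap{x_0}\leq\jap{x_0;t}$ and the exponent is $\geq 0$) and $\jap{x_0;t}^{-\m}\leq\jap{t}^{-\m}$. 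This is precisely $\jap{x_0}^{1-|\a|}\jap{x_0;t}^{-1}\jap{t}^{-\m}$, so the corollary follows with $C_{\a\b}$ taken to be the larger of the two constants from Lemma~\ref{lem-cl-ipsym1}(i),(ii). (I would also note the apparent typo ``$C_{\a>\b}0$'' in the statement should read ``$C_{\a\b}>0$''.)

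There is no real obstacle here: the statement is a clean interpolation/weight-juggling consequence of the preceding lemma, and the only thing to be slightly careful about is that the exponent $|\a|-1$ is nonnegative, which is exactly why the hypothesis splits at $\a=0$ versus $\a\neq 0$; the $\a=0$ case is handled separately (and trivially, being identical to part (i)). So the ``hard part'' is purely bookkeeping — making sure the three weight functions are compared in the right direction — and I would present it in two or three lines as above rather than as a lengthy computation.
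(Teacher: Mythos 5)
Your argument is correct and is essentially identical to the paper's own proof: the case $\a=0$ is exactly Lemma~\ref{lem-cl-ipsym1}(i), and for $\a\neq 0$ the paper performs the same factorization $\jap{x_0;t}^{-\m-|\a|}=\jap{x_0;t}^{1-|\a|}\jap{x_0;t}^{-1}\jap{x_0;t}^{-\m}\leq\jap{x_0}^{1-|\a|}\jap{x_0;t}^{-1}\jap{t}^{-\m}$, using $1-|\a|\leq 0$ and $\m>0$. Your observation about the typo ``$C_{\a>\b}0$'' is also correct; nothing further is needed.
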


\begin{proof}
If $\a=0$, the claim is the same as (i) in Lemma~\ref{lem-cl-ipsym1}. If $\a\neq 0$, then 
\[
\jap{x_0;t}^{-\m-|\a|} = \jap{x_0;t}^{1-|\a|} \jap{x_0;t}^{-1}\jap{x_0;t}^{-\m}
\leq \jap{x_0}^{1-|\a|}\jap{x_0;t}^{-1}\jap{t}^{-\m},
\]
since $1-|\a|\leq 0$, $\m>0$. Thus the claim follows from (ii) of Lemma~\ref{lem-cl-ipsym1}.
\end{proof}

We also use the following elementary estimate repeatedly. 

\begin{lem}\label{lem-cl-elementary}
Let $0<\m<1$. Then there is $C>0$ such that 
\[
\int_0^\infty \jap{a;t}^{-1}\jap{t}^{-\m}dt \leq C\jap{a}^{-\m}, \quad a\geq 0. 
\]
\end{lem}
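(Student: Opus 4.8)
The plan is to split the integral at $t=\jap{a}$ and use the two elementary bounds $\jap{a;t}\geq\jap{a}$ and $\jap{a;t}\geq t$, both immediate from the definition $\jap{a;t}=(1+|a|^2+t^2)^{1/2}$, together with $\jap{t}\geq t$ and $\jap{t}\geq 1$.

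First I would treat the range $0\leq t\leq\jap{a}$. There we have $\jap{a;t}\geq\jap{a}$, so
\[
\int_0^{\jap{a}}\jap{a;t}^{-1}\jap{t}^{-\m}\,dt\leq\jap{a}^{-1}\int_0^{\jap{a}}\jap{t}^{-\m}\,dt.
\]
Since $\m<1$, the remaining integral is bounded by $C\jap{a}^{1-\m}$ (split at $t=1$: on $[0,1]$ the integrand is $\leq 1$, and on $[1,\jap{a}]$ use $\jap{t}^{-\m}\leq t^{-\m}$, whose integral is $\leq\tfrac{1}{1-\m}\jap{a}^{1-\m}$). Hence this piece is $\leq C\jap{a}^{-\m}$ with $C$ depending only on $\m$.

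Second, on the range $t\geq\jap{a}\ (\geq 1)$ I would use $\jap{a;t}\geq t$ and $\jap{t}\geq t$ to get
\[
\int_{\jap{a}}^\infty\jap{a;t}^{-1}\jap{t}^{-\m}\,dt\leq\int_{\jap{a}}^\infty t^{-1-\m}\,dt=\frac{1}{\m}\,\jap{a}^{-\m},
\]
which converges precisely because $\m>0$. Adding the two estimates yields the claim.

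There is essentially no obstacle here; the hypothesis $\m<1$ is needed only for the bound $\int_0^{\jap{a}}\jap{t}^{-\m}\,dt\leq C\jap{a}^{1-\m}$ in the first piece, and $\m>0$ only for the convergence of $\int_1^\infty t^{-1-\m}\,dt$ in the second. One could equivalently rescale $t=\jap{a}s$ to reduce to the case $a=0$ (modulo replacing $\jap{t}$ by $\jap{s}$), but the direct splitting above is the shortest route.
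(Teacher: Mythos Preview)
Your proof is correct. Both pieces are handled cleanly, and you identify exactly where each endpoint of the hypothesis $0<\m<1$ enters.

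The paper takes a different route: instead of splitting the $t$-integral at $t=\jap{a}$, it splits into the cases $a\geq 1$ and $0<a\leq 1$. For $a\geq 1$ it uses the bounds $\jap{a;t}\geq(a+t)/\sqrt{2}$ and $\jap{t}\geq t$ to reduce to $\int_0^\infty (a+t)^{-1}t^{-\m}\,dt$, and then substitutes $t=as$ to pull out the factor $a^{-\m}$ against the fixed convergent integral $\int_0^\infty(1+s)^{-1}s^{-\m}\,ds$; for $a\leq 1$ it simply bounds by $\int_0^\infty\jap{t}^{-1-\m}\,dt$. Your splitting-in-$t$ argument has the minor advantage of treating all $a\geq 0$ uniformly without a separate small-$a$ case, while the paper's scaling argument makes the homogeneity of the estimate more transparent (and, incidentally, is exactly the alternative you mention in your last sentence). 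Both are equally elementary.
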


\begin{proof}
Suppose $a\geq 1$. Then 
\begin{align*}
\int_0^\infty \jap{a;t}^{-1}\jap{t}^{-\m} dt
&\leq \sqrt{2} \int_0^\infty (a+t)^{-1}t^{-\m}dt \\
&= \sqrt{2} \int_0^\infty (1+s)^{-1}(as)^{-\m} ds\\
& = ca^{-\m}\leq 2^{\m/2} c\jap{a}^{-\m}
\end{align*}
with a constant $c>0$. If $0<a\leq 1$, then 
\[
\int_0^\infty \jap{a;t}^{-1}\jap{t}^{-\m}dt \leq \int_0^\infty\jap{t}^{-1-\m}dt \leq C
\leq \sqrt{2} C \jap{a}^{-\m}. 
\]
\end{proof}

\begin{lem} \label{lem-cl-iptr2}
\begin{enumerate}
\renewcommand{\labelenumi}{{\rm (\roman{enumi})}}
\item There is a constant $C>0$ such that 
\begin{align*}
& |\pa_{x_0} y(t)|\leq C, \quad 
|\pa_{x_0} \x(t)|\leq C \jap{{x_0}}^{-1-\m}, \\
&|\pa_{\x_0} y(t)|\leq C\jap{x_0}^{1-\m}, \quad |\pa_{\x_0} \x(t)|\leq C,
\end{align*}
uniformly for $\pm t\geq 0$, $({x_0},{\x_0})\in \O_{I_5,\pm}(\b)$. 
Moreover, 
\[
|\pa_{x_0}y(t)-\mathrm{E}|\leq C\jap{x_0}^{-\m}, 
\quad |\pa_{\x_0}\x(t)-\mathrm{E}|\leq C\jap{x_0}^{-\m}.
\]
\item For $\a,\b\in\ze_+^d$, $|\a+\b|\geq 2$, there is $C_{\a\b}>0$ such that 
\[
|\pa_{x_0}^\a\pa_{\x_0}^\b y(t)|\leq C_{\a\b} \jap{{x_0}}^{1-|\a|-\m}, 
\quad |\pa_{x_0}^\a \pa_{\x_0}^\b \x(t)|\leq C_{\a\b}\jap{{x_0}}^{-|\a|-\m},
\]
uniformly for $\pm t\geq 0$, $({x_0},{\x_0})\in \O_{I_5,\pm}(\b)$. 
\end{enumerate}
\end{lem}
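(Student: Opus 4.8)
The plan is to differentiate the Hamilton equations of Lemma~\ref{lem-cl-ipevo} along the trajectory and run a Gronwall argument, exactly parallel to the proof of Lemma~\ref{lem-cl-ogic2}, but feeding in the decay of $q$ along trajectories furnished by Lemma~\ref{lem-cl-ipsym1} and Corollary~\ref{cor-cl-ipsym2}; it suffices to treat $t\ge0$, $(x_0,\x_0)\in\O_{I_5,+}(\b)$, the other sign being symmetric. For a first-order multi-index $\a$ (or $\b$) one has the linear system
\begin{align*}
&\pa_t(\pa_{x_0}^\a\pa_{\x_0}^\b y)=(\pa_\x\pa_y q)\,\pa_{x_0}^\a\pa_{\x_0}^\b y+(\pa_\x\pa_\x q)\,\pa_{x_0}^\a\pa_{\x_0}^\b\x, \\
&\pa_t(\pa_{x_0}^\a\pa_{\x_0}^\b \x)=-(\pa_y\pa_y q)\,\pa_{x_0}^\a\pa_{\x_0}^\b y-(\pa_y\pa_\x q)\,\pa_{x_0}^\a\pa_{\x_0}^\b\x,
\end{align*}
with data $(\mathrm{E},0)$ when $\a$ is of first order and $(0,\mathrm{E})$ when $\b$ is of first order, where along the trajectory $|\pa_\x\pa_\x q|\le C\jap{x_0}\jap{x_0;t}^{-1}\jap{t}^{-\m}$, $|\pa_\x\pa_y q|\le C\jap{x_0;t}^{-1-\m}$, $|\pa_y\pa_y q|\le C\jap{x_0;t}^{-2-\m}$ by Lemma~\ref{lem-cl-ipsym1}.

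For (i) I would argue in three steps. First a crude estimate: since $\jap{x_0}\jap{x_0;t}^{-1}\jap{t}^{-\m}\le C$, the quantity $G=\jap{x_0;t}^{-1-\m/2}|\pa_{x_0}^\a\pa_{\x_0}^\b y|+|\pa_{x_0}^\a\pa_{\x_0}^\b\x|$ obeys, using $\pa_t\jap{x_0;t}^{-1-\m/2}\le0$, the same closed inequality $\pa_t G\le C\jap{x_0;t}^{-1-\m/2}G$ as in the proof of Lemma~\ref{lem-cl-ogic2}, so Gronwall and $\int_0^\infty\jap{x_0;t}^{-1-\m/2}dt<\infty$ give $G(t)\le CG(0)$, hence $|\pa_{x_0}\x|\le C\jap{x_0}^{-1-\m/2}$, $|\pa_{\x_0}\x|\le C$, $|\pa_{x_0}y|\le C\jap{x_0}^{-1-\m/2}\jap{x_0;t}^{1+\m/2}$, $|\pa_{\x_0}y|\le C\jap{x_0;t}^{1+\m/2}$. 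Second, sharpening the momentum derivatives: feeding these crude bounds into the $\x$-equation — whose coupling terms carry the gains $\jap{x_0;t}^{-2-\m}$ and $\jap{x_0;t}^{-1-\m}$ — and applying a scalar Gronwall (the exponential factor is bounded since $\int_0^\infty\jap{x_0;t}^{-1-\m}dt\le C\jap{x_0}^{-\m}$, and the inhomogeneity integrates to $O(\jap{x_0}^{-1-\m})$) yields $|\pa_{x_0}\x|\le C\jap{x_0}^{-1-\m}$, $|\pa_{\x_0}\x|\le C$. Third, sharpening the position derivatives: reinserting the sharp $\x$-bounds into the $y$-equation, the coupling term $(\pa_\x\pa_\x q)\pa_{x_0}^\a\pa_{\x_0}^\b\x$ becomes $O(\jap{x_0}^{-\m}\jap{x_0;t}^{-1}\jap{t}^{-\m})$ for $\pa_{x_0}$ and $O(\jap{x_0}\jap{x_0;t}^{-1}\jap{t}^{-\m})$ for $\pa_{\x_0}$, with time-integral $O(\jap{x_0}^{-\m})$ resp. $O(\jap{x_0}^{1-\m})$ by Lemma~\ref{lem-cl-elementary}; a scalar Gronwall then gives $|\pa_{x_0}y|\le C$, $|\pa_{\x_0}y|\le C\jap{x_0}^{1-\m}$, and integrating $\pa_t(\pa_{x_0}y)$, resp. $\pa_t(\pa_{\x_0}\x)$, once more produces the two estimates with $\mathrm{E}$ subtracted.

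For (ii) I would induct on $N=|\a+\b|\ge2$, assuming the sharp statements of (i) and of (ii) for all orders in $[2,N-1]$. Differentiating the Hamilton equations and expanding as in the proof of Lemma~\ref{lem-cl-ogic2}(ii), $\pa_t(\pa_{x_0}^\a\pa_{\x_0}^\b y)$ equals the same two linear terms plus a remainder $R_1$ in which a mixed derivative $(\pa_y^\c\pa_\x^\d q)(t;y(t),\x(t))$ of order $\ge2$ multiplies a product of lower-order derivatives of the components $y_i(t),\x_i(t)$ in $(x_0,\x_0)$ whose orders sum to $(\a,\b)$, and likewise $\pa_t(\pa_{x_0}^\a\pa_{\x_0}^\b\x)$ carries a remainder $R_2$. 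Bounding every such coefficient by Corollary~\ref{cor-cl-ipsym2} (which gives $|\pa_y^\c\pa_\x^\d q|\le C\jap{x_0}^{1-|\c|}\jap{x_0;t}^{-1}\jap{t}^{-\m}$ for all $\c,\d$) and every lower-order factor by the induction hypothesis — which supplies $|\pa_{x_0}^{a}\pa_{\x_0}^{b}y|\le C\jap{x_0}^{1-|a|}$, $|\pa_{x_0}^{a}\pa_{\x_0}^{b}\x|\le C\jap{x_0}^{-|a|}$ for $1\le|a+b|<N$ — one finds $|R_1|\le C\jap{x_0}^{1-|\a|}\jap{x_0;t}^{-1}\jap{t}^{-\m}$ and $|R_2|\le C\jap{x_0}^{-|\a|}\jap{x_0;t}^{-1}\jap{t}^{-\m}$, i.e. $R_1,R_2$ have exactly the orders of the order-$1$ inhomogeneities scaled by $\jap{x_0}^{-|\a|}$, while the initial data vanish since $N\ge2$. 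Thus the three-step scheme of (i) applies with $R_1,R_2$ handled just like the order-$1$ inhomogeneities (their time-integrals again controlled by Lemma~\ref{lem-cl-elementary}), and yields $|\pa_{x_0}^\a\pa_{\x_0}^\b y|\le C\jap{x_0}^{1-|\a|-\m}$ and $|\pa_{x_0}^\a\pa_{\x_0}^\b\x|\le C\jap{x_0}^{-|\a|-\m}$, closing the induction.

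The only real departure from Lemma~\ref{lem-cl-ogic2} — and the point to get right — is that the coupling coefficient $\pa_\x\pa_\x q$ decays merely like $\jap{x_0}\jap{x_0;t}^{-1}\jap{t}^{-\m}$, and $\jap{t}^{-\m}$ is not time-integrable for $\m<1$, so a naive Gronwall cannot bound $\pa_{x_0}y$ uniformly in $t$ (nor bound $\pa_{\x_0}y$ by $\jap{x_0}^{1-\m}$). The bootstrap circumvents this: one first extracts the $\jap{x_0}^{-\m}$-decay of the momentum derivatives — the mixed estimate $\int_0^\infty\jap{x_0;t}^{-1}\jap{t}^{-\m}dt\le C\jap{x_0}^{-\m}$ of Lemma~\ref{lem-cl-elementary} being what is actually used — and only afterwards reinserts it, so that the coupling term acquires the extra factor $\jap{x_0}^{-\m}$ that makes it harmless. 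A secondary point is that the worst terms of $R_1$ are precisely of main-term order, so the sharp, not merely crude, lower-order bounds are needed to close the induction.
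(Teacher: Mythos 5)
Your argument is correct and follows essentially the same route as the paper: differentiate the interaction-picture Hamilton equations, bound the coefficients along the trajectory via Lemma~\ref{lem-cl-ipsym1} and Corollary~\ref{cor-cl-ipsym2}, and close a Gronwall bootstrap in which Lemma~\ref{lem-cl-elementary} supplies the crucial bound $\int_0^\infty\jap{x_0;t}^{-1}\jap{t}^{-\m}\,dt\leq C\jap{x_0}^{-\m}$ that compensates for the non-integrable factor $\jap{t}^{-\m}$. The only (harmless) difference is that the paper never re-derives the $\x$-derivative estimates: since $\x(t)$ is the same curve as in the original picture, it simply recalls them from Lemma~\ref{lem-cl-ogic2}, which renders your preliminary crude step and the $\x$-sharpening step unnecessary and reduces the induction in (ii) to the $y$-component alone (with the weaker hypothesis $(D_k)$), whereas you reprove those bounds inside the coupled system.
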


\begin{proof}
We recall that estimates for $\x(t)$ have already proved in Lemma~\ref{lem-cl-ogic2}. 
As before, we consider the ``$+$''-case only. 

(i) By the Hamilton equation, we have 
\[
\pa_t(\pa_z y) = (\pa_y\pa_\x q)\pa_z y + (\pa_\x\pa_\x q)\pa_z \x, 
\]
where $z=x_0$ or $\x_0$. By Lemma~\ref{lem-cl-ipsym1} and Lemma~\ref{lem-cl-ogic2}, we learn 
\begin{align}
\pa_t|\pa_{x_0} y| &\leq 
C\jap{x_0;t}^{-1-\m}|\pa_{x_0}y| + C\jap{x_0}\jap{x_0;t}^{-1}\jap{t}^{-\m}\jap{x_0}^{-1-\m}
\label{eq-cl-c4}\\
&\leq C\jap{t}^{-1-\m} |\pa_{x_0}y| + C \jap{t}^{-1-\m} \jap{x_0}^{-\m}\nonumber
\end{align}
with $\pa_{x_0}y(0)=\mathrm{E}$. Then by Gronwall's inequality, 
we have $|\pa_{x_0}y(t)|\leq C$ uniformly in $t\geq 0$. 
Substituting this to \eqref{eq-cl-c4} again to learn 
\[
\pa_t|\pa_{x_0}y(t)|\leq C \jap{x_0;t}^{-1-\m}+C\jap{x_0}^{-\m}\jap{t}^{-1-\m},
\]
and this implies $|\pa_{x_0}y(t)-\mathrm{E}|\leq C \jap{x_0}^{-\m}$, 
thanks to Lemma~\ref{lem-cl-elementary}. 

Similarly, we have 
\begin{align*}
\pa_t|\pa_{\x_0} y| 
&\leq C \jap{x_0;t}^{-1-\m}|\pa_{\x_0}y|+ C\jap{x_0}\jap{x_0;t}^{-1}\jap{t}^{-\m}\\
&\leq C\jap{t}^{-1-\m}|\pa_{\x_0}y| + C \jap{x_0}\jap{x_0;t}^{-1}\jap{t}^{-\m}\nonumber 
\end{align*}
with $\pa_{x_0}y(0)=0$. 
Then by Gronwall's inequality and Lemma~\ref{lem-cl-elementary}, we learn
\[
|\pa_{\x_0}y(t)|\leq C \jap{x_0}^{1-\m}, \quad t\geq 0. 
\]
(ii) We prove the claim by induction in $k=|\a+\b|$. We use the induction hypothesis:  
\begin{quote}
${(D_k):}$ For $\a,\b\in\ze_+^d$, $|\a+\b|\leq k$, there is $C_{\a\b}>0$ 
such that 
\[
\bigabs{\pa_{x_0}^\a\pa_{\x_0}^\b y(t)}\leq C_{\a\b}\jap{x_0}^{1-|\a|},
\quad t\geq 0.
\]
uniformly for $(x_0,\x_0)\in \O_{I_5}$ such that $\cos(x_0,v(\x_0))\geq \b$. 
\end{quote}
We note $(D_0)$ is an immediate consequence of Lemma~\ref{lem-cl-iptr1}, 
and $(D_1)$ is proved in (i). We also recall 
\[
\bigabs{\pa_{x_0}^\a\pa_{\x_0}^\b \x(t)}\leq C'_{\a\b}\jap{x_0}^{-|\a|}
\]
with some constant $C_{\a\b}'>0$ for all $\a,\b\in \ze_+^d$ (including $\a=0$). 

Suppose $(D_{N-1})$ holds, and let $|\a+\b|=N$. 
As in the proof of Lemma~\ref{lem-cl-ogic2}, we use the Leibniz formula: 
\begin{align*}
\pa_t(\pa_{x_0}^\a\pa_{\x_0}^\b y) 
&= (\pa_\x\pa_y  q) \pa_{x_0}^\a\pa_{\x_0}^\b y + (\pa_\x\pa_\x q)\pa_{x_0}^\a\pa_{\x_0}^\b\x \\
& +\sum_* c_*(\pa_\x \pa_y^\c\pa_\x^\d q) \prod_{i=1}^{d} 
\biggpare{\prod_{j=1}^{\c_i}(\pa_{x_0}^{a(i,j)}\pa_{\x_0}^{b(i,j)} y_i)
\prod_{k=1}^{\d_i} (\pa_{x_0}^{\tilde a(i,k)}\pa_{\x_0}^{\tilde b(i,k)}\x_i)}, 
\end{align*}
where the last sum is taken over $\c,\d, a(i,j), \tilde a(i,j), b(i,k), \tilde b(i,k)\in\ze_+^d$ such that 
$|\c+\d|\geq 2$; $a(i,j)+b(i,j)\neq 0$ for $i=1,\dots,d$, $j=1,\dots,\c_i$;  
$\tilde a(i,k)+\tilde b(i,k)\neq 0$ for $i=1,\dots,d$, $k=1,\dots,\d_i$; and  
\[
\sum_{i=1}^d\biggpare{\sum_{j=1}^{\c_i}a(i,j) +\sum_{k=1}^{\d_i}\tilde a(i,k)} =\a, \quad 
\sum_{i=1}^d\biggpare{\sum_{j=1}^{\c_i}b(i,j) +\sum_{k=1}^{\d_i}\tilde b(i,k)} =\b.
\]
$c_*$ denotes suitable universal constant for each index. 
We denote the last term by $R_3$. 
By the induction hypothesis and Corollary~\ref{cor-cl-ipsym2}, we learn 
\[
|R_3|\leq C \jap{x_0}^{1-|\a|}\jap{x_0;t}^{-1}\jap{t}^{-\m}.
\]
Hence we have 
\begin{align*}
\pa_t\bigabs{\pa_{x_0}^\a\pa_{\x_0}^\b y}
&\leq C\jap{x_0;t}^{-1-\m}\bigabs{\pa_{x_0}^\a\pa_{\x_0}^\b y}
+C\jap{x_0} \jap{x_0;t}^{-1}\jap{t}^{-\m} \jap{x_0}^{-|\a|} \\
&\quad +C\jap{x_0}^{1-|\a|}\jap{x_0;t}^{-1}\jap{t}^{-\m}\\
&\leq C\jap{t}^{-1-\m}\bigabs{\pa_{x_0}^\a\pa_{\x_0}^\b y}
+C' \jap{x_0}^{1-|\a|}\jap{x_0;t}^{-1}\jap{t}^{-\m}
\end{align*}
with $\pa_{x_0}^\a\pa_{\x_0}^\b y(0)=0$. Thus by Gronwall's inequality and 
Lemma~\ref{lem-cl-elementary} again, we have 
\[
|\pa_{x_0}^\a\pa_{\x_0}^\b y(t)|\leq C\jap{x_0}^{1-|\a|-\m},
\]
which proves the induction step $(D_N)$, and also completes the proof. 
\end{proof}

\begin{cor}\label{cor-cl-iptr3}
Under the assumptions of Lemma~\ref{lem-cl-iptr1}, 
\[
|y(x_0,\x_0,t)-x_0|\leq C|x_0|^{1-\m}
\]
uniformly in $\pm t\geq 0$, $({x_0},{\x_0})\in \O_{I_5,\pm}(\b)$.
\end{cor}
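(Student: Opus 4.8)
The plan is to estimate $y(x_0,\x_0,t)-x_0$ by integrating the interaction-picture flow directly, feeding in the sharp decay of $q$ along trajectories obtained above. I would treat the ``$+$'' case, the ``$-$'' case being entirely symmetric, and recall from Lemma~\ref{lem-cl-ipevo} that $\frac{d}{dt}y(t)=(\pa_\x q)(t,y(t),\x(t))$ with $y(0)=x_0$, so that
\[
y(x_0,\x_0,t)-x_0=\int_0^t(\pa_\x q)(s,y(s),\x(s))\,ds,\qquad t\geq 0.
\]

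The next step is to bound the integrand. Applying Lemma~\ref{lem-cl-ipsym1}~(i) with $\b$ a unit multi-index gives
\[
\bigabs{(\pa_\x q)(s,y(s),\x(s))}\leq C\jap{x_0}\jap{x_0;s}^{-1}\jap{s}^{-\m},\qquad s\geq 0,
\]
uniformly for $(x_0,\x_0)\in\O_{I_5,+}(\b)$. Inserting this and invoking Lemma~\ref{lem-cl-elementary} with $a=|x_0|$ (note that $\jap{x_0;s}$ and $\jap{x_0}$ depend on $x_0$ only through $|x_0|$) yields
\[
\bigabs{y(x_0,\x_0,t)-x_0}\leq C\jap{x_0}\int_0^\infty\jap{x_0;s}^{-1}\jap{s}^{-\m}\,ds\leq C'\jap{x_0}^{1-\m},
\]
uniformly in $t\geq 0$ and $(x_0,\x_0)\in\O_{I_5,+}(\b)$. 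For $|x_0|\geq 1$ this already gives the asserted bound $\leq C''|x_0|^{1-\m}$; for $|x_0|$ bounded, $|y(t)-x_0|$ is itself uniformly bounded by Lemma~\ref{lem-cl-iptr1}, which together with the previous line settles the estimate in the stated form.

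There is essentially no obstacle here beyond bookkeeping: the genuine work has already been carried out in Lemma~\ref{lem-cl-ipsym1}, whose proof uses the out-going/in-coming geometry and the control of $\pa_\x\phi$ to produce the time factor $\jap{x_0;s}^{-1}\jap{s}^{-\m}$, which is integrable in $s$ at the cost of only the single power $\jap{x_0}$. The one small point requiring attention is that the time integration must be performed uniformly in $x_0$, which is precisely the content of the elementary Lemma~\ref{lem-cl-elementary}, and that the regime of small $|x_0|$ must be matched separately to the clean bound $\jap{x_0}^{1-\m}$.
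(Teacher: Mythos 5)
Your main estimate is correct and is in fact a different (and quite natural) route from the paper's: you integrate the interaction-picture equation $\tfrac{d}{dt}y=\pa_\x q(t,y(t),\x(t))$ in time, using Lemma~\ref{lem-cl-ipsym1}~(i) and Lemma~\ref{lem-cl-elementary}, whereas the paper integrates in the \emph{initial position}: it observes that $y(0,\x_0,t)\equiv 0$, writes
\[
y(x_0,\x_0,t)-x_0=\int_0^1 x_0\cdot\bigbra{(\pa_{x_0}y)(sx_0,\x_0,t)-\mathrm{E}}\,ds,
\]
and invokes the bound $\bigabs{\pa_{x_0}y-\mathrm{E}}\leq C\jap{x_0}^{-\m}$ from Lemma~\ref{lem-cl-iptr2}~(i). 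Your time-integration argument cleanly yields $\bigabs{y(t)-x_0}\leq C\jap{x_0}^{1-\m}$ uniformly, which is equivalent to the stated bound whenever $|x_0|$ is bounded below.

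The gap is in the regime $|x_0|\to 0$, which is not excluded from $\O_{I_5,\pm}(\b)$: the corollary asserts $\bigabs{y(t)-x_0}\leq C|x_0|^{1-\m}$, and since $1-\m>0$ the right-hand side tends to $0$ as $|x_0|\to 0$, so it is strictly stronger there than $C\jap{x_0}^{1-\m}$. Your patch for this regime --- that $|y(t)-x_0|$ is uniformly bounded for bounded $|x_0|$ by Lemma~\ref{lem-cl-iptr1} --- only gives a constant bound and therefore does not ``settle the estimate in the stated form'': a fixed constant does not dominate $C|x_0|^{1-\m}$ as $|x_0|\to 0$. To close this you need the additional input the paper uses, namely that the deviation vanishes identically at $x_0=0$ (indeed $\pa_\x\phi(t,\x(0,\x_0;t))=x(0,\x_0;t)$, so $y(0,\x_0,t)=0$), together with the radial integration of $\pa_{x_0}y-\mathrm{E}$ above, which produces exactly the factor $|x_0|\int_0^1|sx_0|^{-\m}ds=C|x_0|^{1-\m}$ uniformly down to $x_0=0$. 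With that supplement (or by restricting the claim to $|x_0|\geq 1$, where your bound suffices), your argument is fine.
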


\begin{proof}
We note $y(0,\x_0,t)=0$ for all $t\in\re$ by the definition. 
Hence, 
\begin{align*}
|y(x_0,\x_0,t)-x_0|
&=\biggabs{\int_0^1 \frac{d}{ds} (y(sx_0,\x_0,t)-sx_0)ds}\\
&= \biggabs{\int_0^1 x_0\cdot \bigbra{(\pa_{x_0}y)(sx_0,\x_0,t)-\mathrm{E}}ds}\\
&\leq C|x_0|\int_0^1 \jap{sx_0}^{-\m} ds
\leq C |x_0|\int_0^1 |sx_0|^{-\m}ds \\
&= C|x_0|^{1-\m}\int_0^1s^{-\m}ds =C'|x_0|^{1-\m}.
\end{align*}
\end{proof}


\subsection{Solutions to Hamilton-Jacobi equation in the interaction picture}
\label{subsection-cl-ipHJ}

In this subsection, we construct a solution to the Hamilton-Jacobi equation in the interaction picture: 
\[
\pa_t\g(t,x_0,\x)= q(t,\pa_\x\g(t,x_0,\x),\x), \quad \g(0,x_0,\x)=x_0\cdot\x,
\]
and study its properties. We write, 
\[
\L_t^{x_0}\ :\ \x_0\mapsto \x(t,x_0,\x_0)\in\re^d 
\]
for $(x_0,\x_0)\in \O_{I_5}$ and $t\in\re$. 
By Lemma~\ref{lem-cl-convexity}, $\L_t^{x_0}$ is a diffeomorphism from $\O_{I_4}^0$ 
into a subset of $\O_{I_5}^0$, which contains $\O_{I_3}^0$ 
for each $t\in\re$, $x_0\in\re^d$, 
and the inverse has uniformly bounded Jacobian matrix on $\O_{I_3}^0$. 

We set
\[
\f(t,x_0,\x_0) = \int_0^t \bigbra{q(s,y(s),\x(s))-y(s)\cdot\pa_y q(s,y(s),\x(s))} ds +x_0\cdot\x_0,
\]
where 
\[
y(t)=y(x_0,\x_0;t), \quad \x(t)=\x(x_0,\x_0;t), \quad t\in\re.
\]
Then by the standard theory of the Hamilton-Jacobi equation, 
\[
\g(t,x_0,\x)= \f(t,x_0,(\L_t^{x_0})^{-1}(\x))
\]
satisfies the above Hamilton-Jacobi equation and the initial condition. 
We also recall that $\g(t,x,\x)$ is the generating function of the evolution 
(see, e.g., \cite{Arnold} \S47), 
\[
w_t\ :\ (x_0,\x_0)\mapsto (y(t),\x(t)), 
\]
namely, we have 
\[
w_t\ :\ \begin{pmatrix} x \\ \pa_x\g(t,x,\x) \end{pmatrix} \mapsto 
\begin{pmatrix} \pa_\x\g(t,x,\x) \\ \x \end{pmatrix}.
\]
Then, the conservation of the energy is expressed as 
\begin{equation}\label{interaction-picture-energy-conservation}
p(x,\pa_x\g(t,x,\x)) = p(\pa_\x\g(t,x,\x)+\pa_\x\phi(t,\x),\x), 
\end{equation}
provided $(x_0,\pa_x\g(t,x,\x))\in \O_{I_4}$. 

We show $\g(t,x,\x)$ is a good symbol on $\O_{I_3,\pm}(\b)$ 
for $\pm t\geq 0$, respectively. 

\begin{lem}\label{lem-cl-ipHJ1}
For $\a,\b\in\ze_+^d$, there is $C_{\a\b}>0$ such that 
\[
\bigabs{\pa_{x_0}^\a\pa_\x^\b \bigpare{(\L_t^{x_0})^{-1}(\x)}}\leq C_{\a\b}\jap{x_0}^{-|\a|},
\]
uniformly in $x_0,\x$, $\pm t\geq 0$, provided $\x=\L_t^{x_0}(\x_0)$ with $(x_0,\x_0)\in \O_{I_5,\pm}(\b)$. Moreover, 
\[
\bigabs{\pa_{x_0}^\a\pa_\x^\b \bigpare{(\L_t^{x_0})^{-1}(\x)-\x}}\leq C_{\a\b}\jap{x_0}^{-|\a|-\m}
\]
under the same conditions. 
\end{lem}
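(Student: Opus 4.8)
The plan is to treat $F(t,x_0,\x):=(\L_t^{x_0})^{-1}(\x)$ through the defining identity
\[
\x\bigpare{t,x_0,F(t,x_0,\x)}=\x
\]
and differentiate it, inducting on the order. By Lemma~\ref{lem-cl-convexity} together with the remarks opening this subsection, $\L_t^{x_0}$ is a diffeomorphism onto an open set containing $\O_{I_3}^0$ with $\det[\pa\x/\pa\x_0]\geq1/2$ there, so $F$ is a well-defined smooth function of $\x$; and when $\x=\L_t^{x_0}(\x_0)$ with $(x_0,\x_0)\in\O_{I_5,\pm}(\b)$ one has simply $F(t,x_0,\x)=\x_0$, so the argument $(t,x_0,F(t,x_0,\x))$ that appears after substitution is exactly the point along which Lemma~\ref{lem-cl-ogic2} supplies the bounds $(\pa_{\x_0}\x)(t)=\mathrm{E}+O(\jap{x_0}^{-\m})$, $(\pa_{x_0}\x)(t)=O(\jap{x_0}^{-1-\m})$, and $(\pa_{x_0}^\a\pa_{\x_0}^\b\x)(t)=O(\jap{x_0}^{-|\a|-\m})$ for $|\a+\b|\geq2$. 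Throughout, the regime of bounded $x_0$ is harmless: there everything is uniformly controlled by compactness together with $\det[\pa\x/\pa\x_0]\geq1/2$, so $\jap{x_0}^{-\m}$ may be treated as a constant; all the content is in large $x_0$. Since the lemmas just quoted hold uniformly for $\pm t\geq0$ in both signs, I will not distinguish the two cases.

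First I would dispose of orders $0$ and $1$ explicitly. For order $0$, $F(t,x_0,\x)-\x=\x_0-\x(t,x_0,\x_0)$, so Lemma~\ref{lem-cl-iptr1}(i) gives $|F(t,x_0,\x)-\x|\leq C\jap{x_0}^{-\m}$. For order $1$, differentiating the identity gives $(\pa_{\x_0}\x)(t,x_0,F)\,\pa_\x F=\mathrm{E}$ and $(\pa_{x_0}\x)(t,x_0,F)+(\pa_{\x_0}\x)(t,x_0,F)\,\pa_{x_0}F=0$. With $M:=(\pa_{\x_0}\x)(t,x_0,F)=\mathrm{E}+O(\jap{x_0}^{-\m})$, which is invertible with $M^{-1}=\mathrm{E}+O(\jap{x_0}^{-\m})$ and $\|M^{-1}\|=O(1)$ uniformly (using $\det M\geq1/2$ for bounded $x_0$), this yields $\pa_\x F=\mathrm{E}+O(\jap{x_0}^{-\m})$ and $\pa_{x_0}F=-M^{-1}(\pa_{x_0}\x)(t,x_0,F)=O(\jap{x_0}^{-1-\m})$; these are exactly the asserted estimates at orders $0$ and $1$ — the $\jap{x_0}^{-|\a|}$ bound for $F$ and the sharper $\jap{x_0}^{-|\a|-\m}$ bound for $F-\x$.

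For the induction I would prove, for $|\a+\b|=N\geq2$, the sharp statement $\pa_{x_0}^\a\pa_\x^\b F=O(\jap{x_0}^{-|\a|-\m})$. Applying $\pa_{x_0}^\a\pa_\x^\b$ to the identity and expanding by the Leibniz and multivariate chain rules gives
\[
M\,\pa_{x_0}^\a\pa_\x^\b F+\sum_* c_*\,\bigpare{\pa_{x_0}^{a}\pa_{\x_0}^{b}\x}(t,x_0,F)\prod\bigpare{\pa_{x_0}^{a'}\pa_\x^{b'}F}=0,
\]
where the leading term $M\,\pa_{x_0}^\a\pa_\x^\b F$ has been separated off and, in each remaining term, the $\x$-factor has total derivative order $|a+b|\geq2$, every $F$-factor has order $1\le|a'+b'|<N$, and the $x_0$-orders satisfy $|a|+\sum|a'|=|\a|$. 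By Lemma~\ref{lem-cl-ogic2} each $\x$-factor is $O(\jap{x_0}^{-|a|-\m})$ and by the induction hypothesis (together with the order-$1$ case above) each $F$-factor is $O(\jap{x_0}^{-|a'|})$, so every term of the sum is $O(\jap{x_0}^{-|\a|-\m})$; multiplying by $M^{-1}$, whose norm is $O(1)$, gives $\pa_{x_0}^\a\pa_\x^\b F=O(\jap{x_0}^{-|\a|-\m})$, which closes the induction. Both assertions of the lemma follow: the first bound is immediate since $\jap{x_0}^{-|\a|-\m}\le\jap{x_0}^{-|\a|}$, and the second because for $|\a+\b|\ge2$ one has $\pa_{x_0}^\a\pa_\x^\b(F-\x)=\pa_{x_0}^\a\pa_\x^\b F$ (the coordinate map $\x$ being annihilated by $\pa_{x_0}^\a\pa_\x^\b$ in that range), while orders $0$ and $1$ were already handled.

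I expect the main obstacle to be the bookkeeping in the induction: one must check that the leading term $M\,\pa_{x_0}^\a\pa_\x^\b F$ is correctly isolated so that no $F$-factor of full order $N$ survives in the sum, that every $x_0$-derivative landing on the $\x$-factor or inside an $F$-factor is compensated by the corresponding power of $\jap{x_0}^{-1}$, and that the single gain $\jap{x_0}^{-\m}$ — which comes precisely from the $\x$-factor having order at least $2$ — is not lost under multiplication by $M^{-1}$ or when terms are recombined. The remaining ingredients, namely the elementary matrix-inversion estimates and the compactness argument for bounded $x_0$, are routine.
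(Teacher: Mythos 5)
Your proposal is correct and follows essentially the same route as the paper: the paper bounds the derivatives of $\L_t^{x_0}$ via Lemma~\ref{lem-cl-ogic2} (and Lemma~\ref{lem-cl-iptr1} for the zeroth-order difference), notes that $(\pa\x/\pa\x_0)^{-1}$ is uniformly bounded by Lemma~\ref{lem-cl-convexity}, and then invokes the standard formulas for derivatives of an inverse map. Your implicit differentiation of $\x(t,x_0,F(t,x_0,\x))=\x$ with the induction on $|\a+\b|$ is precisely a spelled-out version of those standard formulas, and the bookkeeping (isolation of the term $M\,\pa_{x_0}^\a\pa_\x^\b F$, the gain of $\jap{x_0}^{-\m}$ from the outer factor having order at least two, uniform boundedness of $M^{-1}$) is carried out correctly.
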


\begin{proof}
By Lemma~\ref{lem-cl-ogic2} and the definition of $\L_t^{x_0}$, we learn
\[
\pa_\x^\a \pa_{\x_0}^\b (\L_t^{x_0}(\x)) = O(\jap{x_0}^{-|\a|}),
\]
and
\[
\pa_{x_0}^\a \pa_{\x}^\b (\L_t^{x_0}(\x)-\x) = O(\jap{x_0}^{-|\a|-\m}).
\]
(Note we did not prove the estimate in Lemma~\ref{lem-cl-ogic2} for $\a=\b=0$, but this is easily shown as well.)
By Lemma~\ref{lem-cl-convexity}, $(\pa \x/\pa \x_0)^{-1}$ is uniformly bounded, and hence 
by the standard formulas of the derivatives of inverse map, we obtain the claim. 
\end{proof}

In particular, we learn 
\[
\bigabs{\pa_{x_0}^\a\pa_\x^\b \bigpare{x_0\cdot (\L_t^{x_0})^{-1}(\x)-x_0\cdot\x)}}
\leq C'_{\a\b}\jap{x_0}^{1-\m-|\a|}.
\]

\begin{lem}\label{lem-cl-ipHJ2}
For $\a,\b\in\ze_+^d$, there is $C_{\a\b}>0$ such that 
\[
\bigabs{\pa_{x_0}^\a\pa_{\x_0}^\b \bigpare{\f(t,x_0,\x_0)-x_0\cdot\x_0}}\leq C_{\a\b} \jap{x_0}^{1-\m-|\a|}, 
\]
uniformly for $(x_0,\x_0)\in\O_{I_5,\pm}(\b)$, $\pm t\geq 0$, respectively.
\end{lem}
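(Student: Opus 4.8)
The plan is to differentiate under the integral sign and reduce the statement to a pointwise bound on the integrand. Write $\f(t,x_0,\x_0)-x_0\cdot\x_0 = \int_0^t F(s)\,ds$ with $F(s)=\tilde q\bigpare{s,y(s),\x(s)}$, where $\tilde q(s,y,\x):=q(s,y,\x)-y\cdot\pa_y q(s,y,\x)$ and $y(s)=y(x_0,\x_0;s)$, $\x(s)=\x(x_0,\x_0;s)$. Everything is smooth in $(x_0,\x_0)$, so it suffices to prove
\[
\bigabs{\pa_{x_0}^\a\pa_{\x_0}^\b F(s)}\leq C_{\a\b}\jap{x_0}^{1-|\a|}\jap{x_0;s}^{-1}\jap{s}^{-\m}, \quad \pm s\geq 0,\ (x_0,\x_0)\in\O_{I_5,\pm}(\b),
\]
since then Lemma~\ref{lem-cl-elementary} yields $\bigabs{\pa_{x_0}^\a\pa_{\x_0}^\b(\f-x_0\cdot\x_0)}\leq C_{\a\b}\jap{x_0}^{1-|\a|}\int_0^{|t|}\jap{x_0;s}^{-1}\jap{s}^{-\m}\,ds\leq C'_{\a\b}\jap{x_0}^{1-\m-|\a|}$ uniformly in $t$.

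The first step is to control the amplitude $\tilde q$, together with its $(y,\x)$-derivatives, along the trajectory. From the commutation identity $[\pa_y^\c\pa_\x^\d,\,y\cdot\pa_y]=|\c|\,\pa_y^\c\pa_\x^\d$ one gets $\pa_y^\c\pa_\x^\d \tilde q = (1-|\c|)\,\pa_y^\c\pa_\x^\d q - y\cdot\pa_y\bigpare{\pa_y^\c\pa_\x^\d q}$, so that, using Corollary~\ref{cor-cl-ipsym2} for the $q$-derivatives together with $|y(s)|\leq C\jap{x_0}$ from Lemma~\ref{lem-cl-iptr1}(ii), one obtains
\[
\bigabs{(\pa_y^\c\pa_\x^\d\tilde q)(s,y(s),\x(s))}\leq C_{\c\d}\jap{x_0}^{1-|\c|}\jap{x_0;s}^{-1}\jap{s}^{-\m}, \quad \pm s\geq 0.
\]

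The second step is the chain rule. By the Faà di Bruno / Leibniz formula, $\pa_{x_0}^\a\pa_{\x_0}^\b F(s)$ is a finite sum of terms $(\pa_y^\c\pa_\x^\d\tilde q)(s,y(s),\x(s))\prod_{i}\pa_{x_0}^{a(i)}\pa_{\x_0}^{b(i)}y_{k_i}(s)\prod_{l}\pa_{x_0}^{\tilde a(l)}\pa_{\x_0}^{\tilde b(l)}\x_{m_l}(s)$, with $|\c|$ factors of the first kind and $|\d|$ of the second, each $a(i)+b(i)\neq 0$, $\tilde a(l)+\tilde b(l)\neq 0$, and $\sum_i a(i)+\sum_l\tilde a(l)=\a$, plus the single term $F(s)$ itself when $\a=\b=0$. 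Using $\bigabs{\pa_{x_0}^{a}\pa_{\x_0}^{b}y(s)}\leq C\jap{x_0}^{1-|a|}$ and $\bigabs{\pa_{x_0}^{a}\pa_{\x_0}^{b}\x(s)}\leq C\jap{x_0}^{-|a|}$ for $a+b\neq 0$ (these follow from Lemma~\ref{lem-cl-iptr2} for $y$ and from Lemma~\ref{lem-cl-ogic2} / Lemma~\ref{lem-cl-iptr2} for $\x$, since $\jap{x_0}^{1-|a|-\m}\leq\jap{x_0}^{1-|a|}$ etc.) and multiplying out, the exponent of $\jap{x_0}$ in each term is $(1-|\c|)+\sum_i(1-|a(i)|)+\sum_l(-|\tilde a(l)|)=1-\bigpare{\sum_i|a(i)|+\sum_l|\tilde a(l)|}=1-|\a|$, while the decaying factor $\jap{x_0;s}^{-1}\jap{s}^{-\m}$ inherited from $\tilde q$ survives untouched. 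This gives the required pointwise bound on $\pa_{x_0}^\a\pa_{\x_0}^\b F$, and the lemma follows upon integrating in $s$.

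The main obstacle is precisely this last bookkeeping: one must verify that the powers of $\jap{x_0}$ recombine to exactly $1-|\a|$ regardless of how the $\pa_{x_0}$-derivatives are distributed among the $y$- and $\x$-factors. The mechanism making this work is that each extra $y$-derivative falling on $\tilde q$ costs a factor $\jap{x_0}^{-1}$ (visible in the amplitude estimate above), and this loss is compensated exactly by the additional $y$-factor it produces in the chain rule, which by Lemma~\ref{lem-cl-iptr2} carries a gain of $\jap{x_0}^{1}$; extra $\x$-derivatives cost nothing in $q$ and the corresponding $\x$-factors are $O(1)$. Once this is organized, the remaining work is a routine application of Corollary~\ref{cor-cl-ipsym2}, Lemma~\ref{lem-cl-iptr2}, Lemma~\ref{lem-cl-iptr1}, and the elementary integral bound Lemma~\ref{lem-cl-elementary}.
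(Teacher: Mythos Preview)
Your proof is correct and follows essentially the same approach as the paper: bound the $(y,\x)$-derivatives of the Lagrangian $\tilde q=q-y\cdot\pa_y q$ along the trajectory via Corollary~\ref{cor-cl-ipsym2} and Lemma~\ref{lem-cl-iptr1}, push these through the chain rule using the trajectory estimates of Lemma~\ref{lem-cl-iptr2} (and Lemma~\ref{lem-cl-ogic2} for $\x$), and integrate in $s$ with Lemma~\ref{lem-cl-elementary}. The paper's proof is terser but identical in structure; your explicit Fa\`a di Bruno bookkeeping simply spells out what the paper leaves as ``combining these with Lemma~\ref{lem-cl-iptr2}.''
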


\begin{proof}
We write
\[
\ell(t,y,\x)= q(t,y,\x)-y\cdot\pa_y q(t,y,\x).
\]
Then by Lemma~\ref{lem-cl-ipsym1} (or by Corollary~\ref{cor-cl-ipsym2}), we learn, for any $\a,\b\in\ze_+^d$, 
\[
\bigabs{(\pa_{y}^\a\pa_{\x}^\b \ell)(t,y(t),\x(t))}\leq C\jap{x_0}^{1-|\a|}\jap{x_0;t}^{-1}\jap{t}^{-\m}.
\]
Combining these with Lemma~\ref{lem-cl-iptr2}, we have, for any $\a,\b\in\ze_+^d$, 
\[
\bigabs{\pa_{x_0}^\a\pa_{\x_0}^\b (\ell(t,y(t),\x(t)))}\leq C\jap{x_0}^{1-|\a|}\jap{x_0;t}^{-1}\jap{t}^{-\m}
\]
with some $C>0$. Integrating this in $t$, and using Lemma~\ref{lem-cl-elementary}, we obtain the claim. 
\end{proof}

Combining these two lemmas, we obtain the following estimate: 

\begin{lem}\label{lem-cl-ipHJ3}
For $\a,\b\in\ze_+^d$, there is $C_{\a\b}>0$ such that 
\[
\bigabs{\pa_{x}^\a\pa_{\x}^\b \bigpare{\g(t,x,\x)-x\cdot\x}}\leq C_{\a\b} \jap{x}^{1-\m-|\a|}, 
\]
uniformly in $x,\x$, $\pm t\geq 0$, provided $(x,\x_0)\in \O_{I_5,\pm}(\b)$ 
with $\x=\L_t^{x}(\x_0)$. 
\end{lem}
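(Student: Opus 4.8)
The plan is to reduce the estimate to the two preceding lemmas. Set $\x_0=(\L_t^{x})^{-1}(\x)$; by the hypothesis $\x=\L_t^x(\x_0)$ this coincides with the given $\x_0\in\O_{I_5,\pm}(\b)$, so the hypotheses of Lemmas~\ref{lem-cl-ipHJ1} and \ref{lem-cl-ipHJ2} are in force, uniformly in $\pm t\ge 0$. Recalling $\g(t,x,\x)=\f(t,x,(\L_t^{x})^{-1}(\x))$, we split
\[
\g(t,x,\x)-x\cdot\x = \bigpare{\f(t,x,\x_0)-x\cdot\x_0} + \bigpare{x\cdot(\L_t^{x})^{-1}(\x)-x\cdot\x},
\]
where in the first summand $\x_0$ is viewed as the function $(\L_t^x)^{-1}(\x)$ of $(x,\x)$. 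The second summand is precisely the quantity bounded in the remark following Lemma~\ref{lem-cl-ipHJ1}, hence it is $O(\jap{x}^{1-\m-|\a|})$ after applying $\pa_x^\a\pa_\x^\b$. So everything reduces to the first summand, which is the composition $F(x,G(x,\x))$ with $F(x,\x_0):=\f(t,x,\x_0)-x\cdot\x_0$ and $G(x,\x):=(\L_t^x)^{-1}(\x)$.

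First I would record the two structural inputs. From Lemma~\ref{lem-cl-ipHJ2}, $\bigabs{\pa_x^\a\pa_{\x_0}^\b F}\le C\jap{x}^{1-\m-|\a|}$, a bound depending only on the number of $x$-derivatives, so $\x_0$-derivatives of $F$ neither gain nor cost a power of $\jap{x}$. From Lemma~\ref{lem-cl-ipHJ1} (which also covers $\a=0$, giving $O(1)$), $\bigabs{\pa_x^\a\pa_\x^\b G}\le C\jap{x}^{-|\a|}$, so every $x$-derivative falling on $G$ produces an extra factor $\jap{x}^{-1}$ while $\x$-derivatives of $G$ stay bounded. Then I apply the chain rule (Fa\`a di Bruno formula) to $\pa_x^\a\pa_\x^\b\bigpare{F(x,G(x,\x))}$: each resulting term is an $(x,\x_0)$-derivative of $F$ evaluated at $(x,G(x,\x))$, multiplied by finitely many $x$- and $\x$-derivatives of components of $G$, the total number of $x$-derivatives distributed over all factors being $|\a|$. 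If $j$ of those land on $F$ and $|\a|-j$ on the $G$-factors, such a term is bounded by $\jap{x}^{1-\m-j}\jap{x}^{-(|\a|-j)}=\jap{x}^{1-\m-|\a|}$. Summing the finitely many terms and adding the second summand gives the claim; a routine induction on $|\a|+|\b|$, of the same shape as in the proofs of Lemmas~\ref{lem-cl-iptr2} and \ref{lem-cl-ipHJ2}, makes this precise.

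The main obstacle is exactly this bookkeeping: one must verify that in every term of the chain-rule expansion the $\jap{x}^{-1}$ gains from $x$-derivatives on $G$ and the $\jap{x}^{1-\m-j}$ structure of $F$ combine additively in the exponent, with no cross term degrading it, and that the estimate is uniform in $t$. Since the $F$-bound is insensitive to $\x_0$-derivatives and the $G$-bound is uniform in $\pm t\ge 0$, no extra $t$-dependence is introduced, and the argument closes.
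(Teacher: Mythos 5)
Your proposal is correct and is exactly the argument the paper intends: the paper offers no written proof beyond the phrase ``Combining these two lemmas,'' and your decomposition of $\g(t,x,\x)-x\cdot\x$ into $\bigpare{\f(t,x,\x_0)-x\cdot\x_0}\big|_{\x_0=(\L_t^{x})^{-1}(\x)}$ plus $x\cdot\bigpare{(\L_t^{x})^{-1}(\x)-\x}$, estimated via Lemma~\ref{lem-cl-ipHJ2}, Lemma~\ref{lem-cl-ipHJ1} (with the remark following it), and the chain-rule bookkeeping in which each $x$-derivative contributes a factor $\jap{x}^{-1}$ while $\x$- and $\x_0$-derivatives are neutral, is precisely that combination. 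No gaps; the uniformity in $\pm t\geq 0$ is inherited from the two cited lemmas as you say.
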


We address the conditions on the domain in the above lemma later. 
We here note that the condition is satisfied if $(x,\x)\in \O_{I_3,\pm}(\b')$ 
with $\b'>\b$ and $|x|\gg 0$.

\begin{rem}
In the above results, we concentrate on the properties of functions on 
$\O_{I_5,\pm}(\b)$. These functions are defined globally (provided $(x,\x)\in \O_{I_5}$), 
and they are smooth. The same analysis can be easily carried out locally in $(x,\x)$
in a neighborhood of any arbitrarily fixed point in $\O_{I_2}$. 
Actually, by Lemma~\ref{lem-cl-convexity}, we can show $\pa_x^\a\pa_\x^\b\g(t;x,\x)$ is uniformly 
bounded by $O(\jap{x})$, but it does not satisfy the above properties globally. 
\end{rem}

\subsection{Classical wave maps and their generating functions}
\label{subsection-cl-wm}

We have already seen in Subsection~\ref{subsection-cl-ip} that $\lim_{t\to\pm\infty} (y(t),\x(t))$
exist. We denote them by 
\[
w_\pm : (x_0,\x_0)\mapsto (x_\pm,\x_\pm)=\lim_{t\to\pm\infty}(y(t),\x(t))
\]
and we call them {\it classical (inverse) wave maps}. 
By the results (and the proof) in Subsection~\ref{subsection-cl-ip}, we can easily show: 

\begin{lem}\label{lem-cl-wm1}
$x_\pm$, $\x_\pm$ are smooth functions of $(x_0,\x_0)\in \O_{I_5}$, 
and for any $\a,\b\in\ze_+^d$, 
\[
\lim_{t\to\pm\infty} \pa_{x_0}^\a\pa_{\x_0}^\b y(t) 
=\pa_{x_0}^\a\pa_{\x_0}^\b x_\pm, \quad
\lim_{t\to\pm\infty} \pa_{x_0}^\a\pa_{\x_0}^\b \x(t) 
=\pa_{x_0}^\a\pa_{\x_0}^\b \x_\pm. 
\]
Moreover, for $(x_0,\x_0)\in \O_{I_5,\pm}(\b)$, 
\begin{align*}
&\bigabs{\pa_{x_0} x_\pm-\mathrm{E}}\leq C\jap{x_0}^{-\m}, \quad \bigabs{\pa_{\x_0}x_\pm}\leq C\jap{x_0}^{-1-\m}, \\
&\bigabs{\pa_{x_0} \x_\pm}\leq C\jap{x_0}^{-1-\m}, 
\quad \bigabs{\pa_{\x_0}\x_\pm-\mathrm{E}}\leq C\jap{x_0}^{-\m},
\end{align*}
and 
\[
\bigabs{\pa_{x_0}^\a\pa_{\x_0}^\b x_\pm}\leq C_{\a\b}\jap{x_0}^{1-|\a|-\m}, \quad 
\bigabs{\pa_{x_0}^\a\pa_{\x_0}^\b \x_\pm}\leq C_{\a\b}\jap{x_0}^{-|\a|-\m}
\]
if $|\a+\b|\geq 2$. 
The convergence is uniform in the following sense: 
\begin{align*}
&\lim_{t\to\pm\infty} \sup_{(x_0,\x_0)\in\O_{I_2,\pm}(\b)}\jap{x_0}^{-1+|\a|} 
\bigabs{\pa_{x_0}^\a\pa_{\x_0}^\b y(t) -\pa_{x_0}^\a\pa_{\x_0}^\b x_\pm}=0, \\
&\lim_{t\to\pm\infty} \sup_{(x_0,\x_0)\in\O_{I_2,\pm}(\b)}\jap{x_0}^{|\a|} 
\bigabs{\pa_{x_0}^\a\pa_{\x_0}^\b \x(t) 
-\pa_{x_0}^\a\pa_{\x_0}^\b \x_\pm}=0, 
\end{align*}
\end{lem}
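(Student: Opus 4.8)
The strategy is to derive everything in Lemma~\ref{lem-cl-wm1} from the uniform estimates already established in Subsection~\ref{subsection-cl-ip}, especially Lemma~\ref{lem-cl-iptr2} and Corollary~\ref{cor-cl-ipsym2}, by passing to the limit $t\to\pm\infty$. First I would establish convergence of the derivatives. Recall that $y(t)$ and $\x(t)$ satisfy the Hamilton equations for $q(t,y,\x)$ (Lemma~\ref{lem-cl-ipevo}), so $\pa_{x_0}^\a\pa_{\x_0}^\b y(t)$ and $\pa_{x_0}^\a\pa_{\x_0}^\b\x(t)$ satisfy the variational equations obtained by differentiating. For the time derivative of $\pa_{x_0}^\a\pa_{\x_0}^\b y(t)$, the argument in the proof of Lemma~\ref{lem-cl-iptr2} already shows
\[
\biggabs{\pa_t\bigpare{\pa_{x_0}^\a\pa_{\x_0}^\b y(t)}}\leq C\jap{x_0}^{1-|\a|}\jap{x_0;t}^{-1}\jap{t}^{-\m},
\]
and by Lemma~\ref{lem-cl-elementary} the right-hand side is integrable in $t$ on $[0,\infty)$ (resp.\ $(-\infty,0]$). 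Hence $\pa_{x_0}^\a\pa_{\x_0}^\b y(t)$ converges as $t\to\pm\infty$; the same bound combined with $\abs{\pa_t\pa_{x_0}^\a\pa_{\x_0}^\b\x(t)}\leq C\jap{x_0}^{-|\a|}\jap{x_0;t}^{-1-\m}$ from Lemma~\ref{lem-cl-ogic2} (which is already integrable) gives convergence of $\pa_{x_0}^\a\pa_{\x_0}^\b\x(t)$. Since $y(t)\to x_\pm$ and $\x(t)\to\x_\pm$ pointwise and the derivatives converge locally uniformly, the limits are smooth and one may interchange $\lim_t$ and $\pa_{x_0}^\a\pa_{\x_0}^\b$, which is the first displayed assertion.

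Next I would transfer the bounds. Since $\pa_{x_0}^\a\pa_{\x_0}^\b x_\pm=\lim_{t\to\pm\infty}\pa_{x_0}^\a\pa_{\x_0}^\b y(t)$ and the estimates in Lemma~\ref{lem-cl-iptr2} are uniform in $t$ with $\pm t\geq 0$, every bound there passes to the limit: $\abs{\pa_{x_0}y(t)}\leq C$ gives $\abs{\pa_{x_0}x_\pm}\leq C$ but in fact the sharper $\abs{\pa_{x_0}y(t)-\mathrm{E}}\leq C\jap{x_0}^{-\m}$ passes to $\abs{\pa_{x_0}x_\pm-\mathrm{E}}\leq C\jap{x_0}^{-\m}$; similarly $\abs{\pa_{\x_0}x_\pm}\leq C\jap{x_0}^{-1-\m}$ (here one uses the finer estimate from the proof of Lemma~\ref{lem-cl-iptr2}; note the stated $\abs{\pa_{\x_0}y(t)}\leq C\jap{x_0}^{1-\m}$ is not sharp enough by itself, so I would instead integrate $\pa_t\pa_{\x_0}y(t)=O(\jap{x_0}\jap{x_0;t}^{-1}\jap{t}^{-\m})$ together with $\pa_{\x_0}y(0)=0$ and observe that for the limit the tail contribution is what survives—actually the cleanest route is to note $\pa_{\x_0}x_\pm=\int_0^{\pm\infty}\pa_t\pa_{\x_0}y(t)\,dt$ when $\pa_{\x_0}y(0)=0$, and bound this integral by Lemma~\ref{lem-cl-elementary}). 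The $\x_\pm$ bounds come directly from Lemma~\ref{lem-cl-ogic2} and Lemma~\ref{lem-cl-iptr2}(ii) by the same limit, and the higher-order bounds $\abs{\pa_{x_0}^\a\pa_{\x_0}^\b x_\pm}\leq C_{\a\b}\jap{x_0}^{1-|\a|-\m}$, $\abs{\pa_{x_0}^\a\pa_{\x_0}^\b\x_\pm}\leq C_{\a\b}\jap{x_0}^{-|\a|-\m}$ for $|\a+\b|\geq 2$ are the $t$-independent conclusions of Lemma~\ref{lem-cl-iptr2}(ii) carried to $t=\pm\infty$.

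Finally, for the quantitative uniform convergence I would write, for $(x_0,\x_0)\in\O_{I_2,\pm}(\b)$ and (say) $t\geq 0$,
\[
\pa_{x_0}^\a\pa_{\x_0}^\b y(t)-\pa_{x_0}^\a\pa_{\x_0}^\b x_\pm
= -\int_t^{\infty}\pa_s\bigpare{\pa_{x_0}^\a\pa_{\x_0}^\b y(s)}\,ds,
\]
and estimate the integrand by $C\jap{x_0}^{1-|\a|}\jap{x_0;s}^{-1}\jap{s}^{-\m}$. Then
\[
\jap{x_0}^{-1+|\a|}\bigabs{\pa_{x_0}^\a\pa_{\x_0}^\b y(t)-\pa_{x_0}^\a\pa_{\x_0}^\b x_\pm}
\leq C\int_t^\infty \jap{x_0;s}^{-1}\jap{s}^{-\m}\,ds\leq C\int_t^\infty\jap{s}^{-1-\m}\,ds\leq C\jap{t}^{-\m}\to 0,
\]
uniformly in $(x_0,\x_0)$; the analogous computation with $\jap{x_0;s}^{-1-\m}$ in place of $\jap{x_0;s}^{-1}\jap{s}^{-\m}$ handles $\x(t)-\x_\pm$. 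The main obstacle is bookkeeping: making sure at each step that the estimate invoked from Lemma~\ref{lem-cl-iptr2} or its proof is the \emph{sharp} one (with the extra $\jap{x_0}^{-\m}$ decay where the lemma statement only claims boundedness), since the limiting bounds in Lemma~\ref{lem-cl-wm1} are strictly sharper than the crude $t$-uniform bounds and require either the refined intermediate estimates already present in those proofs or a short separate Gronwall argument on the variational equation for $\pa_{x_0}^\a\pa_{\x_0}^\b x_\pm-$(its value at $t=0$).
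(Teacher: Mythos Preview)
Your approach is exactly what the paper intends: it offers no explicit proof, only the sentence ``By the results (and the proof) in Subsection~\ref{subsection-cl-ip}, we can easily show:'', so passing the uniform-in-$t$ estimates of Lemma~\ref{lem-cl-iptr2} and Lemma~\ref{lem-cl-ogic2} to the limit via the integrability of the time-derivatives is precisely the route indicated. Your treatment of convergence, smoothness, and the uniform tail estimate via $\int_t^\infty\jap{s}^{-1-\m}ds$ is correct and complete.

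One point deserves comment. You rightly flag that the stated bound $\bigabs{\pa_{\x_0}x_\pm}\leq C\jap{x_0}^{-1-\m}$ is sharper than anything in Lemma~\ref{lem-cl-iptr2}, and your proposed fix---writing $\pa_{\x_0}x_\pm=\int_0^{\pm\infty}\pa_t\pa_{\x_0}y(t)\,dt$ and invoking Lemma~\ref{lem-cl-elementary}---only yields $C\jap{x_0}\cdot\jap{x_0}^{-\m}=C\jap{x_0}^{1-\m}$, not $\jap{x_0}^{-1-\m}$. This is not a gap in your argument but almost certainly a typo in the statement: the exponent should be $1-\m$, consistent with Lemma~\ref{lem-cl-iptr2}(i) and with the general pattern $\jap{x_0}^{1-|\a|-\m}$ specialized to $|\a|=0$, $|\b|=1$. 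The subsequent constructions in the paper (Lemmas~\ref{lem-cl-ipHJ3} and \ref{lem-cl-wm2}) only use the weaker $\jap{x_0}^{1-\m}$ bound, so nothing downstream is affected.
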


We also have the limit of the generating function of $w_t$: 
\[
\g_\pm(x,\x) =\lim_{t\to\pm\infty} \g(t,x,\x).
\]

\begin{lem}\label{lem-cl-wm2}
$\g_\pm(x,\x)$ is smooth functions of $(x,\x)\in \O_{I_4}$ and and for any $\a,\b\in\ze_+^d$,
\[
\lim_{t\to\pm\infty} \pa_x^\a\pa_\x^\b \g(t,x,\x)=\pa_x^\a\pa_\x^\b \g_\pm(x,\x).
\]
Moreover, 
\[
\bigabs{\pa_x^\a\pa_\x^\b\bigpare{\g_\pm(x,\x)-x\cdot\x}}\leq C_{\a\b}\jap{x}^{1-\m-|\a|}, 
\quad (x,\x)\in\tilde\O_{I_5,\pm}(\b),
\]
where 
\[
\tilde\O_{J,\pm}(\b)=\bigset{(x,\x)}{(x,\pa_x\g_\pm(x,\x))\in\O_{J,\pm}(\b)}.
\]
The convergence is uniform in the sense: 
\[
\lim_{t\to\pm\infty} \sup_{(x,\x)\in\tilde\O_{I_5,\pm}(\b)}\jap{x}^{-1+\m+|\a|} 
\bigabs{\pa_{x}^\a\pa_{\x}^\b \g(t,x,\x) -\pa_{x}^\a\pa_{\x}^\b \g_\pm(x,\x)}=0.
\]
\end{lem}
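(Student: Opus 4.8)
The plan is to construct $\g_\pm$ from the composition representation $\g(t,x,\x)=\f\bigpare{t,x,(\L_t^{x})^{-1}(\x)}$ established in this subsection by passing to the limit in each factor, and then to read off the decay estimate as the $t\to\pm\infty$ limit of Lemma~\ref{lem-cl-ipHJ3}.

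First I would treat $\f$. With $\ell(t,y,\x)=q(t,y,\x)-y\cdot\pa_y q(t,y,\x)$, so that $\f(t,x_0,\x_0)=x_0\cdot\x_0+\int_0^t\ell(s,y(s),\x(s))\,ds$, the proof of Lemma~\ref{lem-cl-ipHJ2} (via Corollary~\ref{cor-cl-ipsym2} and Lemma~\ref{lem-cl-iptr2}) gives
\[
\bigabs{\pa_{x_0}^\a\pa_{\x_0}^\b\bigpare{\ell(s,y(s),\x(s))}}\leq C\jap{x_0}^{1-|\a|}\jap{x_0;s}^{-1}\jap{s}^{-\m}
\]
uniformly on $\O_{I_5,\pm}(\b)$, which by Lemma~\ref{lem-cl-elementary} is integrable over $s\in[0,\pm\infty)$ with tails tending to $0$. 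Hence $\f_\pm(x_0,\x_0):=\lim_{t\to\pm\infty}\f(t,x_0,\x_0)$ exists, all derivatives $\pa_{x_0}^\a\pa_{\x_0}^\b\f(t,\cdot)$ converge uniformly, and $\pa_{x_0}^\a\pa_{\x_0}^\b(\f_\pm-x_0\cdot\x_0)=O(\jap{x_0}^{1-\m-|\a|})$ on $\O_{I_5,\pm}(\b)$. Near a fixed point of $\O_{I_2}$ without the out-going condition, the same integrability holds locally by the remark following Lemma~\ref{lem-cl-ipHJ3}, which gives smoothness of $\f_\pm$ and locally uniform convergence of its derivatives.

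Next, by Lemma~\ref{lem-cl-wm1} the maps $\L_t^{x}:\x_0\mapsto\x(x,\x_0;t)$ converge, with all derivatives, to $\L_\pm^{x}:\x_0\mapsto\x_\pm(x,\x_0)$, and by Lemma~\ref{lem-cl-convexity} the Jacobians $\pa\x/\pa\x_0$ have uniformly bounded inverses; by the standard formulas for derivatives of inverse maps, $(\L_t^{x})^{-1}(\x)$ converges with all derivatives, and the bounds of Lemma~\ref{lem-cl-ipHJ1} pass to the limit. Passing to the limit in the generating-function relation $\pa_x\g(t,x,\x)=(\L_t^{x})^{-1}(\x)$ identifies $\chi(x,\x):=\lim_{t\to\pm\infty}(\L_t^{x})^{-1}(\x)$ with $\pa_x\g_\pm(x,\x)$, so that for $(x,\x)\in\tilde\O_{I_5,\pm}(\b)$ the point $\bigpare{x,\pa_x\g_\pm(x,\x)}$ lies in $\O_{I_5,\pm}(\b)$ and all the preceding estimates apply there. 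Composing, $\g_\pm(x,\x)=\f_\pm\bigpare{x,\chi(x,\x)}$ is well-defined and smooth on $\O_{I_4}$, with $\pa_x^\a\pa_\x^\b\g(t,x,\x)\to\pa_x^\a\pa_\x^\b\g_\pm(x,\x)$ locally uniformly.

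For the decay estimate I would write
\[
\g_\pm(x,\x)-x\cdot\x=\bigpare{\f_\pm(x,\chi)-x\cdot\chi}+x\cdot(\chi-\x),
\]
and apply $\pa_x^\a\pa_\x^\b$ to each summand. For the first, the chain rule produces a finite sum of terms $(\pa_{x_0}^{\a'}\pa_{\x_0}^{\b'}g_1)(x,\chi)$ times products of $x$- and $\x$-derivatives of the components of $\chi$, where $g_1(x_0,\x_0)=\f_\pm(x_0,\x_0)-x_0\cdot\x_0=O(\jap{x_0}^{1-\m-|\a'|})$; since $\pa_x^\c\pa_\x^\d\chi=O(\jap{x}^{-|\c|})$ (limit of Lemma~\ref{lem-cl-ipHJ1}), each $x$-derivative falling on $\chi$ contributes a factor $\jap{x}^{-1}$, and the $\jap{x}$-exponents add up to $\leq 1-\m-|\a|$. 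For the second summand, the Leibniz rule together with $\pa_x^{\a'}\pa_\x^{\b'}(\chi-\x)=O(\jap{x}^{-|\a'|-\m})$ again gives $O(\jap{x}^{1-\m-|\a|})$. The stated uniform convergence follows by repeating this computation with $\f(t,\cdot)$ and $(\L_t^{x})^{-1}$ in place of $\f_\pm$ and $\chi$, estimating the tails $\int_t^{\pm\infty}\pa_{x_0}^\a\pa_{\x_0}^\b\ell\,ds$ by Lemma~\ref{lem-cl-elementary} and using the uniform rates in Lemma~\ref{lem-cl-wm1}. The only real work here is the $\jap{x}$-weight bookkeeping through the chain rule and, for the uniform-convergence statement, keeping careful track of the interplay between $t$ and $\jap{x}$ in the factor $\jap{x}\jap{x;s}^{-1}\jap{s}^{-\m}$; the substantive classical-mechanics estimates have already been carried out in the earlier lemmas of this section.
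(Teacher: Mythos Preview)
Your proposal is correct and follows exactly the approach the paper has in mind. The paper does not give a separate proof of this lemma; it is stated as the $t\to\pm\infty$ limit of Lemma~\ref{lem-cl-ipHJ3}, obtained by passing to the limit in the composition $\g(t,x,\x)=\f\bigpare{t,x,(\L_t^{x})^{-1}(\x)}$ using the uniform estimates of Lemmas~\ref{lem-cl-ipHJ1}, \ref{lem-cl-ipHJ2} and the convergence of $\L_t^x$ from Lemma~\ref{lem-cl-wm1}, which is precisely your argument.

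One small caveat on the uniform-convergence clause: the tail bound $\langle x\rangle\int_t^{\pm\infty}\langle x;s\rangle^{-1}\langle s\rangle^{-\mu}\,ds$ does \emph{not} by itself give decay uniformly in $x$ after multiplication by $\langle x\rangle^{-1+\m}$ (for $|x|\sim t$ the weighted tail stays of order one). What one does get easily is uniform convergence with any weaker weight $\langle x\rangle^{-1+\m'+|\a|}$, $\m'<\m$: split $\langle s\rangle^{-\m}=\langle s\rangle^{-\m'}\langle s\rangle^{-(\m-\m')}$ and pull out $\langle t\rangle^{-(\m-\m')}$ before applying Lemma~\ref{lem-cl-elementary}. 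Since the paper never uses the sharp-weight uniform statement downstream (only the pointwise limit and the decay bound on $\g_\pm-x\cdot\x$ enter in Sections~\ref{section-isozaki-kitada}--\ref{section-qm}), this is a cosmetic point rather than a genuine gap.
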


We also note that $\g_\pm$ is the generating function of $w_\pm$, i.e., 
\[
w_\pm\ : \ 
\begin{pmatrix}x\\ \pa_x\g_\pm(x,\x)\end{pmatrix}
\mapsto
\begin{pmatrix}\pa_\x\g_\pm(x,\x)\\ \x \end{pmatrix}. 
\]
The energy conservation is 
\begin{align*}
p(x,\pa_\x\g_\pm(x,\x)) &=\lim_{t\to\pm\infty} p(y(t)+\pa_\x\phi(t,\x),\x)\\
&=\lim_{t\to\pm\infty} \bigpare{p_0(\x)+V_R(y(t)+\pa_\x\phi(t,\x),\x)}, 
\end{align*}
but since $y(t)$ is uniformly bounded and $|\pa_\x\phi(t,\x)|\to\infty$, we learn 
that the right hand side converges to $p_0(\x)$. Thus $\g_\pm(x,\x)$ are solutions to 
the eikonal equation: 
\begin{equation}\label{eq-eikonal}
p(x,\pa_x\g_\pm(x,\x))=p_0(\x), \quad (x,\x)\in\tilde\O_{I_5,\pm}(\b). 
\end{equation}

Finally, we consider the definition domain of the generating function $\g_\pm(x,\x)$, 
i.e., $\tilde\O_{I_5,\pm}(\b)$. 

\begin{lem}\label{lem-cl-wm3}
Let $\b'>\b$. Then there is $L>0$ such that 
\[
\O_{I_4,\pm}^0(\b',L)\subset \tilde\O_{I_5,\pm}(\b),
\]
where
\[
\O_{J,\pm}^0(\c,L)=\bigset{(x,\x)}{p_0(\x)\in J, |x|\geq L, \pm \cos(x,v(\x))>\c}
\]
for $J\subset\re$, $L>0$ and $\c>-1$. 
\end{lem}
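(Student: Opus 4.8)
The plan is to reduce the lemma to the single uniform estimate
\[
\bigabs{\pa_x\g_\pm(x,\x)-\x}\le C\jap{x}^{-\m}, \qquad (x,\x)\in\O_{I_4,\pm}^0(\b',L),
\]
and to prove that estimate by a continuation argument in the time parameter. Fix $\b'>\b$ and a point $(x,\x)$ with $p_0(\x)\in I_4$, $|x|\ge L$ and $\pm\cos(x,v(\x))>\b'$. By the construction of Subsection~\ref{subsection-cl-ipHJ} and Lemma~\ref{lem-cl-wm2}, $\g_\pm$ is the generating function of $w_\pm$, so $\x_0:=\pa_x\g_\pm(x,\x)$ is the initial momentum of the trajectory whose scattering momentum equals $\x$, and $\x_0=\lim_{t\to\pm\infty}(\L_t^{x})^{-1}(\x)=\lim_{t\to\pm\infty}\pa_x\g(t,x,\x)$. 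What must be shown is $(x,\x_0)\in\O_{I_5,\pm}(\b)$. Granting the displayed estimate this follows at once: $v=\pa_\x p_0$ is smooth and, on $\O_{I_6}^0$, bounded below by $c_4$, so for $|x|\ge L$ large $\x_0$ stays within a fixed neighbourhood of $\O_{I_4}^0$ and $\bigabs{\cos(x,v(\x_0))-\cos(x,v(\x))}\le C'\jap{x}^{-\m}$; also $p(x,\x_0)=p_0(\x_0)+V_R(x,\x_0)=p_0(\x)+O(\jap{x}^{-\m})$, using $|V_R(x,\x_0)|\le C\jap{x}^{-\m}$ for $|x|\ge L\ge 2R$ and $\x_0\in\O_{I_6}^0$. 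Choosing $L$ so large that $C'\jap{L}^{-\m}<\b'-\b$ and the remainder is $<\e_0$ then gives $\pm\cos(x,v(\x_0))>\b$ and $p(x,\x_0)\in I_5$, that is, $(x,\x_0)\in\O_{I_5,\pm}(\b)$.

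It remains to prove the displayed estimate; I treat the $+$ case, the $-$ case being identical with $t\to-\infty$. Set $\x_0(t)=(\L_t^{x})^{-1}(\x)=\pa_x\g(t,x,\x)$, a smooth curve in momentum space with $\x_0(0)=\x$, satisfying $\x(x,\x_0(t);t)=\L_t^{x}(\x_0(t))=\x$ for every $t$, and $\x_0(t)\to\x_0$ as $t\to+\infty$ by Lemma~\ref{lem-cl-wm2}. Let $T$ be the supremum of those $\tau\ge 0$ for which $\x_0(s)$ is defined with $(x,\x_0(s))\in\O_{I_5,+}(\b)$ for all $s\in[0,\tau]$. Since $\x_0(0)=\x$, and for $|x|\ge L$ one has $p(x,\x)=p_0(\x)+O(\jap{x}^{-\m})$ in the interior of $I_5$ and $\cos(x,v(\x))>\b'>\b$, we get $T>0$. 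For $s\in[0,T]$ we apply Lemma~\ref{lem-cl-iptr1}(i) with initial datum $(x_0,\x_0)=(x,\x_0(s))\in\O_{I_5,+}(\b)$ and time $t=s\ge 0$; it yields $\bigabs{\x(x,\x_0(s);s)-\x_0(s)}\le C_0\jap{x}^{-\m}$, and since the left-hand side equals $\bigabs{\x-\x_0(s)}$ we obtain $\bigabs{\x_0(s)-\x}\le C_0\jap{x}^{-\m}$, uniformly on $[0,T]$.

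Feeding this bound back into the estimates of the first paragraph shows that, for $|x|\ge L$ with $L$ fixed as above, the curve stays in the interior of $\O_{I_5,+}(\b)$ with a uniform margin on $[0,T]$: $\cos(x,v(\x_0(s)))>\b+\tfrac12(\b'-\b)$ and $p(x,\x_0(s))$ is bounded away from $\pa I_5$. If $T$ were finite, then near $\x_0(T)$ the map $\x_0\mapsto\L_s^{x}(\x_0)$ is a local diffeomorphism (Lemma~\ref{lem-cl-convexity}), so $\x_0(s)$ extends smoothly to $[0,T+\d)$ for some $\d>0$, and by continuity the inclusion $(x,\x_0(s))\in\O_{I_5,+}(\b)$ persists there, contradicting the definition of $T$. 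Hence $T=+\infty$, so $\bigabs{\x_0(s)-\x}\le C_0\jap{x}^{-\m}$ for all $s\ge 0$, and letting $s\to+\infty$ gives $\bigabs{\x_0-\x}\le C_0\jap{x}^{-\m}$. Together with the first paragraph this shows $(x,\x)\in\tilde\O_{I_5,+}(\b)$, and the proof is complete.

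The main obstacle is a genuine circularity: the good decay estimates for $\g_\pm(x,\x)-x\cdot\x$ (Lemma~\ref{lem-cl-wm2}) and the uniform trajectory bound of Lemma~\ref{lem-cl-iptr1} are known only on the set $\O_{I_5,\pm}(\b)$, which is precisely the region whose preimage under $(x,\x)\mapsto(x,\pa_x\g_\pm(x,\x))$ we are trying to describe. The continuation in $t$ along the curve $\x_0(t)=\pa_x\g(t,x,\x)$ breaks this: the curve starts at the ``good'' point $\x$ at $t=0$, it stays close to $\x$ as long as it remains in $\O_{I_5,+}(\b)$, and because the resulting estimates hold with a uniform strict margin this region cannot be left in finite time.
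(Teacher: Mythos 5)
Your proof is correct, and it takes a genuinely different route from the paper, while arriving at the same key quantitative fact: the initial momentum $\x_0=\pa_x\g_\pm(x,\x)$ attached to the scattering datum $\x$ satisfies $|\x_0-\x|\leq C\jap{x}^{-\m}$, whence $(x,\x_0)\in\O_{I_5,\pm}(\b)$ once $|x|\geq L$ is large. The paper inverts the asymptotic momentum map in one stroke: it sets $F_\pm(\y)=\x-(\x_\pm(x,\y)-\y)$ and uses the estimates $|\x_\pm(x,\y)-\y|\leq C\jap{x}^{-\m}$ and $|\pa_\y\x_\pm(x,\y)-\mathrm{E}|\leq C\jap{x}^{-\m}$ from Lemma~\ref{lem-cl-wm1} to run a contraction-mapping argument in a small ball around $\x$; this yields existence of the preimage $\x_0$ and its location simultaneously. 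You instead bootstrap along the finite-time curves $\x_0(s)=(\L_s^{x})^{-1}(\x)=\pa_x\g(s,x,\x)$, using only the $C^0$ estimate of Lemma~\ref{lem-cl-iptr1}(i), and then let $s\to\infty$; this needs weaker input (no derivative estimate on $\x_\pm(x,\cdot)$) and makes explicit why there is no circularity in using estimates that are a priori known only on $\O_{I_5,\pm}(\b)$, which is a genuine merit. Three points to tighten: membership in $\O_{I_5,+}(\b)$ is known only on $[0,T)$, so obtain the bound at $s=T$ by continuity before re-deriving the margins (and the appeal to extending $\x_0(s)$ by a local diffeomorphism is unnecessary, since the curve is defined for all $s$ by Lemma~\ref{lem-cl-convexity}; only membership is at issue); the identification $\lim_{s\to\infty}\x_0(s)=\pa_x\g_\pm(x,\x)$ should not lean on the quantitative part of Lemma~\ref{lem-cl-wm2}, which is stated precisely on the set $\tilde\O_{I_5,\pm}(\b)$ being identified --- it follows cleanly from the uniform convergence $\x(x,\y;t)\to\x_\pm(x,\y)$ on the outgoing region (as in the proofs of Lemmas~\ref{lem-cl-iptr1} and \ref{lem-cl-wm1}) applied to a limit point of the bounded family $\{\x_0(s)\}$, or from the global well-definedness of $\g_\pm$ (Lemma~\ref{lem-cl-wm4}, whose proof is independent of the present lemma); and, relatedly, your argument presupposes that $\x$ lies in the range of $\x_\pm(x,\cdot)$, i.e.\ that $\pa_x\g_\pm(x,\x)$ exists at the given point, whereas the paper's fixed-point argument proves that existence as part of the lemma.
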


\begin{proof}
For a given $(x,\x)\in \O^0_{I_4,\pm}(\b',L)$, it suffices to find 
$(x,\x_0)\in \tilde\O_{I_5,\pm}(\b)$ such that $\x=\x_\pm(x,\x_0)$. 
Thus we find a inverse map of $\x_0\mapsto \x=\x_\pm(x,\x_0)$ for such $(x,\x)$. 
We construct the inverse map by, 
for example, the contraction mapping. For a fixed $(x,\x)$, we set 
\[
F_\pm(\y)= \x-(\x_\pm(x,\y)-\y), \quad \y\in \O_{I_5},
\]
and then $F_\pm(\y)=\y$ if and only if $\x= \x_\pm(x,\x)$. 
We note, by the construction of $\x_\pm$ and Lemma~\ref{lem-cl-wm1}, we have 
\[
|\x_\pm(x,\y)-\y|\leq C\jap{x}^{-\m}, \quad 
|\pa_\y\x_\pm(x,\y) -\mathrm{E}|\leq C\jap{x}^{-\m}, 
\]
uniformly for $(x,\y)\in\O_{I_5,\pm}(\b')$. Thus, if $|x|\geq L$ is sufficiently large, 
$F_\pm$ is a contraction map in a small ball with the center at $\x_0$ which is 
contained in $\O_{I_5}^0$ and $\{\y\,|\,\pm\cos(x,\y)>\b\}$. 
Thus we can apply the fixed point theorem to conclude the existence of the fixed point. 
This implies the assertion. 
\end{proof}

The above lemma implies we can apply the result of Lemma~\ref{lem-cl-wm2} for 
$(x,\x)\in \O_{I_4,\pm}^0$. We note that the phase function $\g_\pm$ is well-defined 
on $\re^d\times \O_{I_4}^0$, though they do not enjoy the decay properties in 
Lemma~\ref{lem-cl-wm2} globally:

\begin{lem}\label{lem-cl-wm4}
$\g_\pm(x,\x)$ is well-defined for $x\in\re^d$, $\x\in \O_{I_5}^0$, namely, 
\[
\re^d\times \O_{I_4}^0 \subset \tilde \O_{I_5,\pm}= \bigset{(x,\x)}{(x,\pa\g_\pm(x,\x))\in \O_{I_5}}. 
\]
\end{lem}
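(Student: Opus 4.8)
The plan is to show that for arbitrary $x_0\in\re^d$ and $\x\in\O_{I_4}^0$ one can solve the fixed-point equation $\x=\x_\pm(x_0,\x_0)$ for some $\x_0\in\O_{I_5}^0$, so that $(x_0,\pa_x\g_\pm(x_0,\x))=(x_0,x_\pm(x_0,\x_0))$ lies in $\O_{I_5}$. The essential point is that the restriction $|x_0|\geq L$ in Lemma~\ref{lem-cl-wm3} was only needed there to control the angular variable via $\cos(x_0,v(\x_0))$; once we drop the out-going/in-coming cone, the relevant trajectories are already handled directly by Lemma~\ref{lem-cl-convexity} and Lemma~\ref{lem-cl-lrscat}, which give convergence of $(y(t),\x(t))$ for \emph{every} $(x_0,\x_0)\in\O_{I_5}$ with no cone restriction.

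First I would recall from Lemma~\ref{lem-cl-lrscat} (and Lemma~\ref{lem-cl-iptr1}(i), whose proof uses only $|\pa_t\x(t)|\leq C\jap{t}^{-1-\m}$, valid on all of $\O_{I_5}$ once $R$ is large, not merely on the cone) that $\x_\pm(x_0,\x_0)$ is defined and smooth on $\O_{I_5}$ and satisfies $|\x_\pm(x_0,\x_0)-\x_0|\leq C_0\jap{x_0}^{-\m}$ for bounded $x_0$, with the bound in fact locally uniform. Next I would fix $x_0$ in a bounded set $\{|x_0|\leq L\}$ and $\x\in\O_{I_4}^0$, and consider the map $F_\pm(\y)=\x-(\x_\pm(x_0,\y)-\y)$ exactly as in the proof of Lemma~\ref{lem-cl-wm3}. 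For $|x_0|\leq L$ fixed, choose $R$ (equivalently the cut-off scale) large enough that $\sup_{\y\in\O_{I_5}^0}|\x_\pm(x_0,\y)-\y|$ and $\sup_{\y\in\O_{I_5}^0}\|\pa_\y\x_\pm(x_0,\y)-\mathrm{E}\|$ are as small as we like --- this follows from \eqref{eq-potential-small} and Lemma~\ref{lem-cl-convexity}, since on the bounded-$x_0$ region $V_R$ contributes only $O(R^{-\m})$ corrections. Then $F_\pm$ is a contraction of a small ball around $\x$ into itself inside $\O_{I_5}^0$, and the fixed point $\x_0$ satisfies $\x_\pm(x_0,\x_0)=\x$, hence $(x_0,\pa_x\g_\pm(x_0,\x))\in\O_{I_5}$. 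Combining this with Lemma~\ref{lem-cl-wm3} (which covers $|x|\geq L$ in the cone $\cos(x,v(\x))>\b'$), and noting that for $|x|\geq L$ outside every such cone the trajectory is trapped only in a compact region and so again is covered by the same contraction argument with $x_0$ allowed to roam over the compact non-cone set at radius $\leq L$... in fact it is cleaner simply to observe that for $|x|\geq L$ the cones $\O_{I_4,\pm}^0(\b',L)$ over all $\b'>\b$ with $\b$ ranging in $(-1,1)$ already exhaust $\{|x|\geq L\}\times\O_{I_4}^0$, so Lemma~\ref{lem-cl-wm3} handles the entire far region, and the contraction argument above handles $\{|x|\leq L\}\times\O_{I_4}^0$.

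The one technical subtlety --- and the part I expect to need the most care --- is the uniformity of the smallness of $|\x_\pm(x_0,\y)-\y|$ and $\|\pa_\y\x_\pm(x_0,\y)-\mathrm{E}\|$ over the bounded set $\{|x_0|\leq L\}$: unlike in the cone region, here $\jap{x_0}^{-\m}$ is not small, so one genuinely needs to exploit that the $x$-component of the trajectory grows linearly (Lemma~\ref{lem-cl-convexity}, giving $|x(t)|\geq c|t|-c'$), whence $|\pa_t\x(t)|\leq C\jap{t}^{-1-\m}$ and, by differentiating the Hamilton equations as in Lemma~\ref{lem-cl-iptr1}, $\|\pa_\y\x(t)-\mathrm{E}\|\leq C R^{-\m}$ uniformly on $\{|x_0|\leq L\}\cap\O_{I_5}$, exactly parallel to \eqref{eq-potential-small}. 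So the smallness is controlled by the cut-off radius $R$, which we have already been free to enlarge; since $R$ was fixed once and for all in Subsection~\ref{subsection-cl-cutoff} independently of $L$, one simply records that $L$ is chosen \emph{after} $R$, or equivalently that the statement holds after enlarging $R$ if necessary, consistent with the standing convention. This gives $\re^d\times\O_{I_4}^0\subset\tilde\O_{I_5,\pm}$, which is the claim.
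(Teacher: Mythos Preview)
Your core idea --- invert $\x_0\mapsto\x_\pm(x_0,\x_0)$ by a contraction argument resting on the smallness of $\pa_{\x_0}\x_\pm-\mathrm{E}$ coming from \eqref{eq-potential-small} --- is exactly the paper's approach, only the paper phrases it as ``$\Lambda_+^{x_0}$ is a diffeomorphism whose range covers $\O_{I_4}^0$'' and appeals directly to Lemma~\ref{lem-cl-convexity}. The point you seem to have missed is that the bound $\|\pa_{\x_0}\x(x_0,\x_0;t)-\mathrm{E}\|\leq CR^{-\m}$ in Lemma~\ref{lem-cl-convexity} is already uniform over \emph{all} $(x_0,\x_0)\in\O_{I_5}$ and all $t$, with no restriction on $|x_0|$ or on $\cos(x_0,v(\x_0))$. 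Passing to $t\to\pm\infty$ gives $\|\pa_{\x_0}\x_\pm(x_0,\cdot)-\mathrm{E}\|\leq CR^{-\m}$ globally, and likewise $|\x_\pm(x_0,\x_0)-\x_0|\leq C'R^{-\m}$. Since $R$ was already chosen large enough that these are small, your contraction map $F_\pm$ works for every $x_0\in\re^d$, and the near/far splitting is unnecessary.

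Your far-region patch via Lemma~\ref{lem-cl-wm3} actually has a gap as written. The cones $\O_{I_4,+}^0(\b',L)$ over all $\b'>-1$ do \emph{not} exhaust $\{|x|\geq L\}\times\O_{I_4}^0$: the anti-parallel set $\{\cos(x,v(\x))=-1\}$ is never covered. More seriously, the $L$ produced by Lemma~\ref{lem-cl-wm3} depends on the cone parameter $\b'$ (it blows up as $\b'\downarrow-1$), so you cannot fix a single $L$ and claim Lemma~\ref{lem-cl-wm3} covers everything outside radius $L$. Both issues evaporate once you recognize that the contraction argument needs no splitting at all; just delete the far-region step and run the contraction uniformly in $x_0$, which is precisely what the paper does.
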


\begin{proof}
We consider the ``$+$'' case only. 
It suffices to show that for $x\in\re^d$, $\x\in\O_{I_4}^0$ there is $\x_0$ 
such that $\x_+(x,\x_0)=\x$. We recall, by Lemma~\ref{lem-cl-convexity} and the condition on $V_R$, 
for each $x\in\re^d$ the map
\[
\x_0\mapsto \x_+(x_0,\x_0)= \L^{x_0}_+(x_0) =\lim_{t\to\infty} \L^{x_0}_t(\x_0)
\]
is diffeomorphism from $\{\x\,|\, p(x_0,\x)\in I_5\}$ into $\O^0_{I_6}$, and the range covers
$\O^0_{I_4}$. Hence the inverse map is well-defined on $\O^0_{I_4}$, and 
hence $x_0=(\L_+^{x_0})^{-1}(\x)$ satisfies the required property. 
We note that by the eikonal equation \eqref{eq-eikonal}, we learn $p(x_0,\x_0)=p_0(\x)\in I_4$, 
and hence $(x_0,\x_0)\in \O_{I_4}$. 
\end{proof}


\section{Time-independent modifiers}\label{section-isozaki-kitada}

We construct the so-called Isozaki-Kitada modifiers, or time-independent modifiers, 
$J_\pm$, using solutions of eikonal equations $\g_\pm(x,\x)$ constructed in Section~\ref{section-cl}. 
We suppose $J_\pm$ has the form 
\[
J_\pm f(x) =(2\pi)^{-d/2}\int_{\re^d} e^{i\g_\pm(x,\x)} b_\pm(x,\x)\hat f(\x)d\x
\]
for $f\in\mathcal{S}(\re^d)$, and the symbols $b_\pm(x,\x)$ are elements of 
$S(1,dx^2/\jap{x}^2+d\x^2)$, and supported in $\tilde\O_{I_4,\pm}(\b)$ with some 
$-1<\b\leq 1$. 
Our construction is analogous to the one in Derezi\'nski-G\'erard \cite{Derezinski-Gerard} \S 4.15 
and Robert \cite{Robert}, though the setting is more general. 
We construct $b_\pm$ in the rest of this section. 
We mostly consider the ``$+$''-case. The other case can be handled similarly. 

We suppose $a^\pm$ has the form 
\[
b_\pm(x,\x)= \Th_\pm(1+a_1^\pm+a_2^\pm+\cdots), 
\]
where $\Th_\pm\in S(1,g)$, $a_j^\pm\in S(\jap{x}^{-\m-j},g)$, $j=1,2,\dots$, 
on $\tilde\O_{I_4,\pm}(\b)$, where we denote $g=dx^2/\jap{x}^2+d\x^2$. 
We construct these symbols so that 
\[
H J_\pm -J_\pm H_0\sim 0
\]
asymptotically as $|x|\to\infty$ in $\tilde\O_{I_4,\pm}(\b)$. 
At first we prepare a formula to compute $HJ_\pm$: 

\begin{lem}\label{lem-ik-osc-formula}
Suppose $b_\pm \in S(\jap{x}^{\n},g)$, $\n\in\re$, and  supported in $\tilde\O_{I_4,\pm}(\b)$ 
with some $\b>-1$. Then 
\begin{align*}
& e^{-i\g_\pm(x,\x)} H \bigbrac{e^{i\g_\pm(\cdot,\x)}b_\pm(\cdot,\x)} 
=p(x,\pa_x\g_\pm(x,\x))b_\pm(x,\x) \\
&\quad -\frac{i}{2}\pa_x\cdot \bigpare{(\pa_\x p)(x,\pa_x\g_\pm(x,\x))}b_\pm(x,\x) 
-i(\pa_\x p)(x,\pa_x \g_\pm(x,\x))\cdot\pa_x b_\pm(x,\x)\\
&\quad +r_\pm(x,\x),
\end{align*}
with $r_\pm\in S(\jap{x}^{-2+\n-\m},g)$. Moreover, $r_\pm$ are supported essentially in 
$\tilde\O_{I_4,\pm}(\b)$, i.e., they decay rapidly in $x$, away from $\tilde\O_{I_4,\pm}(\b)$. 
\end{lem}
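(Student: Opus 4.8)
The plan is to pass from $H$ to the Weyl quantization $\Op^W(p)$ of the space--cut--off symbol, and then to read off the first two orders of the symbolic expansion produced when one conjugates $\Op^W(p)$ by the oscillatory factor $e^{i\g_\pm(\cdot,\x)}$, collecting everything else into $r_\pm$.

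First I would reduce $H$ to $\Op^W(p)$. Writing $H=\Op^W(p_1)$, $p_1=p_0+V$, the symbol $p_1-p=(1-\i_1(|x|/R))V$ is supported in $\bra{|x|\leq 2R}$, so $\Op^W(p_1-p)\bigbrac{e^{i\g_\pm(\cdot,\x)}b_\pm(\cdot,\x)}$ has a smooth kernel compactly supported in $x$; such a term lies in $S(\jap{x}^s,g)$ for every $s$, hence may be absorbed into $r_\pm$. It remains to analyze $e^{-i\g_\pm(x,\x)}\Op^W(p)\bigbrac{e^{i\g_\pm(\cdot,\x)}b_\pm(\cdot,\x)}(x)$. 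Using the standard integral representation of $\Op^W(p)$, factoring out $e^{i\g_\pm(x,\x)}$ by means of
\[
\g_\pm(y,\x)-\g_\pm(x,\x)=(y-x)\cdot\Psi_\pm(x,y,\x), \qquad
\Psi_\pm(x,y,\x)=\int_0^1 (\pa_x\g_\pm)(x+s(y-x),\x)\,ds
\]
(so that $\Psi_\pm(x,x,\x)=\pa_x\g_\pm(x,\x)$), and substituting $\y\mapsto\y+\Psi_\pm(x,y,\x)$ in the momentum variable, one is left with $e^{i\g_\pm(x,\x)}$ times an oscillatory integral $(2\pi)^{-d}\iint e^{i(x-y)\cdot\y}a(x,y,\y)\,dy\,d\y$ whose amplitude is $a(x,y,\y)=p\bigpare{\tfrac{x+y}{2},\y+\Psi_\pm(x,y,\x)}b_\pm(y,\x)$. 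Non--stationary phase in $\y$ localizes this to the diagonal $y=x$ up to a rapidly $x$--decaying error, giving the asymptotic expansion $\sum_\a\frac{1}{\a!}\bigpare{\pa_\y^\a D_y^\a a}\big|_{y=x,\,\y=0}$, where $D_y=-i\pa_y$.

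The $\a=0$ term is exactly $p(x,\pa_x\g_\pm(x,\x))b_\pm(x,\x)$. For $|\a|=1$ I would use $\pa_{y_j}\bigpare{\tfrac{x+y}{2}}=\tfrac12$ and $\pa_{y_j}\Psi_{\pm,k}(x,y,\x)\big|_{y=x}=\tfrac12\pa_{x_j}\pa_{x_k}\g_\pm(x,\x)$ (since $\int_0^1 s\,ds=\tfrac12$); carrying out the differentiations and summing over $|\a|=1$ gives
\[
\sum_{|\a|=1}\frac{1}{\a!}\bigpare{\pa_\y^\a D_y^\a a}\big|_{y=x,\,\y=0}
=-i(\pa_\x p)(x,\pa_x\g_\pm)\cdot\pa_x b_\pm
-\frac{i}{2}\Bigpare{(\pa_x\cdot\pa_\x p)(x,\pa_x\g_\pm)+(\pa_\x^2 p)(x,\pa_x\g_\pm):\pa_x^2\g_\pm}b_\pm ,
\]
and the bracketed quantity equals, by the chain rule, $\pa_x\cdot\bigpare{(\pa_\x p)(x,\pa_x\g_\pm(x,\x))}$, producing precisely the second and third terms of the statement. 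The sum over $|\a|\geq 2$, together with the error from truncating the asymptotic expansion and the reduction $p_1\to p$ above, constitutes $r_\pm$.

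The remaining task is to place $r_\pm$ in $S(\jap{x}^{-2+\n-\m},g)$, essentially supported in $\tilde\O_{I_4,\pm}(\b)$. Since $\x$ stays in the bounded set $\O_{I_4}^0$, the $\x$--derivatives of $p_0$ evaluated at $\pa_x\g_\pm$ are $O(1)$; combining this with $V_R\in S(\jap{x}^{-\m}\jap{\x}^m,g)$ and with the estimates of Section~\ref{section-cl} --- in particular $\pa_x\g_\pm-\x\in S(\jap{x}^{-\m},g)$ and $\pa_x^\a\g_\pm\in S(\jap{x}^{1-\m-|\a|},g)$ for $|\a|\geq2$ on the relevant domain (Lemmas~\ref{lem-cl-wm2}, \ref{lem-cl-wm4}) --- one checks that $p(x,\pa_x\g_\pm)$, $(\pa_\x p)(x,\pa_x\g_\pm)$ and $\pa_x\cdot\bigpare{(\pa_\x p)(x,\pa_x\g_\pm)}$ gain an extra $\jap{x}^{-1-\m}$ at each $x$--derivative; together with $\pa_x^\a b_\pm=O(\jap{x}^{\n-|\a|})$ and the fact that each term in the $|\a|\geq2$ part carries at least two $y$--derivatives, this yields $r_\pm\in S(\jap{x}^{-2+\n-\m},g)$. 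The essential--support claim is again non--stationary phase in $\y$: away from $y=x$ one integrates by parts in $\y$ arbitrarily often, so modulo rapidly $x$--decaying terms the output lives where $b_\pm(y,\x)$ does, i.e., in $\tilde\O_{I_4,\pm}(\b)$. I expect the main obstacle to be, on the one hand, recognizing that the two contributions to the $|\a|=1$ term --- the one from the midpoint $\tfrac{x+y}{2}$ and the one from expanding $\Psi_\pm$ --- assemble exactly into the full divergence $\pa_x\cdot\bigpare{(\pa_\x p)(x,\pa_x\g_\pm)}$, and, on the other hand, the order counting for $r_\pm$: the gain in $x$--decay is not produced for free by differentiating, but must be traced back to the fact that $p-p_0$ and $\pa_x\g_\pm-\x$ are $O(\jap{x}^{-\m})$ and that every further $x$--derivative of these, and of $\g_\pm$, costs an additional $\jap{x}^{-1}$.
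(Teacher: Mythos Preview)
Your approach is essentially the same as the paper's. Both conjugate by $e^{i\g_\pm}$, absorb the phase into the momentum shift $\y\mapsto\y+\Psi_\pm$ (the paper calls this $\Phi_\pm$ and rewrites it as an integral over $[-1/2,1/2]$ to exhibit evenness in $x-y$, but it is the same object), and then extract the zeroth and first order terms; the paper does this by Taylor--expanding $p(\tfrac{x+y}{2},\y+\Phi_\pm)$ in $\y$ and evaluating the resulting $\y$--independent oscillatory integrals exactly via $\delta(x-y)$, while you use the equivalent diagonal expansion $\sum_\a\frac{1}{\a!}\pa_\y^\a D_y^\a a|_{y=x,\y=0}$. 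Your chain--rule identification of the $|\a|=1$ contribution with $-\tfrac{i}{2}\pa_x\!\cdot\!\bigl((\pa_\x p)(x,\pa_x\g_\pm)\bigr)b_\pm-i(\pa_\x p)(x,\pa_x\g_\pm)\!\cdot\!\pa_x b_\pm$ matches the paper's computation of $I_2$, and the remainder is handled in both by integration by parts together with the symbol bounds on $\g_\pm$ from Section~\ref{section-cl}.
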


\begin{proof}
We compute 
\begin{align*}
&e^{-i\g_\pm(x,\x)} H [e^{i\g_\pm(\cdot,\x)}b_\pm(\cdot,\x)]
 = e^{-i\g_\pm(x,\x)}p^W\!(x,D_x)[ e^{i\g_\pm(\cdot,\x)}b_\pm(\cdot,\x)]\\
&\quad = (2\pi)^{-d}\iint e^{-i\g_\pm(x,\x)+i(x-y)\cdot\y+i\g_\pm(y,\x)}p(\tfrac{x+y}{2},\y)
b_\pm(y,\x)dyd\y \\
&\quad = (2\pi)^{-d} \iint e^{i(x-y)\cdot(\y-\int_0^1 \pa_x\g_\pm(tx+(1-t)y,\x)dt)}
p(\tfrac{x+y}{2},\y)b_\pm(y,\x)dyd\y \\
&\quad = (2\pi)^{-d}\iint e^{i(x-y)\cdot\y} p(\tfrac{x+y}{2},\y+\F_\pm(x,y,\x))
b_\pm(y,\x)dyd\y,
\end{align*}
where 
\begin{align*}
\F_\pm(x,y,\x)&= \int_0^1 \pa_x \g_\pm(tx+(1-t)y,\x)dt\\
&= \int_{-1/2}^{1/2} \pa_x\g_\pm(\tfrac{x+y}{2}+t(x-y),\x)dt.
\end{align*}
We note $\F_\pm(x,y,\x)$ are even functions in $x-y$, and hence $(\pa_x-\pa_y)\F_\pm(x,y,\x)=0$. 
Moreover, we have 
\[
\F_\pm(x,y,\x)-\x = \int_{-1/2}^{1/2}(\pa_x\g_\pm(\tfrac{x+y}{2}+t(x-y),\x)-\x)dt
\]
and hence, by Lemma~\ref{lem-cl-ipHJ3}, 
\begin{equation}\label{eq-ik-phase}
\bigabs{\pa_x^\a\pa_y^\b\pa_\x^\c (\F_\pm(x,y,\x)-\x)}\leq C_{\a\b\c}\jap{x+y}^{-\m-|\a|-|\b|}, 
\quad\text{if } \biggabs{\frac{x}{|x|}-\frac{y}{|y|}}<\d\ll 1.
\end{equation}
We now use Taylor expansion in $\y$ to learn 
\begin{align*}
p(\tfrac{x+y}{2},\y+\F(x,y,\x))
&=p(\tfrac{x+y}{2},\F(x,y,\x)) + \y\cdot(\pa_\x p)(\tfrac{x+y}{2},\F(x,y,\x))\\
&\quad +\frac12 \int_0^1 \sum_{j,k}\y_j\y_k (\pa_{\x_j}\pa_{\x_k}p)(\tfrac{x+y}{2},t\y+\F(x,y,\x))dt.
\end{align*}
We substitute this to the above equation to obtain
\[
e^{-i\g_\pm(x,\x)} H [e^{i\g_\pm(\cdot,\x)}b_\pm(\cdot,\x)]=I_1+I_2+I_3, 
\]
where 
\begin{align*}
\mathrm{I}_1 &=(2\pi)^{-d}\iint e^{i(x-y)\cdot\y} p(\tfrac{x+y}{2},\F(x,y,\x))b_\pm(y,\x)dyd\y, \\
\mathrm{I}_2&= (2\pi)^{-d}\iint e^{i(x-y)\cdot\y} \y\cdot(\pa_\x p)(\tfrac{x+y}{2},\F(x,y,\x))b_\pm(y,\x)
dyd\y, \\
\mathrm{I}_3&=(2\pi)^{-d}\iint\int_0^1 \sum_{j,k} e^{i(x-y)\cdot\y} \y_j\y_k (\pa_{\x_j}\pa_{\x_k}p)
(\tfrac{x+y}{2},t\y+\F(x,y,\x))\times \\
&\qquad \times b_\pm(y,\x)dtdyd\y.
\end{align*}
By oscillatory integrations, we have 
\begin{align*}
\mathrm{I}_1 &= p(x,\pa_x\g_\pm(x,\x))b_\pm(x,\x), \\
\mathrm{I}_2&= -(2\pi)^{-d}\iint e^{i(x-y)\cdot\y} i\pa_y\cdot\bigbra{(\pa_\x p)(\tfrac{x+y}{2},\F(x,y,\x))
b_\pm(y,\x)}dyd\y \\
& =-\frac{i}{2}\pa_x\cdot \bigbra{(\pa_\x p)(x,\pa_x\g_\pm(x,\x))}b_\pm(x,\x)
-i (\pa_\x p)(x,\pa_x\g_\pm(x,\x))\cdot  \pa_x b_\pm(x,\x). 
\end{align*}
By virtue of \eqref{eq-ik-phase}, and using integration by parts, 
we also have $r_\pm=\mathrm{I}_3 \in S(\jap{x}^{-2+\n-\m},g)$.
It is easy to observe $r_\pm$ are essentially supported in $\tilde\O_{I_2,\pm}(\b)$. 
\end{proof}

We now compute the 0-th order term $\Th_\pm(x,\x)$ in the above setting. 
This factor is actually the well-known volume factor in the WKB analysis. 

\begin{lem} \label{lem-ik-Th-solution}
Let $\Th_\pm(x,\x)= \Bigpare{\det\Bigpare{\frac{\pa^2\g_\pm}{\pa x\pa\x}}}^{1/2}$, 
then $\Th_\pm$ satisfies 
\[
\frac12 \pa_x\cdot \bigbra{(\pa_\x p)(x,\pa_x\g_\pm(x,\x))}\Th_\pm(x,\x)
+(\pa_\x p)(x,\pa_x\g_\pm(x,\x))\cdot \pa_x \Th_\pm(x,\x) =0. 
\]
Moreover, $\Th_\pm-1 \in S(\jap{x}^{-\m},g)$ on $\tilde\O_{I_4,\pm}(\b)$. 
\end{lem}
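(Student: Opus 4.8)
The plan is to reduce the transport equation to the Piola (cofactor–divergence) identity by way of the eikonal equation. Write $N(x,\x)=\frac{\pa^2\g_\pm}{\pa x\,\pa\x}$ for the mixed Hessian, with entries $N_{ij}=\pa_{x_i}\pa_{\x_j}\g_\pm$, so that $\Th_\pm^2=\det N$. By Lemma~\ref{lem-cl-wm2} one has $N=\mathrm{E}+O(\jap{x}^{-\m})$ on $\tilde\O_{I_5,\pm}(\b)\supset\tilde\O_{I_4,\pm}(\b)$, so on $\tilde\O_{I_4,\pm}(\b)$ (with $|x|$ large, as will be the case when this factor is used) $\det N$ is bounded away from $0$ and $\Th_\pm=(\det N)^{1/2}$ is a well-defined, smooth, positive function. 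Set also $W(x)=(\pa_\x p)(x,\pa_x\g_\pm(x,\x))$, the velocity field occurring in the statement. Multiplying the asserted identity by $2\Th_\pm$ turns it into $\pa_x\cdot(\Th_\pm^2\,W)=0$, equivalently $\sum_k\pa_{x_k}\bigpare{(\det N)\,W_k}=0$; this is what I would prove.

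First I would differentiate the eikonal equation \eqref{eq-eikonal}, $p(x,\pa_x\g_\pm(x,\x))=p_0(\x)$, in $\x_m$, which gives $\sum_j W_j\,N_{jm}=\pa_{\x_m}p_0(\x)$, i.e.\ $N^{\mathrm{T}}W=\pa_\x p_0$. Since $(\det N)\,N^{-1}$ is the transposed cofactor matrix of $N$, this yields $(\det N)\,W_k=\sum_m(\mathrm{cof}\,N)_{km}\,\pa_{\x_m}p_0(\x)$, and hence
\begin{align*}
\pa_x\cdot(\Th_\pm^2\,W)&=\sum_k\pa_{x_k}\Bigpare{\sum_m(\mathrm{cof}\,N)_{km}\,\pa_{\x_m}p_0(\x)}\\
&=\sum_m\pa_{\x_m}p_0(\x)\,\sum_k\pa_{x_k}(\mathrm{cof}\,N)_{km},
\end{align*}
the terms carrying $\pa_{x_k}\pa_{\x_m}p_0$ vanishing because $p_0$ does not depend on $x$. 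The final ingredient is the Piola identity: since $N_{ij}=\pa_{x_i}\bigpare{\pa_{\x_j}\g_\pm}$ is, up to transposition, the $x$-Jacobian of the $C^\infty$ map $x\mapsto\pa_\x\g_\pm(x,\x)$, its cofactor matrix satisfies $\sum_k\pa_{x_k}(\mathrm{cof}\,N)_{km}=0$ for every $m$. This gives the transport equation. (Geometrically $\Th_\pm^2\,dx$ is the natural density on the Lagrangian manifold $\bra{(x,\pa_x\g_\pm(x,\x))}$, invariant under the flow of $p$ restricted there since this manifold lies in $\bra{p=p_0(\x)}$; but the Piola route bypasses the dynamics.)

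For the symbol bound I would write $N=\mathrm{E}+N_1$ with $N_1=\pa_x\pa_\x\bigpare{\g_\pm(x,\x)-x\cdot\x}$, which lies in $S(\jap{x}^{-\m},g)$ (matrix-valued) by Lemma~\ref{lem-cl-wm2}. Then $\det N-1$ is a polynomial in the entries of $N_1$ with no constant term, so it too lies in $S(\jap{x}^{-\m},g)$, using that a product of elements of $S(\jap{x}^{-\m},g)$ lies in $S(\jap{x}^{-2\m},g)\subset S(\jap{x}^{-\m},g)$. Finally, since $\det N$ stays in a fixed compact subset of $(0,\infty)$ on the region in question, $\Th_\pm-1=\sqrt{1+(\det N-1)}-1$ is the composition of $\det N-1\in S(\jap{x}^{-\m},g)$ with the smooth function $s\mapsto\sqrt{1+s}-1$, which vanishes at $s=0$; by the symbol chain rule the composition again lies in $S(\jap{x}^{-\m},g)$, i.e.\ $\Th_\pm-1\in S(\jap{x}^{-\m},g)$. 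I do not expect a serious obstacle: the real content is the classical Piola identity, and the only point requiring care is keeping the transpose/adjugate index conventions straight so that Piola applies in exactly the form used above.
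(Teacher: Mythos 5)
Your proof is correct, but it takes a genuinely different route from the paper's. The paper differentiates the eikonal equation \eqref{eq-eikonal} twice (first in $\x_j$, then in $x_i$), writes the result as a matrix identity involving $N=\pa_x\pa_\x\g_\pm$, multiplies by $N^{-1}$, takes the trace, and identifies $\pa_{x_k}\Th_\pm/\Th_\pm$ via Jacobi's formula for the derivative of a determinant; this is the classical WKB transport computation and yields the stated equation directly. You instead multiply the transport equation by $2\Th_\pm$ to put it in divergence form $\pa_x\cdot(\Th_\pm^2 W)=0$, use only a single $\x$-differentiation of the eikonal equation to get $N^{\mathrm T}W=\pa_\x p_0$, hence $(\det N)W=(\mathrm{cof}\,N)\,\pa_\x p_0$, and conclude from the Piola (null-Lagrangian) identity for the cofactors of the Jacobian of $x\mapsto\pa_\x\g_\pm(x,\x)$ — your index bookkeeping ($Du=N^{\mathrm T}$, $\mathrm{cof}(N^{\mathrm T})=(\mathrm{cof}\,N)^{\mathrm T}$) is right, and the step where the $\pa_{x_k}\pa_{\x_m}p_0$ terms drop out uses exactly the $x$-independence of $p_0$. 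Both arguments need nondegeneracy at the same point: the paper invokes invertibility of $N$, you need $\Th_\pm\neq 0$ to divide back by $2\Th_\pm$; your remark that $\det N$ is bounded away from $0$ for large $|x|$ (by Lemma~\ref{lem-cl-wm2}), which is where the symbols live after the cut-off $\i_\pm$, settles this just as well. What each approach buys: the paper's computation is self-contained calculus on $\g_\pm$ and produces the equation in exactly the form used later, while yours is shorter and more conceptual, replacing third-derivative manipulations and matrix inversion by the purely algebraic cofactor-divergence identity. Your treatment of the second claim ($\Th_\pm-1\in S(\jap{x}^{-\m},g)$ via $\det N-1$ being a constant-free polynomial in the entries of $N-\mathrm{E}\in S(\jap{x}^{-\m},g)$, followed by composition with $s\mapsto\sqrt{1+s}-1$) is the same observation the paper makes in one line, with the details filled in.
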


\begin{proof}
By differentiating the eikonal equation \eqref{eq-eikonal} in $\x_j$, we learn 
\[
\sum_{k=1}^d \pa_{\x_j}\pa_{x_k}\g_\pm(x,\x)\pa_{\x_k}p(x,\pa_x\g_\pm(x,\x))-\pa_{\x_j}p_0(\x) =0.
\]
Then we differentiate this in $x_i$: 
\[
\sum_{k=1}^d \pa_{\x_k}p \, \pa_{x_k}(\pa_{x_i}\pa_{\x_j}\g_\pm)
+\sum_{k=1}^d \pa_{\x_j}\pa_{x_k}\g_\pm \, \pa_{x_i}(\pa_{\x_k}p(x,\pa_x\g_\pm))=0.
\]
We write this in matrix form to obtain
\[
\sum_{k=1}^d (\pa_{\x_k}p) \frac{\pa}{\pa x_k}\biggpare{\frac{\pa^2\g_\pm}{\pa x\pa \x}}
+\biggbrac{\frac{\pa}{\pa x}\biggpare{\frac{\pa p}{\pa \x}(x,\pa_x\g_\pm)}}
\biggpare{\frac{\pa^2 \g_\pm}{\pa x\pa \x}}=0. 
\]
Since $\pa_x\pa_\x \g_\pm$ is invertible (as a matrix), we have 
\[
\sum_{k=1}^d (\pa_{\x_k}p) \biggbrac{\frac{\pa}{\pa x_k}\biggpare{\frac{\pa^2\g_\pm}{\pa x\pa \x}}}
\biggpare{\frac{\pa^2 \g_\pm}{\pa x\pa \x}}^{-1}
+\biggbrac{\frac{\pa}{\pa x}\biggpare{\frac{\pa p}{\pa \x}(x,\pa_x\g_\pm)}}
=0. 
\]
Then we take the trace: 
\[
\sum_{k=1}^d (\pa_{\x_k}p) \trace\biggbrac{\biggbrac{\frac{\pa}{\pa x_k}\biggpare{\frac{\pa^2\g_\pm}{\pa x\pa \x}}}
\biggpare{\frac{\pa^2 \g_\pm}{\pa x\pa \x}}^{-1}}
+\sum_{k=1}^d\frac{\pa}{\pa x_k}\biggpare{\frac{\pa p}{\pa \x_k}(x,\pa_x\g_\pm)}
=0. 
\]
On the other hand, by the derivative formula for the determinant, we learn 
\begin{align*}
\frac{\pa }{\pa x_k}\Th_\pm(x,\x)  
&= \frac12\frac{\pa}{\pa x_k}\biggbrac{\det\biggpare{\frac{\pa^2\g_\pm}{\pa x\pa \x}}}
\biggpare{\det\biggpare{\frac{\pa^2\g_\pm}{\pa x\pa \x}}}^{-1/2}\\
&=\frac12 \trace\biggbrac{\biggbrac{\frac{\pa}{\pa x_k}\biggpare{\frac{\pa^2\g_\pm}{\pa x\pa \x}}}
\biggpare{\frac{\pa^2 \g_\pm}{\pa x\pa \x}}^{-1}}\Th_\pm(x,\x), 
\end{align*}
and hence $\Th_\pm$ satisfies 
\[
\pa_\x p(x,\pa_x\g_\pm)\cdot \pa_x \Th_\pm
+\frac12 \pa_x\cdot((\pa_\x p)(x,\pa_x\g_\pm))\Th_\pm  =0.
\]
The last claim follows from the observation: $(\pa_x\pa_\x\g_\pm -E)\in S(\jap{x}^{-\m},g)$ 
on $\O_{I_4,\pm}(\b)$. 
\end{proof}

Now we construct symbols of $J_\pm$. By Lemmas~\ref{lem-ik-osc-formula}, 
\ref{lem-ik-Th-solution}, we learn, at least formally, 
\begin{align*}
&e^{-i\g_\pm(x,\x)} H [ e^{i\g_\pm(\cdot,\x)}\Th_\pm(\cdot,\x)] 
= p_0(\x)\Th_\pm(x,\x) \\
&\quad  -\frac{i}{2}\pa_x\cdot((\pa_\x p)(x,\pa_\x\g_\pm(x,\x)))\Th_\pm(x,\x) 
- i(\pa_\x p)(x,\pa_x\g_\pm(x,\x))\cdot \pa_x\Th_\pm(x,\x) \\
&\quad +r_\pm(x,\x) \\
&= p_0(\x)\Th_\pm(x,\x) + r_{0,\pm}(x,\x)\Th_\pm(x,\x)
\end{align*}
where $r_{0,\pm}\in S(\jap{x}^{-2-\m},g)$ on $\O_{I_4,\pm}(\b)$  with any $\b>-1$. 
We note $\Th_\pm$ do not satisfy the support property of Lemma~\ref{lem-ik-osc-formula}, 
but we will introduce cutoff functions, and the following computations are readily justified. 
By the same computation, for $b_\pm\in S(\jap{x}^{-\n},g)$, we have 
\begin{align*}
&e^{-i\g_\pm(x,\x)} H [ e^{i\g_\pm(\cdot,\x)}\Th_\pm(\cdot,\x)b_\pm(\cdot,\x)] 
= p_0(\x)\Th_\pm(x,\x) b_\pm(x,\x)\\
&\quad  
- i(\pa_\x p)(x,\pa_x\g_\pm(x,\x))\cdot (\pa_x b_\pm(x,\x))\Th_\pm(x,\x) 
+\tilde r_{0,\pm}(x,\x)\Th_\pm(x,\x)
\end{align*}
where $\tilde r_{0,\pm}\in S(\jap{x}^{-2-\m-\n},g)$ on $\O_{I_4,\pm}(\b)$. 
Thus, if we set
\[
a_1^\pm(x,\x) =i \int_0^{\pm\infty} r_{0,\pm}(\exp tH_p(x,\pa_x\g_\pm(x,\x))) dt, 
\]
then they solve the equations: 
\[
 (\pa_\x p)(x,\pa_x\g_\pm(x,\x))\cdot  \pa_x a_{1}^\pm(x,\x)= -i r_{0,\pm}(x,\x), 
\]
and hence
\begin{align*}
&e^{-i\g_\pm(x,\x)} H [ e^{i\g_\pm(\cdot,\x)}\Th_\pm(\cdot,\x)(1+a_1^\pm(\cdot,\x))] \\
&\quad = p_0(\x)\Th_\pm(x,\x)(1+ a_1^\pm(x,\x)) +r_{1,\pm}(x,\x)\Th_\pm(x,\x), 
\end{align*}
where $r_{1,\pm}\in S(\jap{x}^{-3-\m},g)$ on $O_{I_4,\pm}(\b)$. 
Moreover, $a_i^\pm$ satisfy the boundary condition: $a_1^\pm(x,\x)\to 0$ as $|x|\to\infty$ 
in $\O_{I_4,\pm}(\b)$. We note that if we set
\[
(z(t),\z(t))= \exp tH_p (x,\pa_x \g_\pm(x,\x)), 
\]
then $(z(t),\z(t))$ is the solution to the Hamilton equation with the 
boundary conditions $\z(t)\to\x$ as $t\to\pm\infty$, and $z(0)=x$.
Thus, by using Lemmas~\ref{lem-cl-ipHJ3} and \ref{lem-cl-ogic2}, we can show 
$a_1^\pm \in S(\jap{x}^{-1-\m},g)$ on $\O_{I_4,\pm}(\b)$.

We iterate this procedure to construct $a_j^\pm(x,\x)$, $j=2,3,\dots$. Namely, 
we set $r_{k,\pm}$ so that 
\begin{align*}
&r_{k,\pm}(x,\x)\Th_\pm(x,\x)\\
&\quad =e^{-i\g_\pm} H [e^{i\g_\pm}\Th_\pm(1+a_1^\pm+\cdots +a_k^\pm)] - 
p_0(\x)\Th_\pm(1+a_1^\pm+\cdots +a_k^\pm)\\
&\quad \in S(\jap{x}^{-k-2-\m},g)\text{ on }\O_{I_4,\pm}(\b).
\end{align*}
Then we solve the equation
\[
 (\pa_\x p)(x,\pa_x\g_\pm(x,\x))\cdot  \pa_x a_{k+1}^\pm(x,\x)= -i r_{k,\pm}(x,\x), 
\]
with the boundary condition: $a_k^\pm(x,\x)\to 0$ as $|x|\to\infty$ in $\O_{I_4,\pm}(\b)$. 
The solutions are given by 
\[
a_{k+1}^\pm(x,\x) =i \int_0^{\pm\infty} r_k^\pm(\exp tH_p(x,\pa_x\g_\pm(x,\x))) dt, 
\]
and we can show $a_{k+1}^\pm\in S(\jap{x}^{-k-1-\m},g)$ on $\O_{I_4,\pm}(\b)$ with any $\b>-1$. 

We define $a^\pm(x,\x)$ as an asymptotic sum of $1+a_1^\pm+\cdots$, i.e., 
$a^\pm\in S(1,g)$ on $\O_{I_4,\pm}(\b)$ such that for any $N\geq 1$, 
\[
a^\pm(x,\x)- \biggpare{1+\sum_{j=1}^N a_j^\pm(x,\x)}\in S(\jap{x}^{-N-2-\m},g)
\text{ on }\O_{I_4,\pm}(\b),
\]
with arbitrary $\b>-1$. 

Then we introduce a cut-off to these symbols. Let $R_0\gg 0$ and 
$-1<\b_{\pm,1}<\b_{\pm,2}< 1$. We choose smooth functions $\i_1(x)$, 
$\i_2(\l)$ and $\i_{3,\pm}(\s)$ such that 
\begin{align*}
&\i_1(x) =\i_1(|x|) =\begin{cases} 0 \quad &\text{if }|x|\leq 1, \\
1\quad & \text{if }|x|\leq 2, \end{cases}\\
&\i_2(\l) = \begin{cases} 1 \quad&\text{if }\l\in I_3, \\
0 \quad &\text{if }\l\notin I_4, \end{cases}\\
& \i_{3,\pm}(\s) =\begin{cases} 0 \quad &\text{if }\s\leq\b_{\pm,1},\\
1 \quad &\text{if }\s\geq \b_{\pm,2}, \end{cases}
\end{align*}
and $0\leq \i_1(x),\i_2(\l),\i_{3,\pm}(\s)\leq 1$. We then set 
\[
\i_\pm(x,\x) 
=\i_1(x/R_0)\i_2(p_0(\x))\i_{3,\pm}(\pm\cos(x,v(\pa_x\g_\pm(x,\x)))).
\]
We can now define our time-independent modifiers by 
\[
J_\pm f(x) =(2\pi)^{-d/2} \int e^{i\g_\pm(x,\x)}
\Theta_\pm(x,\x)
\i_\pm(x,\x) a^\pm(x,\x) \hat f(\x)d\x
\]
for $f\in \mathcal{S}(\re^d)$. 
On the support of the cut-off functions $\i_\pm(x,\x)$, the above formal computations 
can be readily justified, and we can show the following properties of $J_\pm$. 
We define interaction operators $G_\pm$ by 
\[
G_\pm = H J_\pm -J_\pm H_0, 
\]
which are bounded operators on $L^2(\re^d)$. 

\begin{lem} \label{lem-ik-G-formula}
There are symbols $g_\pm(x,\x)\in S(\jap{x}^{-1},g)$ such that 
\[
G_\pm f(x) = (2\pi)^{-d/2} \int e^{i\g_\pm(x,\x)}\Theta_\pm(x,\x)
g_\pm(x,\x) \hat f(\x)d\x
\]
for $f\in \mathcal{S}(\re^d)$. Moreover, $g_\pm$ are essentially supported in 
$\tilde\O_{I_4,\pm}(\b_{\pm,1})\setminus \tilde\O_{I_3,\pm}(\b_{\pm,2})$, i.e., 
for any $\a,\b\in\ze_+^d$ and $N$, there is $C_{\a\b N}>0$ such that 
\[
\bigabs{\pa_x^\a\pa_\x^\b g_\pm(x,\x)}
\leq C_{\a\b N}\jap{x}^{-N}, \quad 
(x,\x)\notin \tilde\O_{I_4,\pm}(\b_{\pm,1})\setminus \tilde\O_{I_3,\pm}(\b_{\pm,2}).
\]
The principal symbols of $g_\pm(x,\x)$ are given by 
$-i (\pa_\x p)(x,\pa_\x\g_\pm(x,\x))\cdot \pa_x\i_\pm(x,\x)$, i.e, 
\[
g_\pm(x,\x)-\bigbrac{-i (\pa_\x p)(x,\pa_x\g_\pm(x,\x))\cdot \pa_x\i_\pm(x,\x)}
\in S(\jap{x}^{-2},g).
\]
\end{lem}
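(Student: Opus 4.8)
The plan is to read the amplitude of $G_\pm=HJ_\pm-J_\pm H_0$ off from Lemma~\ref{lem-ik-osc-formula}, applied to $b_\pm=\Theta_\pm\i_\pm a^\pm\in S(1,g)$ (supported in $\tilde\O_{I_4,\pm}(\b_{\pm,1})$), where $a^\pm$ is the asymptotic sum constructed above. Since $\widehat{H_0 f}(\x)=p_0(\x)\hat f(\x)$, the operator $J_\pm H_0$ has amplitude $e^{i\g_\pm(x,\x)}\Theta_\pm\i_\pm a^\pm\,p_0(\x)$; subtracting and invoking Lemma~\ref{lem-ik-osc-formula} yields $G_\pm f(x)=(2\pi)^{-d/2}\int e^{i\g_\pm(x,\x)}\Theta_\pm(x,\x)g_\pm(x,\x)\hat f(\x)\,d\x$ with
\begin{align*}
\Theta_\pm g_\pm &= \bigpare{p(x,\pa_x\g_\pm(x,\x))-p_0(\x)}b_\pm-\tfrac{i}{2}\pa_x\cdot\bigpare{(\pa_\x p)(x,\pa_x\g_\pm(x,\x))}b_\pm\\
&\quad -i(\pa_\x p)(x,\pa_x\g_\pm(x,\x))\cdot\pa_x b_\pm+r_\pm,
\end{align*}
where $r_\pm\in S(\jap{x}^{-2-\m},g)$ is essentially supported in $\tilde\O_{I_4,\pm}(\b_{\pm,1})$. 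On $\mathrm{supp}\,\i_\pm$ the eikonal equation \eqref{eq-eikonal} holds (by Lemmas~\ref{lem-cl-wm3} and \ref{lem-cl-wm4}), so the first term drops out.

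Next I would expand $\pa_x b_\pm$ by the Leibniz rule into the parts where $\pa_x$ hits $\Theta_\pm$, $\i_\pm$, or $a^\pm$. The $\pa_x\Theta_\pm$-part together with the $\pa_x\cdot(\pa_\x p)$ term equals $-i\i_\pm a^\pm$ times the left side of the transport identity in Lemma~\ref{lem-ik-Th-solution}, hence vanishes. For the $\pa_x a^\pm$-part I would use the very construction of $a^\pm$: summing the transport equations $(\pa_\x p)(x,\pa_x\g_\pm)\cdot\pa_x a_{k+1}^\pm=-ir_{k,\pm}$ shows $(\pa_\x p)(x,\pa_x\g_\pm)\cdot\pa_x a^\pm\in S(\jap{x}^{-2-\m},g)$ on $\O_{I_4,\pm}(\b_{\pm,1})$, and, more importantly, that $-i\Theta_\pm(\pa_\x p)(x,\pa_x\g_\pm)\cdot\pa_x a^\pm$ cancels the part of $r_\pm$ coming from $b_\pm=\Theta_\pm a^\pm$ up to a rapidly decaying symbol, exactly as in the iteration producing $a^\pm$; the only mismatch caused by the cut-off is a symbol supported where $\pa_x\i_\pm\ne0$. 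Thus $\Theta_\pm g_\pm=-i\Theta_\pm a^\pm\,(\pa_\x p)(x,\pa_x\g_\pm)\cdot\pa_x\i_\pm$ modulo a symbol in $S(\jap{x}^{-2-\m},g)$ that is essentially supported in $\tilde\O_{I_4,\pm}(\b_{\pm,1})$ and rapidly decaying on $\tilde\O_{I_3,\pm}(\b_{\pm,2})$, plus a globally rapidly decaying term.

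It then remains to divide by $\Theta_\pm$ and track supports. Since $\Theta_\pm-1\in S(\jap{x}^{-\m},g)$ near the essential support of $\Theta_\pm g_\pm$, we have $\Theta_\pm^{-1}\in S(1,g)$ there and may extend it arbitrarily; using $a^\pm-1\in S(\jap{x}^{-1-\m},g)$ for the principal part gives $g_\pm\in S(\jap{x}^{-1},g)$ with $g_\pm+i(\pa_\x p)(x,\pa_x\g_\pm(x,\x))\cdot\pa_x\i_\pm\in S(\jap{x}^{-2},g)$. Finally $\pa_x\i_\pm$ is supported where $\i_\pm$ is not locally constant; by the eikonal identity $p(x,\pa_x\g_\pm)=p_0(\x)$ and the definition of $\i_\pm$, and taking $R_0$ large, this set lies in $\tilde\O_{I_4,\pm}(\b_{\pm,1})\setminus\tilde\O_{I_3,\pm}(\b_{\pm,2})$ up to a bounded-in-$x$ set (which is automatically negligible for the $\jap{x}^{-N}$ estimates), while the lower-order error was arranged to be essentially supported in $\tilde\O_{I_4,\pm}(\b_{\pm,1})$ and rapidly decaying on $\tilde\O_{I_3,\pm}(\b_{\pm,2})$; together these give the stated essential support. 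Boundedness of $G_\pm$ on $L^2(\re^d)$ follows from $g_\pm\in S(\jap{x}^{-1},g)$ and boundedness of $\pa_x\g_\pm-\x$ via the Asada--Fujiwara calculus. The part I expect to be delicate is this last bookkeeping: showing that the contributions which live on $\mathrm{supp}\,\i_\pm$ but off the collar $\tilde\O_{I_4,\pm}(\b_{\pm,1})\setminus\tilde\O_{I_3,\pm}(\b_{\pm,2})$ — those from $r_\pm$ and from $(\pa_\x p)(x,\pa_x\g_\pm)\cdot\pa_x a^\pm$ — cancel each other up to rapidly decaying terms, which is precisely the property by which the $a_j^\pm$ were defined.
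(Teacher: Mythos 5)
Your proposal is correct and follows essentially the route the paper intends: the paper states this lemma as a direct consequence of the Section~\ref{section-isozaki-kitada} construction (Lemma~\ref{lem-ik-osc-formula} applied to $b_\pm=\Th_\pm\i_\pm a^\pm$, the eikonal equation, the transport identity of Lemma~\ref{lem-ik-Th-solution}, and the iterative transport equations defining $a_j^\pm$), which is exactly the cancellation scheme you spell out, with the surviving principal term being the one where $\pa_x$ hits $\i_\pm$. Your handling of the remaining bookkeeping (division by $\Th_\pm$, the compact $\i_1(x/R_0)$ transition region, and rapid decay on the interior region where $\i_\pm\equiv 1$ via the telescoping construction of $a^\pm$) is consistent with the paper's "readily justified" formal computation.
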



\section{Wave operators, scattering operators, and scattering matrix}
\label{section-qm}

We follows the argument of \cite{N2016}, and we mainly explain the necessary modifications.
In the construction of $J_\pm$ in the last section, we choose $\b_{\pm,i}$, $i=1,2$, 
such that 
\[
-1<\b_{+,1}=\b_{-,1} <\b_{+,2}=\b_{-,2}<0,
\]
and fix them. We denote $\b_i=\b_{\pm,i}$, $i=1,2$. 
Using these modifiers $J_\pm$, we can now define 
wave operators with time-independent modifiers (or Isozaki-Kitada modifiers). 
\[
W_\pm = \slim_{t\to\pm\infty} e^{itH} J_\pm e^{-itH_0}.
\]
Then the existence of these limits are proved by the same method as in the papers 
by Isozaki-Kitada \cite{Isozaki-Kitada1} or Robert \cite{Robert}, and $W_\pm$ are partial isometries 
on $\Ran[E_{I_3}(H_0)]$. Moreover, the asymptotic completeness is also proved 
by the standard method:
\[
\Ran [W_\pm E_{I_3}(H_0)] = E_{I_3}(H)\mathcal{H}_{c}(H),
\]
where $\mathcal{H}_{c}(H)$ is the continuous spectral subspace with respect to $H$. 
The scattering operator $S$ (with essentially a smooth energy cut-off $\i_2(H_0)$) 
is defined by 
\[
S=(W_+)^* W_-,
\]
and it is an isometry on $\Ran[E_{I_3}(H_0)]$. It is well-known that $S$ commutes with 
the free Hamiltonian: $SH_0=H_0 S$. 

We recall a representation formula for the scattering matrix: 
\begin{equation}\label{eq-s-matrix-representation}
S(\l) = -2\pi iT(\l) J_+^* G_- T(\l)^* 
+2\pi i T(\l) G_+^* (H-\l-i0)^{-1} G_- T(\l)^* 
\end{equation}
for $\l\in I$, which is due to Isozaki-Kitada \cite{Isozaki-Kitada2} and Yafaev \cite{Yafaev-1998}. 
We give a proof of the formula in Appendix~\ref{appendix-smatrix-formula} for the completeness. 
The second term in the right hand side is a smoothing operator by virtue of the 
microlocal resolvent estimate of Isozaki-Kitada type \cite{IK0,IK3}. 
The resolvent estimate under our setting is proved in Nakamura \cite{N2017}. 
Thus it remains to compute the first term as a Fourier integral operator. 

We consider the oscillatory integral: 
\begin{align*}
\mathcal{F} J_+^* G_- f(\x) 
&=(2\pi)^{-d} \iiint e^{-i\g_+(x,\x) +i\g_-(x,\y) -iy\cdot\y} \Theta_+(x,\x)\Theta_-(x,\y)\times\\
&\quad \times \overline{a_+(x,\x)} g_-(x,\y) f(y) dyd\y dx, 
\end{align*}
and we compute the integration in $(x,\y)$ using the stationary phase method. 
The stationary phase points are given by 
\begin{equation}\label{eq-stph}
\begin{aligned}
&\pa_x(-\g_+(x,\x)+\g_-(x,\y))= 0, \quad\text{i.e., } \pa_x\g_+(x,\x) =\pa_x \g_-(x,\y), \\
&\pa_\y (\g_-(x,\y)-y\cdot\y)=0, \quad \text{i.e., }\pa_\y \g_-(x,\y)=y. 
\end{aligned}
\end{equation}
Thus these stationary points correspond to the map
\[
\begin{pmatrix} y \\ \y \end{pmatrix}
= \begin{pmatrix} \pa_\y\g_-(x,\y) \\ \y \end{pmatrix}
\underset{w_-}{\longmapsfrom}
\begin{pmatrix} x \\ \pa_x \g_-(x,\y)\end{pmatrix}
= \begin{pmatrix} x \\  \pa_x\g_+(x,\x) \end{pmatrix}
\underset{w_+}{\longmapsto}
\begin{pmatrix} \pa_\x\g_+(x,\x) \\ \x \end{pmatrix}.
\]
These classical wave maps $w_\pm$ are local diffeomorphism, and the composition is also. 
For fixed $(y,\x)$, with $p_0(\x)\in I$, we write the stationary phase points by
\[
x=x(y,\x), \quad \y=\y(y,\x), 
\]
and we set
\[
\g(y,\x) = \g_+(x(y,\x),\x)-\g_-(x(y,\x),\y(y,\x))+y\cdot \y(y,\x)
\]
be the stationary phase. We can show by the construction of 
$\g_\pm$ that $\g(x,\x)-x\cdot\x\in S(\jap{x}^{1-\m},g)$ on $\bigset{(x,\x)}{\b_1<\cos(x,v(\x))<-\b_1}$. 
Then, as is expected, $\g(y,\x)$ is the generating function 
of the classical scattering map : $w_+\circ w_-^{-1}$, i.e., 
\[
\pa_y\g(y,\x) = \y(y,\x), \quad 
\pa_\x\g(y,\x) = \pa_\x\g_+(x(y,\x),\x).
\]
In fact, we have 
\begin{align*}
\pa_y \g(y,\x) 
&=(\pa_y x) \pa_x\g_+(x,\x)-(\pa_y x)\pa_x\g_-(x,\y) \\
&\qquad -(\pa_y\y) \pa_\y\g_-(x,\y)  +(\pa_y\y)y +\y\\
&= (\pa_y x) (\pa_x\g_+(x,\x)-\pa_x\g_-(x,\y)) \\
&\qquad -(\pa_y\y) (\pa_\y\g_-(x,\y) -y) +\y\
=\y
\end{align*}
by the stationary phase equations. Similarly we have 
\begin{align*}
\pa_\x\g(y,\x)
&= \pa_\x\g_+(x,\x) + (\pa_\x x) (\pa_x\g_+(x,\x)-\pa_x\g_-(x,\y))\\
&\qquad -(\pa_\x\y) (\pa_\y\g_-(x,\y) -y)\\
&= \pa_\x\g_+(x,\x).
\end{align*}

In order to apply the stationary phase method, we need to compute the Hessian at the stationary 
phase points: 

\begin{lem}\label{lem-qm-Hess}
Let $\mathrm{Hess}(y,\x)$ be the Hessian of $-\g_+(x,\x)+\g_-(x,\y)-y\cdot\y$ with respect to 
$(x,\y)$ at the stationary points. Then 
\begin{equation*}
\mathrm{Hess} = (-1)^d \det(\pa_x\pa_\x\g_-(x,\y))\det(\pa_x\pa_\x\g_+(x,\x))
\det(\pa_y\pa_\x\g(y,\x))^{-1}.
\end{equation*}
\end{lem}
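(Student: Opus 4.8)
The plan is to derive the identity from a single matrix equation — obtained by differentiating the stationary-phase equations in the parameters $(y,\x)$ — and then to account for the three determinants on the right by writing each as a Jacobian of a block lower-triangular change of variables. Set $\Phi(x,\y;y,\x)=-\g_+(x,\x)+\g_-(x,\y)-y\cdot\y$, so that $\mathrm{Hess}$ is the determinant of the $(x,\y)$-Hessian of $\Phi$ at the critical point $(x,\y)=(x(y,\x),\y(y,\x))$. The stationary equations are $S_1:=\pa_x\Phi=-\pa_x\g_+(x,\x)+\pa_x\g_-(x,\y)=0$ and $S_2:=\pa_\y\Phi=\pa_\y\g_-(x,\y)-y=0$, and they hold identically in $(y,\x)$ along $(x,\y)=(x(y,\x),\y(y,\x))$. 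Since the $(x,\y)$-Jacobian of $S=(S_1,S_2)$ is exactly $\mathrm{Hess}_{(x,\y)}\Phi$, while its explicit $(y,\x)$-derivative is $\left(\begin{smallmatrix}0 & -\pa_x\pa_\x\g_+\\ -\mathrm{E}&0\end{smallmatrix}\right)$ (only the $-\g_+(x,\x)$ and $-y\cdot\y$ terms contribute), differentiating $S\equiv0$ gives
\[
\mathrm{Hess}_{(x,\y)}\Phi\cdot\frac{\pa(x,\y)}{\pa(y,\x)}=\begin{pmatrix}0 & \pa_x\pa_\x\g_+\\ \mathrm{E}&0\end{pmatrix}.
\]
Taking determinants and using $\det\left(\begin{smallmatrix}0&B\\\mathrm{E}&0\end{smallmatrix}\right)=(-1)^d\det B$ for $d\times d$ blocks, one obtains $\mathrm{Hess}\cdot J=(-1)^d\det(\pa_x\pa_\x\g_+)$, where $J:=\det\bigl(\pa(x,\y)/\pa(y,\x)\bigr)$.

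It remains to show $J=\det(\pa_y\pa_\x\g)/\det(\pa_x\pa_\x\g_-)$. The two determinants on the right are Jacobians of block lower-triangular maps: the map $(x,\y)\mapsto(x,\pa_x\g_-(x,\y))$ has Jacobian determinant $\det(\pa_x\pa_\x\g_-(x,\y))$, and the map $(y,\x)\mapsto(y,\pa_y\g(y,\x))=(y,\y)$ has Jacobian determinant $\det(\pa_y\pa_\x\g(y,\x))$. Moreover $w_-$, being the map generated by $\g_-$ (equivalently, a limit of Hamiltonian flows), is a symplectomorphism, so its Jacobian matrix is symplectic and its Jacobian determinant is $1$. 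Now at a stationary point $\pa_\y\g_-(x,\y)=y$ and $\pa_x\g_-(x,\y)=\pa_x\g_+(x,\x)$, so the composite $(y,\x)\mapsto(x(y,\x),\y(y,\x))\mapsto(x,\pa_x\g_-(x,\y))$ — passage to the ``intermediate'' phase point — coincides with the composite $(y,\x)\mapsto(y,\y)$ followed by $w_-^{-1}$. Comparing Jacobian determinants by the chain rule gives $\det(\pa_x\pa_\x\g_-(x,\y))\cdot J=\det(\pa_y\pa_\x\g(y,\x))$. Substituting into the previous relation yields
\[
\mathrm{Hess}=(-1)^d\det(\pa_x\pa_\x\g_-(x,\y))\det(\pa_x\pa_\x\g_+(x,\x))\det(\pa_y\pa_\x\g(y,\x))^{-1},
\]
as claimed.

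The computation is essentially mechanical once this picture is in place; the two points that require care are keeping track of the constant $(-1)^d$ — produced entirely by the antidiagonal block above — and recognizing that one must route through the intermediate phase point $(x,\pa_x\g_-(x,\y))$, so that the symplecticity of $w_-$ absorbs the factor $\det(\pa_x\pa_\x\g_-)$ that would otherwise be left over. Well-definedness of the maps involved and non-degeneracy of the stationary point are not an issue, since $\g_\pm(x,\x)-x\cdot\x$ and $\g(y,\x)-y\cdot\x$ are of lower order (Lemma~\ref{lem-cl-ipHJ3}), so the mixed Hessians $\pa_x\pa_\x\g_\pm$ and $\pa_y\pa_\x\g$ are all close to $\mathrm{E}$ on the relevant region.
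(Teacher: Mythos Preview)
Your argument is correct and reaches the same formula, but by a genuinely different route from the paper. The paper proceeds by direct block-matrix manipulation: it writes the Hessian as a $2\times 2$ block matrix, performs one column operation to make it block antitriangular, obtaining $\mathrm{Hess}=(-1)^d\det(\pa_x\pa_\y\g_-)\det(\pa_y x)^{-1}$, and then identifies $\det(\pa_y x)$ by differentiating the stationary equations in $y$ alone and combining with $\pa_y\pa_\x\g=(\pa_y x)(\pa_x\pa_\x\g_+)$. Your approach instead differentiates the full stationary system in both parameters $(y,\x)$ at once, which packages the Hessian and the Jacobian $J=\det(\pa(x,\y)/\pa(y,\x))$ into a single matrix identity; you then compute $J$ by recognizing two block-triangular changes of coordinates linked by the classical wave map $w_-$, whose Jacobian determinant is $+1$ by symplecticity. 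This is more conceptual and explains \emph{why} the three mixed Hessians appear: they are exactly the Jacobians of the coordinate changes $(x,\y)\leftrightarrow(x,\pa_x\g_-)$, $(y,\x)\leftrightarrow(y,\pa_y\g)$, and the passage through $w_-^{-1}$. The paper's computation, on the other hand, is entirely self-contained and does not appeal to the symplectic nature of $w_-$ (it only needs the generating-function relation $\pa_\x\g=\pa_\x\g_+(x(y,\x),\x)$), so it is slightly more elementary in its inputs.
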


\begin{proof}
We compute 
\begin{align*}
\mathrm{Hess} 
&= \det((\pa_x,\pa_\y)^2 (-\g_+(x,\x)+\g_-(x,\y)-y\cdot\y))\big|_{x=x(y,\x),\y=\y(y,\x)}\\
&= \det \begin{pmatrix} -\pa_x\pa_x\g_+(x,\x)+\pa_x\pa_x\g_-(x,\y) & \pa_x\pa_\y \g_-(x,\y) \\
\pa_\y\pa_x\g_-(x,\y) & \pa_\y\pa_\y \g_-(x,\y) 
\end{pmatrix}\bigg|_{x=x(y,\x),\y=\y(y,\x)}.
\end{align*}
It is easy to see 
\begin{align*}
&\begin{pmatrix} -\pa_x\pa_x\g_++\pa_x\pa_x\g_- & \pa_x\pa_\y \g_- \\
\pa_\y\pa_x\g_- & \pa_\y\pa_\y \g_-
\end{pmatrix}\begin{pmatrix} E & 0 \\
(\pa_x\pa_\y \g_-)^{-1}(\pa_x\pa_x\g_+-\pa_x\pa_x\g_-) & E
\end{pmatrix}\\
&=\begin{pmatrix} 0 & \pa_x\pa_\y \g_- \\
\pa_\y\pa_x\g_-+(\pa_\y\pa_\y\g_-)(\pa_x\pa_\y \g_-)^{-1}
(\pa_x\pa_x\g_+-\pa_x\pa_x\g_-)  & \pa_\y\pa_\y \g_-
\end{pmatrix},
\end{align*}
and hence 
\[
\mathrm{Hess} 
=(-1)^d \det (\pa_x\pa_\y \g_-)\det\bigpare{
\pa_\y\pa_x\g_-+(\pa_\y\pa_\y\g_-)(\pa_x\pa_\y \g_-)^{-1}
(\pa_x\pa_x\g_+-\pa_x\pa_x\g_-)}.
\]
Now we differentiate the stationary phase equation \eqref{eq-stph} in $y$ to learn 
\begin{align}
&(\pa_y x)\pa_x\pa_x\g_+ = (\pa_y x)\pa_x\pa_x\g_-
+(\pa_y \y)\pa_\y\pa_x \g_-, \label{eq-sp-diff-1}\\
& (\pa_y x)\pa_x\pa_\y\g_-+(\pa_y \y)\pa_\y\pa_\y\g_- = E.\label{eq-sp-diff-2}
\end{align}
From \eqref{eq-sp-diff-1}, we have 
\[
\pa_y\y = (\pa_y x)(\pa_x\pa_x\g_+-\pa_x\pa_x\g_-)(\pa_\y\pa_x\g_-)^{-1}.
\]
Substituting this to \eqref{eq-sp-diff-2}, we have 
\[
(\pa_y x)(\pa_x\pa_\y\g_- + (\pa_x\pa_x\g_+-\pa_x\pa_x\g_-)(\pa_\y\pa_x\g_-)^{-1}\pa_\y\pa_\y\g_-)
=E,
\]
and hence 
\[
(\pa_y x)^{-1} = \pa_x\pa_\y\g_- + (\pa_x\pa_x\g_+-\pa_x\pa_x\g_-)(\pa_\y\pa_x\g_-)^{-1}\pa_\y\pa_\y\g_-,
\]
or 
\[
{}^t(\pa_y x)^{-1} = 
\pa_\y\pa_x\g_-+(\pa_\y\pa_\y\g_-)(\pa_x\pa_\y \g_-)^{-1}
(\pa_x\pa_x\g_+-\pa_x\pa_x\g_-).
\]
Substituting this to the above formula on the Hessian, we learn 
\[
\mathrm{Hess} = \det(\pa_x\pa_\y\g_-(x,\y))\cdot\det(\pa_y x(y,\x))^{-1}, 
\]
where $x=x(y,\x)$, $\y=\y(y,\x)$. If we set
\[
z(y,\x)= (\pa_\x\g_+)(x(y,\x),\x),
\]
then, since $\g$ is the generating function of $w_+\circ w_-^{-1}$, we learn 
\[
\pa_y \pa_\x\g(y,\x)= \pa_y z = (\pa_y x)\cdot(\pa_x\pa_\x\g_+)(x,\x).
\]
Combining these, we conclude the assertion.
\end{proof}

Now we denote $x(y,\x)$ be the stationary point as above, and denote the corresponding momentum 
at $t=0$ by 
\begin{equation}\label{eq-def-zeta}
\z(y,\x) = \pa_x\g_-(x(y,\x),\y(y,\x)) = \pa_x\g_+(x(y,\x),\x).
\end{equation}
We also denote 
\[
\Theta(y,\x)= \Bigabs{\det\Bigpare{\tfrac{\pa^2\g}{\pa y\pa\x}(y,\x)}}^{1/2}.
\]
Then using the stationary phase method and the standard oscillatory integral calculation, 
we have the following expression of $J_+^* G_-$ (see, e.g., Asada-Fujiwara \cite{Asada-Fujiwara} Section~3). 

\begin{lem} \label{lem-JG-formula} There is $Z(x,\x)\in S(\jap{x}^{-1},g)$ such that 
\[
\mathcal{F} J_+^*G_- f(\x) =(2\pi)^{-d/2}\int e^{-i\g(y,\x)}\Theta(y,\x)  Z(y,\x)f(y)dy. 
\]
Moreover, $Z$ is essentially supported in 
\[
\O= \bigset{(y,\x)}{p_0(\x)\in I, \cos(x(y,\x),v(\z(y,\x)))\in [-\b_2,-\b_1]},
\]
i.e., for any $\a,\b\in\ze_+^d$ and $N\geq 0$, 
\[
\bigabs{\pa_y^\a\pa_\x^\b Z(y,\x)}\leq C_{\a\b N}\jap{y}^{-N},
\quad\text{for }(y,\x)\notin \O.
\]
The principal symbol of $Z(y,\x)$ is given by 
\[
Z_0(y,\x)= \overline{a_+(x(y,\x),\x)}g_-(x(y,\x),\pa_y\g(y,\x)),
\]
i.e., $Z-Z_0\in S(\jap{x}^{-2},g)$. 
\end{lem}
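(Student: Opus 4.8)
The plan is to evaluate the two-fold oscillatory integral in the displayed formula for $\mathcal{F}J_+^*G_-f$ by the stationary phase method in the variables $(x,\y)$, regarding $(y,\x)$ with $p_0(\x)\in I$ as parameters; structurally this is the composition calculus for Fourier integral operators with non-homogeneous phases in the sense of Asada--Fujiwara \cite{Asada-Fujiwara}. First I would settle the critical point. For fixed $(y,\x)$ with $|y|\gg 0$ in a small enlargement of the cone $\cos(x,v(\x))\in[-\b_2,-\b_1]$, the phase $\Psi(x,\y)=-\g_+(x,\x)+\g_-(x,\y)-y\cdot\y$ has critical equations \eqref{eq-stph}; along the Lagrangian $x\mapsto(x,\pa_x\g_+(x,\x))$ the backward classical wave map acts, by Lemmas~\ref{lem-cl-wm1} and \ref{lem-cl-wm2}, as $x\mapsto y(x)$ with $y(x)-x\in S(\jap{x}^{1-\m},g)$ and $\pa_x y(x)-\mathrm E\in S(\jap{x}^{-\m},g)$, so $y(x)=y$ is solved for $x$ near $y$ by a contraction argument exactly as in Lemma~\ref{lem-cl-wm3}, yielding $(x(y,\x),\y(y,\x))$. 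Differentiating \eqref{eq-stph} and inserting the estimates of Section~\ref{section-cl} (in particular Lemma~\ref{lem-cl-wm2}) then gives $\pa_y^\a\pa_\x^\b(x(y,\x)-y)\in S(\jap{y}^{1-\m-|\a|},g)$ and $\pa_y^\a\pa_\x^\b(\y(y,\x)-\x)\in S(\jap{y}^{-\m-|\a|},g)$, whence the critical value $\g(y,\x)=-\Psi(x(y,\x),\y(y,\x))$ satisfies $\g(y,\x)-y\cdot\x\in S(\jap{y}^{1-\m},g)$; the relations $\pa_y\g=\y$ and $\pa_\x\g=\pa_\x\g_+(x(y,\x),\x)$ are already verified in the text preceding the lemma.

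Next I would handle the Hessian. By Lemma~\ref{lem-qm-Hess} it equals $(-1)^d\det(\pa_x\pa_\x\g_-)\det(\pa_x\pa_\x\g_+)\det(\pa_y\pa_\x\g)^{-1}$, and since $\pa_x\pa_\x\g_\pm-\mathrm E\in S(\jap{x}^{-\m},g)$ (Lemmas~\ref{lem-cl-wm2}, \ref{lem-ik-Th-solution}) and $\pa_y\pa_\x\g-\mathrm E=O(\jap{y}^{-\m})$, the three determinants are $1+O(\jap{y}^{-\m})$; hence for $|y|\gg 0$ the Hessian is non-degenerate with determinant of the fixed sign $(-1)^d$. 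Deforming $\g_\pm$ linearly to the free phase $x\cdot\x$ keeps all three blocks invertible (hence the Hessian non-degenerate), and for the free phase the Hessian is $\begin{pmatrix} 0 & \mathrm E \\ \mathrm E & 0 \end{pmatrix}$ with signature $0$; so the signature stays $0$ and the stationary-phase phase factor is trivial.

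Then I would run the stationary phase expansion. Off the critical point $\Psi$ is non-stationary on the essential support of the amplitude $\Theta_+(x,\x)\Theta_-(x,\y)\overline{a_+(x,\x)}g_-(x,\y)$, so integration by parts confines the integral to a neighbourhood of $(x(y,\x),\y(y,\x))$ up to a remainder rapidly decreasing in $\jap{y}$; the standard asymptotics in the $S(\cdot,g)$-calculus then give a leading term $(2\pi)^{-d/2}\int e^{-i\g(y,\x)}\,|\det\mathrm{Hess}|^{-1/2}\,\Theta_+(x,\x)\Theta_-(x,\y)\,\overline{a_+(x,\x)}\,g_-(x,\y)\,f(y)\,dy$ evaluated at $x=x(y,\x),\ \y=\y(y,\x)$. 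Using $\Theta_\pm=|\det(\pa_x\pa_\x\g_\pm)|^{1/2}$ (positive on this region, by Lemma~\ref{lem-ik-Th-solution}) together with Lemma~\ref{lem-qm-Hess} one gets $|\det\mathrm{Hess}|^{-1/2}\Theta_+\Theta_-=|\det(\pa_y\pa_\x\g)|^{1/2}=\Theta(y,\x)$, and since $\y(y,\x)=\pa_y\g(y,\x)$ the leading amplitude is exactly $Z_0(y,\x)=\overline{a_+(x(y,\x),\x)}g_-(x(y,\x),\pa_y\g(y,\x))$. Because $a_+,\Theta_\pm\in S(1,g)$ and $g_-\in S(\jap{x}^{-1},g)$ (Lemma~\ref{lem-ik-G-formula}), the amplitude is in $S(\jap{x}^{-1},g)$, so the calculus produces $Z\in S(\jap{x}^{-1},g)$ with all subleading terms in $S(\jap{x}^{-2},g)$, i.e.\ $Z-Z_0\in S(\jap{x}^{-2},g)$. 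Finally, the essential-support claim follows because $\overline{a_+}$ carries the cut-off $\i_+$ and $g_-$ is essentially supported in $\tilde\O_{I_4,-}(\b_1)\setminus\tilde\O_{I_3,-}(\b_2)$ (Lemma~\ref{lem-ik-G-formula}): at the critical point the common momentum is $\z(y,\x)=\pa_x\g_+(x(y,\x),\x)$ (cf.\ \eqref{eq-def-zeta}), the energy is $p_0(\x)\in I$ by the eikonal equation \eqref{eq-eikonal}, and the cone condition becomes $\cos(x(y,\x),v(\z(y,\x)))\in[-\b_2,-\b_1]$; for $(y,\x)\notin\O$ one part of the $x$-integration runs over a set where $a_+g_-$ vanishes to infinite order in $\jap{y}$ and the rest is non-stationary, giving the stated decay of $Z$.

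I expect the first step to be the main obstacle: obtaining uniform, $g$-symbol-class estimates for the stationary point $(x(y,\x),\y(y,\x))$ and the phase $\g$ over the non-compact parameter region requires feeding the global-in-$x$ bounds on $\g_\pm$ from Section~\ref{section-cl} through the implicit function theorem with care, and then keeping track of everything in the H\"ormander classes. A secondary difficulty is to check that the Asada--Fujiwara stationary phase calculus truly applies here, since the phases are non-homogeneous and the amplitude is only essentially (not compactly) supported; this is dealt with by the non-stationary-phase estimates used above.
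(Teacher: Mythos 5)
Your proposal is correct and follows essentially the same route as the paper: the paper establishes the stationary-point equations \eqref{eq-stph}, the generating-function identities for $\g(y,\x)$, and the Hessian factorization of Lemma~\ref{lem-qm-Hess}, and then simply invokes the stationary phase method and the Asada--Fujiwara calculus, which is exactly what you flesh out (including the cancellation $|\det\mathrm{Hess}|^{-1/2}\Theta_+\Theta_-=\Theta$ and the essential-support argument via the cutoffs in $a_+$ and $g_-$). The extra care you flag about uniform symbol-class estimates for $(x(y,\x),\y(y,\x))$ and the signature being $0$ is consistent with, and fills in, what the paper leaves to the cited reference.
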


In order to compute $T(\l)J_+^* G_- T(\l)^*$, we note the following basic property 
of the generating function $\g(y,\x)$, which essentially says $\g(y,\x)$ restricted to 
$\Sigma_\l$ defines a canonical map on $T^*\Sigma_\l$. 
We recall that by the energy conservation, we have 
\[
p_0(\pa_y\g(y,\x)) = p_0(\x), \quad \x\in p_0^{-1}(I).
\]

\begin{lem}\label{lem-psi-invariance}
For $p_0(\x)=\l\in I$, 
\[
\g(y+tv(\pa_y\g(y,\x)),\x) =\g(y,\x), \quad t\in\re. 
\]
\end{lem}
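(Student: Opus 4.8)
The plan is to fix $\x$ with $p_0(\x)=\l\in I$, set $\y_0:=\pa_y\g(y,\x)$, and prove that $h(t):=\g\bigpare{y+tv(\y_0),\x}$ has $h'(t)\equiv 0$. The essential ingredient is the energy-conservation identity $p_0(\pa_y\g(y,\x))=p_0(\x)$ recalled just above the lemma — the analogue, at the level of generating functions, of the fact that the classical scattering map commutes with the free flow.

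First I would show that $\pa_y\g$ is frozen along the translation, i.e. $\pa_y\g(y+tv(\y_0),\x)\equiv\y_0$. Differentiating $p_0(\pa_y\g(y,\x))=p_0(\x)$ in $y$ and using the symmetry of the Hessian gives $\bigpare{\pa_y^2\g(y,\x)}\,v\bigpare{\pa_y\g(y,\x)}=0$. Hence, writing $y(t):=y+tv(\y_0)$ and using $\pa_y\g(y(t),\x)=\y(y(t),\x)$, one obtains $\tfrac{d}{dt}\y(y(t),\x)=\bigpare{\pa_y^2\g(y(t),\x)}\bigpare{v(\y_0)-v(\y(y(t),\x))}$, and since $\pa_y^2\g$ is bounded (this follows from the symbol estimates for $\g$, $\g_\pm$ and the stationary-point functions $x(\cdot,\x)$, $\y(\cdot,\x)$ established in Section~\ref{section-cl} and in the construction preceding Lemma~\ref{lem-JG-formula}) while $v$ is Lipschitz on the bounded set of momenta in which $\y(\cdot,\x)$ ranges, Gronwall's inequality forces $\y(y(t),\x)\equiv\y_0$. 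Then I would write $\g(y',\x)=\g_+(x',\x)-\g_-(x',\y')+y'\cdot\y'$ with $x'=x(y',\x)$, $\y'=\y(y',\x)$ and use the stationary-phase relations $\pa_x\g_+(x',\x)=\pa_x\g_-(x',\y')$ and $\pa_{\y}\g_-(x',\y')=y'$ defining $x(\cdot,\x)$, $\y(\cdot,\x)$: when $h(t)=\g(y(t),\x)$ is differentiated, the $\tfrac{dx'}{dt}$-terms drop out by the first relation and the $\tfrac{d\y'}{dt}$-terms by the second, leaving $h'(t)=\dot y(t)\cdot\y(y(t),\x)=v(\y_0)\cdot\pa_y\g(y(t),\x)$; by the first step this is the constant $v(\y_0)\cdot\y_0$, which vanishes, so $h(t)=h(0)=\g(y,\x)$.

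The main obstacle is the domain bookkeeping needed to run the above for \emph{all} $t\in\re$: one must verify that $y(t)=y+tv(\y_0)$ stays in the region where $\g(\cdot,\x)$ is given by the stationary-phase construction of Lemma~\ref{lem-JG-formula} — equivalently, where $\bigpare{x(y(t),\x),\pa_x\g_+(x(y(t),\x),\x)}\in\O_{I_4}$ and the out-going/in-coming conditions of Section~\ref{section-cl} persist — so that the stationary-phase relations and the energy-conservation identity remain available. This is exactly where the global-in-time classical estimates of Section~\ref{section-cl} are used: since $\g_\pm$ solves the eikonal equation~\eqref{eq-eikonal}, the wave maps $w_\pm$ intertwine the $p$-flow with the free flow $\exp tH_{p_0}$, so translating $y$ along $v(\y_0)$ amounts to running $\exp tH_{p_0}$ on both sides of the classical scattering map, and the free flow preserves the relevant conic neighborhoods; this keeps the construction, and with it the identities used above, valid for every $t$.
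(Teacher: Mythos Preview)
The argument breaks at the very last step: the assertion that $v(\y_0)\cdot\y_0$ vanishes is unjustified, and in the ambient Euclidean coordinates it is generically false. Take the free case $p_0(\x)=\tfrac12|\x|^2$, $V=0$: then $\g(y,\x)=y\cdot\x$, $\y_0=\pa_y\g(y,\x)=\x$, $v(\y_0)=\y_0$, and $v(\y_0)\cdot\y_0=|\x|^2=2\l\neq 0$; indeed $\g(y+t\x,\x)=\g(y,\x)+2\l t$, so the identity of the lemma fails outright in these coordinates. Your chain $h'(t)=\dot y(t)\cdot\pa_y\g(y(t),\x)=v(\y_0)\cdot\y_0$ is correct (and the detour through the stationary-phase relations is just a restatement of the generating-function identity $\pa_y\g=\y$), so the problem is not a computational slip but a missing ingredient.

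What is missing is precisely the coordinate choice the paper makes: one straightens $\Sigma_\l$ so that $p_0(\x)=\l+\x_1$, whence $\Sigma_\l=\{\x_1=0\}$ and $v\equiv e_1$ throughout the chart. In that chart, energy conservation $p_0(\pa_y\g)=\l$ reads $(\pa_y\g)_1\equiv 0$, i.e.\ $\pa_{y_1}\g\equiv 0$, and the invariance is immediate in one line. In particular your Gronwall argument showing $\pa_y\g$ is frozen along the translation, while correct, is superfluous: the paper uses only the pointwise fact $\pa_y\g(y(t),\x)\in\Sigma_\l$ for each $t$, not that it equals $\y_0$. The final ``domain bookkeeping'' paragraph is likewise beside the point --- no global-in-time classical estimates from Section~\ref{section-cl} enter this lemma at all.
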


\begin{proof}
We choose a local coordinate near $\Sigma_\l$ such that 
$p_0(\x)=\l+\x_1$ and hence  
\[
\Sigma_\l =\bigset{(0,\x')}{\x'\in\re^{d-1}}, \quad v(\x)=\pa_\x p_0(\x)=(1,0,\dots,0)
\]
in the neighborhood. We may suppose $\x$ and $\pa_y\g(y,\x)$ are contained in the 
neighborhood, and hence $v(\x)=v(\pa_y\g(y,\x))=(1,0,\dots,0)$. 
We note,  since $\pa_y\g(y,\x)\in \Sigma_\l$, $\pa_{y_1}\g(y,\x) =0$ in this coordinate. 
Thus we have 
\begin{align*}
\pa_t \g(y+tv(\pa_y\g(y,\x)),\x) 
&= v(\pa_y\g(y,\x))\cdot\pa_y\g(y+tv(\pa_y\g(y,\x)),\x) \\
&= \pa_{y_1}\g(y+t(1,0,\dots,0),\x) =0.
\end{align*}
This implies the assertion. 
\end{proof}

In the following, we consider Fourier integral operators defined on $\Sigma_\l$, 
and here we introduce several notations. We usually work in a local coordinate 
in $\Sigma_\l$, and since we are interested in the behavior of operators/symbols 
for large $|x|$, and hence we may suppose $\x$, $\pa_y\g(y,\x)$, $\pa_x\g_\pm(x,\x)$, 
etc., are in the same local coordinate patch. For $\x\in\Sigma_\l$, we identify 
the cotangent space at $\x$: $T^*_\x\Sigma_\l$ with  $v(\x)^\perp$, i.e., 
the orthogonal subspace of the normal vector $v(\x)=\pa_\x p_0(\x)$, as usual. 
We employ the standard metric on $T^*_\x\Sigma_\l$. For $a(x,\x)\in C^\infty(T^*\Sigma_\l)$, 
we write $a\in S(m(x,\x),\tilde g)$ if for any multi-indices $\a,\b\in\ze_+^{d-1}$, 
\[
\bigabs{\pa_x^\a\pa_\x^\b a(x,\x)}\leq C_{\a\b}\jap{x}^{-|\a|}m(x,\x), 
\quad  x\in\re^{d-1}, \x\in \Sigma_\l,
\]
in the local coordinate. We note it is not always natural to consider $x\in T^*_\x\Sigma_\l$ 
in the above expression, since we consider Fourier integral operators, and hence 
$x$ may be better to be considered as an element in another cotangent space. 
In our case, here we consider in a local coordinate patch, and the condition is 
well-defined without ambiguities. By virtue of Lemma~\ref{lem-psi-invariance}, we may define 
\[
\tilde \g(y,\x) =\g(y,\x), \quad \tilde\Theta(y,\x)=\Theta(y,\x)
\]
on $T^*\Sigma_\l$ using the local coordinate, where $y$ should be considered as an element of 
$T^*_\y \Sigma_\l$ with $\y=\pa_y\g(y,\x)$. 

We compute the operator $T(\l)J_+^* G_-T(\l)^*$ using the local coordinate 
in the above proof. Then, as well as in the proof of Lemma~5.4 of \cite{N2016}, 
for $f\in C_0^\infty(\Sigma_\l)$ supported in the neighborhood, we have 
\begin{align}
&T(\l)J_+^* G_-T(\l)^*f(\x') \nonumber \\
&=c_d \iint e^{-i\g(y,(0,\x'))+iy\cdot(0,\y')}\Theta(y,(0,\x'))Z(y,(0,\x'))f(\y')d\y' dy
\nonumber\\
&=c_{d-1} \iint \biggpare{\frac{1}{2\pi}\int_{-\infty}^\infty e^{-i\g((t,y'),(0,\x'))+iy'\cdot\y'}
\times  \\&\qquad \times 
\Theta((t,y'),(0,\x'))Z((t,y'),(0,\x'))dt } f(\y') d\y' dy' \nonumber\\
&= c_{d-1} \iint \biggpare{\frac{1}{2\pi}\int_{-\infty}^\infty e^{-i\tilde\g(y',\x')+iy'\cdot\y'}
\tilde \Theta(y',\x') Z((t,y'),(0,\x'))dt } 
f(\y')d\y' dy'\nonumber \\
&= c_{d-1}\iint \biggpare{\frac{1}{2\pi} \int_{-\infty}^\infty Z((t,y'),(0,\x'))dt } 
e^{-i\tilde\g(y',\x')+iy'\cdot\y'}\tilde \Theta(y',\x')f(\y') d\y' dy', 
\label{eq-TJGT-formula}
\end{align}
where $c_\ell =(2\pi)^{-\ell}$.
Thus we formally observe that 
$T(\l)J_+^* G_-T(\l)^*$ is a Fourier integral operator on $\Sigma_\l$ with the 
phase function $\tilde\g(y',\x')$. 
In other words, we have 
\begin{equation}\label{eq-qm-mainpart}
\begin{split}
T(\l)J_+^* G_-T(\l)^*f(\x)
&= c_{d-1}\iint \biggpare{\frac{1}{2\pi} \int_{-\infty}^\infty Z(y+tv(\pa_y\g(y,\x)), \x)dt} 
\times \\
&\qquad \times 
e^{-i\tilde\g(y,\x)+iy\cdot\y}\tilde \Theta(y,\x)f(\y) d\y dy
\end{split}
\end{equation}
on $T^*\Sigma_\l$. It remains to compute the symbol and thus justify the computation. 

\begin{lem}\label{lem-Z-formula}
\[
\int_{-\infty}^\infty Z(y+tv(\pa_y\g(y,\x)),\x)dt = i +R(y,\x)
\]
on $T^*\Sigma_\l$, where $R\in S(\jap{x}^{-1},\tilde g)$. 
\end{lem}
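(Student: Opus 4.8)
The plan is to replace $Z$ by its principal symbol, to recognize the curve $t\mapsto x\bigl(y+tv(\pa_y\g(y,\x)),\x\bigr)$ as a genuine classical trajectory of $p$, and then to see the resulting integrand as a total $t$-derivative of the cut-off $\i_-$ along that trajectory, so that the integral telescopes. Abbreviate $\y_0=\pa_y\g(y,\x)$ and, for $t\in\re$, $x(t)=x\bigl(y+tv(\y_0),\x\bigr)$; by Lemma~\ref{lem-psi-invariance} (and its proof) $\pa_y\g\bigl(y+tv(\y_0),\x\bigr)=\y_0$ for all $t$, so $Z$ in the integrand is evaluated at points of the form $(x(t),\y_0)$.

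The first step is the reduction to the principal symbol. By Lemma~\ref{lem-JG-formula}, $Z-Z_0\in S(\jap{x}^{-2},g)$ and $Z_0(y,\x)=\overline{a_+(x(y,\x),\x)}\,g_-\bigl(x(y,\x),\y_0\bigr)$; by Lemma~\ref{lem-ik-G-formula} the principal symbol of $g_-(x,\x')$ is $-i\,(\pa_\x p)(x,\pa_x\g_-(x,\x'))\cdot(\pa_x\i_-)(x,\x')$; and $a_+-1\in S(\jap{x}^{-1-\m},g)$. Expanding $\overline{a_+}\,g_-$ accordingly, $Z\bigl(y+tv(\y_0),\x\bigr)$ equals
\[
-i\,(\pa_\x p)\bigl(x(t),\pa_x\g_-(x(t),\y_0)\bigr)\cdot(\pa_x\i_-)(x(t),\y_0)
\]
plus a sum of symbols that are $O(\jap{x}^{-2})$ (from $Z-Z_0$ and from the remainder of $g_-$) or $O(\jap{x}^{-2-\m})$ (from $(\overline{a_+}-1)g_-$), all essentially supported in $\O$. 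Now $Z$ is essentially supported where $\cos\bigl(x(t),v(\pa_x\g_-(x(t),\y_0))\bigr)\in[-\b_2,-\b_1]$, which, since this angle is bounded away from $\pm1$, holds only for $t$ in an interval of length $O(\jap{y})$ on which the nontrapping and impact-parameter estimates of Section~\ref{section-cl} give $|x(t)|\gtrsim\jap{y}$; hence the discarded terms integrate over $t$ to symbols that are $O(\jap{y}^{-1})$, with the accompanying $(y,\x)$-derivative bounds coming from Lemmas~\ref{lem-cl-wm1} and~\ref{lem-cl-ipHJ3} via the chain rule. This also gives absolute convergence of the $t$-integral.

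The heart of the matter is the identification of $t\mapsto x(t)$ with the classical trajectory $\bigl(z(s),\zeta(s)\bigr)=\exp sH_p\bigl(x(y,\x),\z(y,\x)\bigr)$, where $\z(y,\x)$ is as in \eqref{eq-def-zeta}. By construction of $\g_\pm$ this trajectory satisfies $\zeta(s)\to\y_0$ as $s\to-\infty$ and $\zeta(s)\to\x$ as $s\to+\infty$, while the interaction-picture analysis (Lemmas~\ref{lem-cl-HJ2} and~\ref{lem-cl-iptr1}, together with $\pa_\y\g_-(x(y,\x),\y_0)=y$ from \eqref{eq-stph}) gives $z(s)=y+sv(\y_0)+O(\jap{s}^{1-\m})$ as $s\to-\infty$. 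Therefore the time-translate $s\mapsto\bigl(z(s+t),\zeta(s+t)\bigr)$ has incoming data $\bigl(y+tv(\y_0),\y_0\bigr)$ and outgoing momentum $\x$; since the wave maps $w_\pm$ are local diffeomorphisms, uniqueness of the solution of the stationary-phase equations \eqref{eq-stph} forces $x(t)=z(t)$ and $\pa_x\g_-(z(t),\y_0)=\zeta(t)$. Consequently $\tfrac{d}{dt}x(t)=\dot z(t)=(\pa_\x p)(z(t),\zeta(t))$, while the second argument $\y_0$ of $\i_-$ stays fixed, so the displayed term above equals $-i\,\tfrac{d}{dt}\bigl[\i_-(x(t),\y_0)\bigr]$ and
\[
\int_{-\infty}^{\infty}(-i)\,\frac{d}{dt}\bigl[\i_-(x(t),\y_0)\bigr]\,dt
=-i\,\bigl[\i_-(x(t),\y_0)\bigr]_{t=-\infty}^{t=+\infty}.
\]
As $t\to+\infty$ the trajectory is forward-asymptotic with momentum $\x$, so $z(t)/|z(t)|\to v(\x)/|v(\x)|$ and $\zeta(t)\to\x$, hence $-\cos(z(t),v(\zeta(t)))\to-1$; since $p_0(\y_0)=\l\in I_3$ and $|z(t)|\to\infty$, this gives $\i_-(x(t),\y_0)\to\i_{3,-}(-1)=0$. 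As $t\to-\infty$ the momentum still tends to $\y_0$ but $z(t)$ points along $-v(\y_0)$, so $-\cos\to1$ and $\i_-(x(t),\y_0)\to\i_{3,-}(1)=1$. The telescoped value is thus $-i(0-1)=i$, and combining with the first step yields $\int_{-\infty}^{\infty}Z\bigl(y+tv(\y_0),\x\bigr)\,dt=i+R$ with $R\in S(\jap{x}^{-1},\tilde g)$.

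I expect the main obstacle to be the estimate on $R$: one must check the full family of $(y,\x)$-derivative bounds with weight $\jap{x}^{-1}$, uniformly over the region where the lemma is claimed, and justify differentiation under the integral sign. This requires combining the trajectory-decay estimates with the derivative estimates for the wave maps (Lemma~\ref{lem-cl-wm1}) and for the generating functions $\g_\pm$ (Lemma~\ref{lem-cl-ipHJ3}) through the chain rule, and, crucially, controlling the $O(\jap{y})$ width of the effective $t$-support uniformly in $(y,\x)$ — precisely the point where the angular condition being bounded away from $\pm1$, hence the impact-parameter lower bound $|x(t)|\gtrsim\jap{y}$ on that support, is indispensable.
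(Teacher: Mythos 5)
Your proposal is correct and follows essentially the same route as the paper: reduce $Z$ to its principal symbol, identify $t\mapsto x(y+tv(\pa_y\g(y,\x)),\x)$ with the Hamiltonian trajectory through the stationary point (your asymptotic-matching/uniqueness argument is just a hands-on derivation of the intertwining identity $w_-\circ\exp tH_p=\exp tH_{p_0}\circ w_-$ that the paper invokes directly), and integrate the resulting total $t$-derivative of $\i_-$ to get the boundary value $i$. The only real divergence is the remainder bound: the paper simply uses that $y\perp v(\pa_y\g(y,\x))$ under the identification $T^*_\y\Sigma_\l\cong v(\y)^\perp$, so $|y+tv(\pa_y\g(y,\x))|\gtrsim |y|+|t|$ and the $O(\jap{\cdot}^{-2})$ terms integrate over all $t\in\re$ directly to $O(\jap{y}^{-1})$ (with derivatives handled the same way), which is cleaner than your support-width argument and also takes care of the rapidly decaying tails outside the essential support and of the uniformity issues you flag at the end, without needing the impact-parameter estimates.
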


\begin{proof}
We fix $(y_0,\x_0)$, and let 
\[
z_0=x(y_0,\x_0), \quad \z_0=\z(y_0,\x_0)
\]
be the stationary phase points as in \eqref{eq-def-zeta}. 
We also write $\y_0=\pa_y\g(y_0,\x_0)$. 
Then by the construction, we observe 
\[
w_-(z_0,\z_0)= (y_0,\y_0), \quad \text{or equivalently,}\quad  (z_0,\z_0)=w_-^{-1}(y_0,\y_0).
\]
We note 
\[
\exp tH_{p_0} (y_0,\y_0) = (y_0+t v(\y_0), \y_0), 
\]
and combining this with the intertwining property: 
\[
\exp tH_p\circ w_-^{-1} = w_-^{-1}\circ \exp tH_{p_0}, 
\]
we learn 
\[
w_-^{-1}(y_0+tv(\y_0),\y_0) =\exp tH_p (z_0,\z_0),
\]
and hence 
\[
(x(y_0+tv(\y_0),\x_0),\z(y_0+tv(\y_0),\x_0)  = \exp tH_p (z_0,\z_0). 
\]
We denote
$(z(t),\z(t))=\exp tH_p(z_0,\z_0)$ as in the last section.
We note, by Lemmas~\ref{lem-ik-G-formula} and \ref{lem-JG-formula}, 
the principal symbol of $Z(y,\x)$ is given by
\[
Z_{00}(y,\x)= -i(\pa_\x p)(x,\pa_x\g_-(x,\y))\cdot\pa_x\i_-(x,\y). 
\]
with $x=x(y,\x)$ and $\y=\pa_y\g(y,\x)$. Hence we have 
\[
Z_{00}(y_0+tv(\y_0),\x_0) = -i(\pa_\x p)(z(t),\z(t))\cdot(\pa_x\i_-)(z(t),\y_0), 
\quad t\in\re.
\]
By the Hamilton equation, we note $(\pa_\x p)(z(t),\z(t))=\frac{d}{dt} z(t)$, and hence 
\[
Z_{00}(y_0+tv(\y_0),\x_0) = -i\frac{d}{dt}(\i_-(z(t),\y_0)).
\]
Since $\lim_{t\to\infty}\i_-(z(t),\y_0)=0$ and $\lim_{t\to-\infty}\i_-(z(t),\y_0)=1$, 
we have 
\begin{align*}
\int_{-\infty}^\infty Z_{00}(y_0+tv(\y_0),\x_0)dt &= -i \int_{-\infty}^\infty \frac{d}{dt}\i_-(z(t),\y_0)dt\\
&= -i\Bigpare{\lim_{t\to\infty} \i_-(z(t),\y_0)-\lim_{t\to-\infty} \i_-(z(t),\y_0)} = i.
\end{align*}

Now it remains to estimate the contribution from the lower order term: 
$R(y,\x)= Z(y,\x)-Z_{00}(y,\x)\in S(\jap{x}^{-2},g)$. 

As usual, we identify $T_\x^*\Sigma_\l$ with $v(\x)^\perp$, the orthogonal subspace 
of the normal vector $v(\x)$ at $x\in \Sigma_\l$. Then $y_0\perp \y_0$ and hence 
\[
|y_0+t\y_0|=(|y_0|^2+t^2 |v(\y_0)|^2)^{1/2}\geq c_0(|y_0|+t|v(\y_0)|), 
\]
where $c_0=1/\sqrt{2}$. Thus we have 
\[
|R(y_0,\x_0)|\leq C \int_{-\infty}^\infty (1+|y_0|+t|v(\y_0)|)^{-2}dt \leq C' \jap{y_0}^{-1}, 
\]
where $\x_0\in \Sigma_\l$, $y_0\in T^*_{\y_0}\Sigma_\l$. 
Similarly, we can show, for any $\a,\b\in\ze_+^{d-1}$, 
\[
\bigabs{\pa_y^\a\pa_\x^\b R(y,\x)}\leq C_{\a\b}\jap{y}^{-1-|\a|}, 
\]
which completes the proof. 
\end{proof}

Thus we learn, combining the lemma with \eqref{eq-qm-mainpart}, 
\[
T(\l)J_+^* G_-T(\l)^*f(\x)= \frac{c_{d-1}}{2\pi} \iint (i+R(y,\y))
e^{-i\tilde\g(y,\x)+iy'\cdot\y}\tilde \Theta(y,\x)f(\y) d\y dy
\]
with $R\in S(\jap{x}^{-1},\tilde g)$. Substituting this to the representation formula, 
\eqref{eq-s-matrix-representation}, we obtain  
\[
S(\l)f(\x)=  c_{d-1}\iint (1-iR(y,\y))
e^{-i\tilde\g(y,\x)+iy'\cdot\y}\tilde\Theta(y,\x)f(\y) d\y dy. 
\]
This complete the proof of Theorem~\ref{thm-main}. \qed 

\appendix


\section{Representation formula of the scattering matrix}
\label{appendix-smatrix-formula}

In this appendix, we sketch the proof of \eqref{eq-s-matrix-representation}. 
We suppose $f, g\in\mathcal{S}(\re^d)$ such that $\hat f,\hat g\in C_0^\infty(p_0^{-1}(I))$, and 
we write $f(\l)= T(\l)f$, $g(\l)=T(\l)g$, $\l\in\ I$. We first note, by the standard Cook-Kuroda method, we have
\begin{equation}\label{eq-app-smatrix-1}
W_\pm^I f = J_\pm f +i\int_0^{\pm\infty} e^{itH} G_\pm e^{-itH_0} f dt.
\end{equation}
We also note, by the construction of $J_\pm$, 
\[
\bignorm{J_\pm e^{-itH_0} f }\to 0, \quad \text{as }t\to\mp\infty,
\]
and hence 
\begin{align}
W_\pm^I f &= \lim_{t\to\pm\infty} \Bigpare{e^{itH} J_\pm e^{-itH_0} f
-e^{-itH} J_\pm e^{itH_0} f}  \nonumber\\
&= \pm i \int_{-\infty}^\infty e^{itH} G_\pm e^{-itH_0} f dt.
\label{eq-app-smatrix-2}
\end{align}
Using \eqref{eq-app-smatrix-2}, we compute
\begin{align*}
\bigjap{f, S^I g} &= \bigjap{W_+^I f, W_-^I g} \\
&= -i \int_{-\infty}^\infty \bigjap{W_+^I f, e^{itH} G_- e^{-itH_0} g} dt\\
&= -i \int_{-\infty}^\infty \bigjap{W_+^I e^{-itH_0}f, G_- e^{-itH_0} g} dt. 
\end{align*}
In the last line, we have used the intertwining property. 
Then we substitute \eqref{eq-app-smatrix-1} to learn 
\begin{align*}
\bigjap{f, S^I g} &= -i \int_{-\infty}^\infty \bigjap{J_+ e^{-itH_0}f, G_- e^{-itH_0} g} dt \\
&\qquad - \int_0^\infty  \int_{-\infty}^\infty \bigjap{e^{isH} G_+ e^{-i(s+t)H_0} f, 
G_- e^{-itH_0} g} dt\, ds.
\end{align*}
Now we use the spectral representation:
\[
f= \int_I T(\l)^* f(\l)d\l, \quad g=\int_I T(\s)^* g(\s)d\s
\]
to obtain (at least formally)
\begin{align*}
&\bigjap{f, S^I g} = -i \int_{-\infty}^\infty dt  \int_I d\l \int_I d\s 
\bigjap{J_+ e^{-itH_0}T(\l)^*f(\l), G_- e^{-itH_0} T(\s)^*g(\s)} \\
&- \int_0^\infty ds \int_{-\infty}^\infty dt \int_I d\l \int_I d\s 
\bigjap{e^{isH} G_+ e^{-i(s+t)H_0} T(\l)^* f(\l), G_- e^{-itH_0} T(\s)^* g(\s)} \\
&= -i \int_{-\infty}^\infty dt  \int_I d\l \int_I d\s 
e^{it(\l-\s)} \bigjap{J_+ T(\l)^*f(\l), G_-  T(\s)^*g(\s)} \\
& - \int_0^\infty ds \int_{-\infty}^\infty dt \int_I d\l \int_I d\s 
e^{it(\l-\s)} \bigjap{e^{is(H-\l)} G_+ T(\l)^* f(\l), G_- T(\s)^* g(\s)}.
\end{align*}
Here we note that 
\[
\mathrm{WF}(\mathcal{F} T^*(\l)f(\l))
\subset \bigset{(\x,x)}{\x\in\Sigma_\l, x\perp \Sigma_\l =\re v(\x)},
\]
and the essential support of the amplitudes of $G_\pm$ are disjoint from it. 
Hence 
\[
G_+T(\l)^*f(\l), G_-(\s)T(\s)^*g(\s)\in \mathcal{S}(\re^d),
\]
and 
these integrants are well-defined, smooth in the parameters. Thus we can change the order of 
integration, and using the formula
\[
\int_{-\infty}^\infty e^{it(\l-\s)}dt = \d(\l-\s)
\]
in the distribution sense, we learn 
\begin{align*}
\bigjap{f, S^I g} &= -2\pi i  \int_I d\l 
\bigjap{ T(\l)^*f(\l), J_+^* G_-  T(\l)^*g(\l)} \\
&\quad  - 2\pi \int_0^\infty ds\int_I d\l 
 \bigjap{ T(\l)^* f(\l), G_+^* e^{-is(H-\l)} G_- T(\l)^* g(\l)}.
\end{align*}
By the microlocal resolvent estimate \cite{N2016}, we learn that 
\[
\int_0^\infty G_+^* e^{-is(H-\l)} G_- ds = -i G_+^* (H-\l-i0)^{-1} G_- 
\]
makes sense, and we conclude 
\[
\bigjap{f, S^I g} = -2\pi i  \int_I d\l 
\bigjap{ T(\l)^*f(\l), \bigpare{J_+^* G_- -G_+^*(H-\l-i0)^{-1}G_-} T(\l)^*g(\l)}.
\]
This implies \eqref{eq-s-matrix-representation}. \qed


\end{document}